\documentclass[12pt]{article}

\usepackage{graphicx}
\usepackage{enumerate}
\usepackage{natbib}
\usepackage{url} 
\usepackage{amsfonts,amsmath,amsbsy,amsthm,amssymb,mathrsfs}
\usepackage{cases} 
\usepackage{anysize,indentfirst,geometry,xcolor}
\usepackage{authblk}

\usepackage{bm}
\usepackage{bigints} 
\usepackage{tikz}
\usepackage{caption}
\usepackage{subcaption}
\usepackage{multicol} 
\usepackage{multirow}
\usepackage{setspace}

\usepackage[figuresright]{rotating}

\usepackage{booktabs}

\usepackage{pdfpages}
\usepackage{lscape}

\def\spacingset#1{\renewcommand{\baselinestretch}%
{#1}\small\normalsize} \spacingset{1}

\usepackage[linesnumbered,ruled,vlined]{algorithm2e}
\SetKwInput{KwInput}{Input}                
\SetKwInput{KwOutput}{Output}  

\theoremstyle{plain}
\newtheorem{thm}{Theorem}
\newtheorem{lem}{Lemma}

\newtheorem{cond}{Condition}

\theoremstyle{remark}

\newcommand\blue[1]{{\color{black}#1}}

\def\tildeT{\widetilde{T}}
\def\tildedelta{\widetilde{\delta}}

\def\hatalpha{\widehat{\alpha}}
\def\Pr{\operatorname{Pr}}
\def\calP{\mathcal{P}}

\title{Learning association from multiple intermediate events for dynamic prediction of survival: an application to  cardiovascular disease prognosis 
}
\author{Tonghui Yu$^1$, Liming Xiang$^2$\thanks{Corresponding author. Email: LMXiang@ntu.edu.sg} \hspace{.2cm}\\ 
$^1$ International Center for Interdisciplinary Statistics, School of Mathematics\\ Harbin Institute of Technology, Harbin, China\\
$^2$ School of Physical and Mathematical Sciences\\ Nanyang Technological University, Singapore}
\date{}

\begin{document}
\maketitle

\begin{abstract}
Cardiovascular diseases are major causes of mortality globally. They often co-occur and are interrelated, leading to partial-order relationships among their onset times. However, these onset times are subject to informative censoring due to the occurrence of death, posing significant challenges for survival prediction. In this article, we propose a novel copula-based framework that learns dependence among multiple correlated marginal components through a pseudo-likelihood for estimation. We adopt nonparametric marginals, alleviating the reliance on marginal distribution assumptions typically required in conventional copula models, and estimate the association between the onsets of intermediate cardiovascular diseases and death by solving a concordance estimating equation. Under this framework, a renewable risk assessment method is developed for dynamic survival prediction, leveraging information on disease onset times and the maximum follow-up duration. Our proposed method yields estimators with well-established properties, and its flexibility and predictive effectiveness are demonstrated through extensive simulation studies. We apply the method to data from a heart disease study, demonstrating the benefits of incorporating the associations among various cardiovascular diseases and their synergistic effects on mortality for dynamic prediction of overall survival.
\end{abstract}

\textbf{Key words:}
Copula; Informative censoring; Maximum pseudo-likelihood estimation;  
Nonparametric marginal distributions.

\doublespacing
\section{Introduction}
\label{sec:intro}

\subsection{Motivation}
Cardiovascular diseases including hypertension (HYP), angina pectoris (AP), coronary heart disease (CHD), hospitalized myocardial infarction (MI), fatal coronary heart disease (MIFC), fatal cerebrovascular disease (CVD) and stroke (STRK), are the leading causes of death worldwide as reported in the annual publications of the World Health Organization. 
These different forms of cardiovascular diseases 
often occur synergistically, exacerbate one another and are all correlated with death \citep{georges2021genetic}. 

Our work is motivated by data from the Framingham Heart Study, a longitudinal and community-based study of cardiovascular epidemiology \citep{andersson2019}, where 
patients may experience multiple intermediate cardiovascular diseases before 
death, and the times to these intermediate events and to death are possibly correlated one and another.
Analyzing and modeling the association patterns for these diseases, in addition to the their impacts on overall survival, is of primary 
interest in practice. Advances in cardiovascular disease research suggest the heterogeneous effects of these conditions on patient mortality. For example, modern diagnostic and classification techniques aided with artificial intelligence have been developed to better understand and differentiate these diseases \citep{wang2024screening}; therapeutic and genomic research have identified genetic risk factors and therapeutic targets for specific cardiovascular diseases \citep{
schiano2015epigenetic}. However, 
little is known about the pathogenic mechanisms, the relationships among different forms of cardiovascular diseases and the extent of their contributions to mortality \citep{
chia2024current}.  

In view that incorporating individual histories of cardiovascular diseases can markedly improve the accuracy of death time predictions, 
\cite{zhang2024ai} recently made a successful attempt by  
developing an AI-based online system for survival risk monitoring. However, their model only incorporated the occurrence indicators of multiple cardiovascular diseases, excluding their onset times and leading to information loss.

In this paper, the endpoints of our interest include the time to death (DTH, terminal event), and the times to various cardiovascular diseases 
(intermediate events). Such a data structure is encountered in many clinical studies particularly when patients are monitored over time for the development of multiple health outcomes \citep{
rueda2019dynamics}. The configurations are akin to the semi-competing risks setting \citep{fine2001semi}, wherein the occurrence of the terminal event censors the intermediate events but not vice versa. Unlike the classical semicompeting risks setting considering only one intermediate event, patients in many clinical studies may exhibit diverse pathogenetic pathways involving multiple intermediate diseases. These complexities pose significant theoretical and computational challenges, limiting the applicability of the standard semi-competing risks models. 

The focus of our work is on survival prediction with multiple dependent intermediate events, which are subject to random censoring or informative censoring due to the occurrence of the terminal event. To this end, we propose a novel two-stage procedure, where we first reveal the underlying complex association structure among the intermediate and terminal events and construct a joint distribution of these event times shown in the right panel of Figure \ref{fig:workflow} for individual-specific predictions in subsequent analyses, and then followed by dynamic prediction of overall survival as illustrated in the lower left panel of Figure \ref{fig:workflow}. 
\begin{figure}[t]
    \centering
    \includegraphics[width=0.85\textwidth,trim=55 50 30 10,clip]{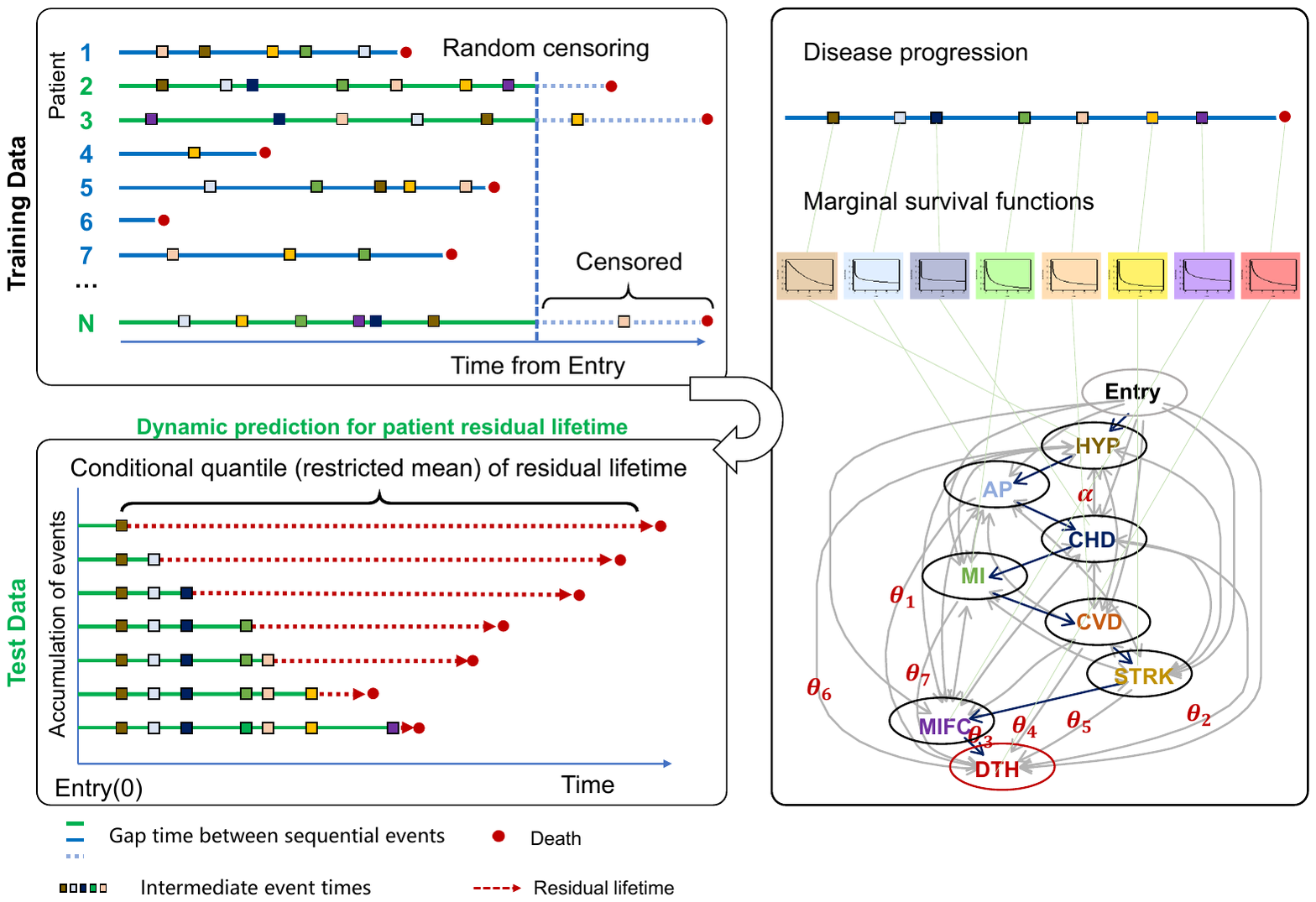}
    \caption{A flowchart of the study purpose. Event colors are aligned across the training data (upper left), test data (lower left), and marginal survival functions (right panel), with labels shown in the directed graph using the same color scheme. The association parameter $\theta_k$ ($k=1,\cdots,7$) quantifies relationship between each cardiovascular disease with death, and $\alpha$ will be defined in Section 2. }
    \label{fig:workflow}
\end{figure}

\subsection{Related works, limitations and challenges}

Although association analysis of bivariate survival data with a single intermediate event time subject to dependent censoring caused by the terminal event have been widely practiced in semicompeting risks settings \citep{fine2001semi}, 
its application to data with multiple intermediate events is rather limited so far. \cite{peng2019joint} proposed a time-dependent association measure for two intermediate events and assumed that they have different associations with the terminal event.  
\cite{li2023evaluating} proposed an iterative algorithm for estimating the association parameters between the two intermediate events, while intermediate events were assumed to share the same association with the terminal event.
It is worth to note that both works overlooked the issue that the association structure of the intermediate event times should be limited within the observable region, where the intermediate event times are always no greater than the terminal event time. In the presence of multiple intermediate events, \cite{li2020multiple} made some attempts based on the stringent assumption that all intermediate and terminal event times share the same association.

The dynamic survival prediction based on patients' past medical history is critical for guiding therapeutic interventions in clinical practice, and has attracted much attention from statisticians.  Recent literature and methodological advances include contributions by 
\cite{
rhodes2023dynamic} and \cite{zhang2024dynamic} 
with longitudinal/time-dependent covariates, as well as by 
\cite{liang2023tackling}, 
who addressed multiple intermediate events as recurrent events with emphasis on the relationship between the terminal event time and gap times between two consecutive events, without accounting for the stochastic nature of these events.
In practice, the correlation with multiple intermediate events often changes over their occurrence times and
individual's genetic or environmental features, 
and thus affects individual's overall survival  
\citep{burzykowski2005evaluation}.  
A dynamic terminal prediction framework integrating the onset information of patients' multiple intermediate events is essential. 
 
Unlike multi-state models focused on the transition from one state to another state \citep{
rueda2019dynamics}, 
providing limited information about the dependency and marginals of both events \citep{peng2008overview}, we aim to integrate the information from medical history into a joint model framework for prediction.
It is noted that inclusion of multiple intermediate events in the study makes learning of association particularly challenging due to complex dependence between intermediate events and the terminal event. The complexity may stem from the following characteristics of the data considered. 
Firstly, the times to intermediate events often exhibit skewness and are subject to both independent and dependent censoring. Secondly, the 
occurrences of intermediate events varies across patients as demonstrated in 
the Framingham heart study, 
with complex transitions between different 
intermediate states and underlying disease progression. 
Thirdly, 
patients may experience some of these intermediate events only rather than 
all of them before death, resulting in incomplete and imbalanced observations of the intermediate event times.
Ignoring the dependencies among intermediate event times and/or their synergistic effects on overall survival could lead to inaccurate overall survival prediction. 
This complexity may lead to biased estimation in multi-state models, especially when some transitions are rare in a limited sample.

To tackle these issues, we exploit an Archimedean copula-based model framework tailored for jointly analysis of multivariate intermediate event times 
subject to dependent censoring due to the terminal event death.   
Despite the wealthy literature of copula models for 
ordered bivariate survival data \citep[e.g.,][]
{peng2007semicompeting,lakhal2008,
yu2025exploring} and multivariate survival data \citep{
prenen2017extending,li2018varying}, which typically assume independent censoring, 
these methods are rarely applicable to settings involving 
multiple intermediate events subject to 
dependent censoring as considered in this study.  
Based on the joint modeling framework, we then develop a nonparametric method for estimating marginal components, along with a pseudo-likelihood estimator for the association parameter among intermediate events. Our proposed method enables to leverage the onset information of multiple intermediate events for dynamic prediction of overall survival subsequently.

The rest of the article is organized as follows.
Section \ref{sec:model} describes the proposed copula-based model framework.  
Section \ref{sec:estimation} develops the estimation method, and 
Section \ref{sec:prediction} details a procedure for dynamic prediction of overall survival.  
Section \ref{sec:simulation} reports simulation results and an analysis of the motivating data. 
The paper concludes with a discussion in Section \ref{sec:conclusion}. 
Due to the page limit, a thorough simulation study in comparison with the other competing algorithms, and rigorous theoretical proofs for the asymptotic properties 
of the proposed estimator, are included in the Supplementary Materials.

\vspace{-10pt}
\section{Model and notations}
\label{sec:model}

Let $\tildeT_1,\cdots,\tildeT_K$ be the times to diagnosis of $K$ cardiovascular diseases (intermediate events), $K\geq 2$, 
$D$ be death time, and $C$ be censoring time. 
It is assumed that $(\tildeT_1,\cdots,\tildeT_K, D)$ is independent of $C$. The sample data consists of $n$ individuals with potential event times and censoring time $\{\tildeT_{i1},\cdots,\tildeT_{iK},D_i,C_i:i=1,\cdots,n\}$, which are $n$ 
realizations of $(\tildeT_1,\cdots,\tildeT_K,D,C)$.
Due to the right censoring,  $\tildeT_k$ and $D$ are observed as $(T_k,Y,\delta_k,\tildedelta)$, $k=1,\cdots,K$. Here, $T_k =\min( \tildeT_k, D,C)$ and $Y=\min(D, C)$ are the observed $k$th intermediate event time and terminal event time, respectively, $\delta_k=I(\tildeT_k\leq \min(D, C))$ represents the indicator variable for the $k$th intermediate event, and $\tildedelta= I(Y\leq C)$ represents the death status. Thus, the observed data takes the form of $\mathcal{O}=\{T_{i1},\cdots,T_{iK}, \delta_{i1},\cdots,\delta_{iK}, Y_i,\tildedelta_i:i=1,\cdots,n\}$. 
\begin{figure}[htbp]
    \centering
    \includegraphics[width=0.3\textwidth,trim=55 20 50 20,clip]{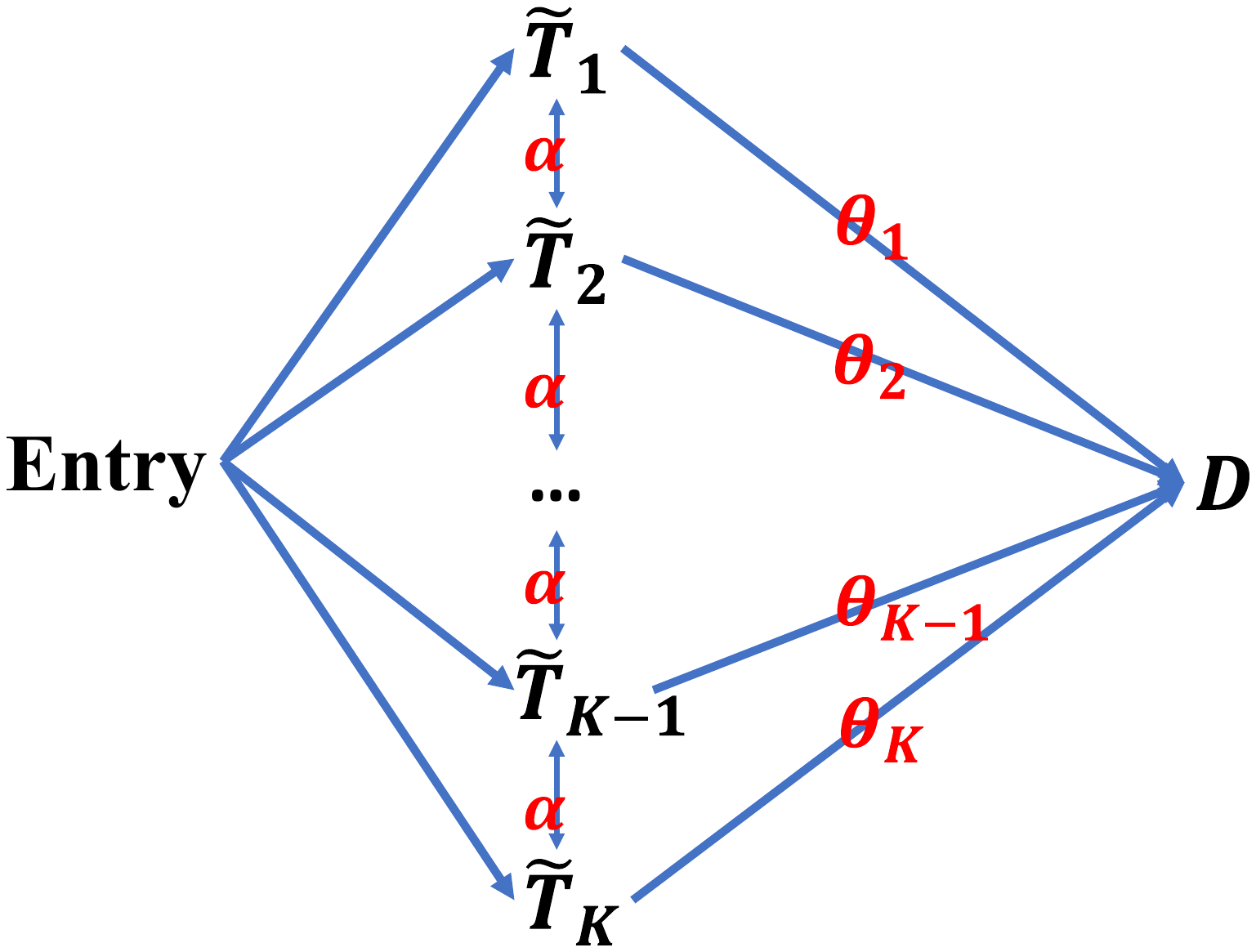}
    \caption{Data structure and association between any two of all event times.  $\tildeT_1,\cdots,\tildeT_K$ represent the potential times to intermediate events from entry, and $D$ represents the terminal event time. The association parameter $\theta_k$ ($k=1,\cdots,K$), defined in model (1), quantifies relationship between each intermediate event with terminal event. $\alpha$ defined in model (3) quantifies the conditional dependence among all intermediate events given death. }
    \label{fig:notation}
\end{figure}

To learn association among $(\tildeT_1,\cdots,\tildeT_K,D)$ from the observed data $\mathcal{O}$ and conduct survival prediction for a new subject, 
we consider a model framework comprising two layers 
of submodels: bivariate models that examine the relationship between each intermediate event time and death time, and a multivariate joint model that integrates all intermediate event times, as illustrated in Figure \ref{fig:notation}.  
In the first layer 
of the model, the dependence structure between bivariate event times is formulated with an Archimedean copula within the observable region. The Archimedean copula is used due to its flexibility in capturing variety of dependence structures and its simple form 
\citep{nelsen2006introduction}. 
Specifically, the joint survival function of the $k$th intermediate event time $\tildeT_k$ and $D$ satisfies 
\begin{equation}\label{model:copula_terminal}
\begin{split}
S_{k,D}(t_k,t) &= \Pr (\tildeT_k> t_k, D> t) = H\{S_k(t_k),S_{D}(t);\theta_k\},0\leq t_k\leq t,
\end{split}
\end{equation} 
and $ H$ is a continuous mapping defined on the unit plane $[0,1]^2$ 
possessing an Archimedean copula such that 
$ H\{S_k(t_k),S_{D}(t);\theta_k\}=\psi_{\theta_k}\left\{\phi_{\theta_k}(S_k(t_k))+\phi_{\theta_k}(S_{D}(t))\right\},$  
where $S_k(t)$ is the marginal survival function of $\tildeT_k$, $S_{D}$ is the marginal survival function of $D$, $\theta_k$ is an unknown association parameter, $k=1,\cdots,K$, and $\psi_{\theta_k} = \phi_{\theta_k}^{-1}$ is the inverse function of $\phi_{\theta_k}$.  In the Archimedean family, the generator function $\phi_{\theta_k}$ maps from $[0,1]$ to $[0,\infty)$, and is continuous and strictly decreasing from $\phi_{\theta_k}(0)>0$ to $\phi_{\theta_k}(1)=0$.

In the second layer, 
we establish the joint survival function of $\tildeT_1,\cdots,\tildeT_K$ and $D$ using  
a $K-$dimensional copula, denoted by $C\{S_{1,D},\cdots,S_{K,D}\}$, 
where $S_{k,D}=S_{k,D}(t_k,t) $ is a bivariate copula of $\tildeT_k$ and $D$ defined in \eqref{model:copula_terminal}. This construction method has been studied 
by \cite{joe1997multivariate} and 
\cite{nelsen2006introduction} in the context of a compatible joint distribution given univariate 
and bivariate marginal distributions. 
On the observable region of the data, the joint survival function of $\tildeT_1,\cdots,\tildeT_K$ and $D$ 
can then be written as
\begin{equation}\label{model:copula_nonterminal}
\begin{split}
&\Pr (\tildeT_1> t_1,\cdots,\tildeT_K> t_K, D>t) 
= -\int_t^{\infty}\Pr (\tildeT_1> t_1,\cdots,\tildeT_K> t_K|D=y) dS_D(y),
\end{split}
\end{equation} 
for $0\leq t_1,\cdots, t_K\leq t$, where the conditional dependence structure among multiple intermediate event times, namely, $\Pr (\tildeT_1> t_1,\cdots,\tildeT_K> t_K|D=y)$, is assumed to be exchangeable 
with a global association parameter $\alpha$. That is, 
\begin{equation}\label{model:copula_nonterminal2}
\begin{split}
&\Pr (\tildeT_1> t_1,\cdots,\tildeT_K> t_K|D=y) \\
&= C\{G_1(t_1;y),\cdots,G_K(t_K;y);\alpha\} 
=\psi_{\alpha}\left\{\sum\limits_{k=1}^K\phi_{\alpha}(G_k(t_k;y))\right\}, t_1,\cdots,t_K\leq y,
\end{split}
\end{equation} 
and $G_k(t_k;y) = \Pr(\tildeT_k> t_k|D=y)$. 
The conditional distribution in \eqref{model:copula_nonterminal2} has been similarly utilized by 
\cite{cook2007statistical} (Pages 220-221) for analyzing 
recurrent event data in the presence of death.  
Note that the association parameter $\alpha$ measures the 
conditional dependence given $D$. Through 
models \eqref{model:copula_terminal}-\eqref{model:copula_nonterminal2}, $\alpha$ and $\theta_k$ ($k=1,\cdots,K)$ jointly regulate 
the dependence between any pair of $K$ intermediate event times such that the unconditional association among $(\tildeT_1,\cdots,\tildeT_K)$ 
may not necessarily adhere 
to an exchangeable structure (Web Figure S.1). This potential complex association 
illustrated by numerical examples in our simulation study in Web Appendix D.

In addition, it is worth noting that the proposed
copula-based joint models 
in \eqref{model:copula_terminal}--\eqref{model:copula_nonterminal2} are defined within the upper wedge with $\tildeT_1,\cdots,\tildeT_K\leq D$ only,  well capturing the 
nature of the partially ordered 
observations in the study. 
The model in the lower wedge is unverifiable as pointed out by \cite{fine2001semi} and \cite{xu2010illnessdeath}.  
Specifically, let $\tildeT_k=\infty$ if the terminal event occurs before $k$th intermediate event. Then,
there is no probability mass in the lower wedge. To balance the probability function,  the joint density of $(\tildeT_k,D)$ is taken to be absolutely continuous defined in the upper wedge, and continuous along the line at $t_k = \infty$. In this case, for $t_k<\infty$, $G_k(t_k;t)=G_k(\min(t_k,t);t)$ and $\Pr (\tildeT_1> t_1,\cdots,\tildeT_K> t_K|D=t)=\Pr (\tildeT_1> \min(t_1,t),\cdots,\tildeT_K> \min(t_K,t)|D=t)$. In other words, among those who die at time $t$, no new intermediate 
events occur possibly later than $t$ \citep{nevo2022causal}. These definitions provide validity for the marginal, conditional and joint distributions of intermediate event times and death time.

\section{Marginal and association analyses}
\label{sec:estimation}
\subsection{Joint likelihood function}
\label{sec: jointlik}

To construct an estimation procedure for association parameters in models \eqref{model:copula_terminal}-\eqref{model:copula_nonterminal2}, we consider a joint likelihood function based on 
observations $\{T_{i1},\cdots,T_{iK}, \delta_{i1},\cdots,\delta_{iK}, Y_i,\tildedelta_i\}$ for $i=1,\cdots,n$. Due to the complexity of the data structure and association between different types of events considered in this paper, 
the joint likelihood is formulated as follows. 
For each individual 
patient $i$ whose death time is exactly observed with $\tildedelta_i=1$, 
if all $K$ intermediate events 
occur before death, the corresponding likelihood 
can be derived from taking the $(K+1)^{th}$ order derivative of \eqref{model:copula_nonterminal}. If some intermediate events of patients $i$ do not occur before death, they are treated 
as events that never happened. 
Without loss of generality, suppose 
that the first $r_i$ intermediate events of individual $i$ never occur, $r_i\in\{1, 2, \cdots, K\}$. 
From the facts $G_k(t_k;y)=G_k(\min(t_k,y);y)$ and $\Pr (\tildeT_1> t_1,\cdots,\tildeT_K> t_K|D=y)=\Pr (\tildeT_1> \min(t_1,y),\cdots,\tildeT_K> \min(t_K,y)|D=y)$, 
model \eqref{model:copula_nonterminal2} can be rewritten as
\begin{equation}\label{model:copula_nonterminal3}
\begin{split}
&\Pr (\tildeT_1> t_1,\cdots,\tildeT_K> t_K|D=y) \\
&=\psi_{\alpha}\left\{\sum\limits_{k:t_k\leq y}\phi_{\alpha}(G_k(t_k;y))+
\sum\limits_{k:t_k> y}\phi_{\alpha}(G_k(y;y))\right\}, t_1,\cdots,t_K,y\geq 0.
\end{split}
\end{equation} 
We first consider the case where $r_i=1$, that is, the 1st intermediate event never happens $\tildeT_1> D$.  
Plugging \eqref{model:copula_nonterminal3} into model \eqref{model:copula_nonterminal}, when $D>t$, $t_1>t$ and $t_2,\cdots,t_K\leq t$,  
we have 
\begin{equation*}
\begin{split}
&\Pr (\tildeT_1> t_1,\cdots,\tildeT_K> t_K, D>t) \\
&= -\int_t^{t_1}\psi_{\alpha}\left\{\sum\limits_{k=2}^K\phi_{\alpha}(G_k(t_k;y))+
\phi_{\alpha}(G_1(y;y))\right\} dS_D(y)
 -\int_{t_1}^{\infty}\psi_{\alpha}\left\{\sum\limits_{k=1}^K\phi_{\alpha}(G_k(t_k;y))\right\} dS_D(y).
\end{split}
\end{equation*} 
In the case that $r_i=2$, when $t_1>t_2>t$ and $t_3,\cdots,t_K\leq t$, similarly we have
\begin{equation*}
\begin{split}
&\Pr (\tildeT_1> t_1,\cdots,\tildeT_K> t_K, D>t) \\
&= -\int_t^{t_2}\psi_{\alpha}\left\{\sum\limits_{k=3}^K\phi_{\alpha}(G_k(t_k;y))+\sum\limits_{k=1}^2
\phi_{\alpha}(G_k(y;y))\right\} dS_D(y)\\
& \quad -\int_{t_2}^{t_1}\psi_{\alpha}\left\{\sum\limits_{k=2}^K\phi_{\alpha}(G_k(t_k;y))+
\phi_{\alpha}(G_1(y;y))\right\} dS_D(y)
 -\int_{t_1}^{\infty}\psi_{\alpha}\left\{\sum\limits_{k=1}^K\phi_{\alpha}(G_k(t_k;y))\right\} dS_D(y),
\end{split}
\end{equation*} 
where the first term represents the scenario that neither of the first two intermediate events occurs, and the second term implies that only the second intermediate event occurs but the first one does not. The contributions of the above two cases to the joint likelihood can be obtained by taking the derivative of  $\Pr (\tildeT_1> t_1,\cdots,\tildeT_K> t_K, D>t)$ with respective to  $t_k$ and $t$.   
Note that for individuals with $\tildedelta=1$, the dependency in the joint survival function does not change when dropping some intermediate event times later than death, and the derivative of $\Pr (\tildeT_1> t_1,\cdots,\tildeT_K> t_K, D>t)$ depends on the number of uncensored intermediate events only.   
Let $d_i=\sum_k \delta_{ik}$ and $\mathbf{G}=\{G_1,\cdots,G_K\}$.  
For individual $i$ with $\tildedelta_i=1$ and 
$r_i$ ($=K-d_i$) intermediate events that do not occur before death, the corresponding contribution 
to the joint likelihood function can be derived using induction as follows: 
\begin{equation}
\begin{split}
L_{i1} (\alpha,\mathbf{G},S_D)  
& = (-1)^{d_i+1} S'_D(Y_i)\psi^{(d_i)}_{\alpha}\Big\{ \sum\limits_{k=1}^K
\phi_{\alpha}(G_k(T_{ik};Y_i))\Big\}\\
&\times\prod_{k=1}^K \left[\phi'_{\alpha}(G_k(T_{ik};Y_i))G'_k(T_{ik};Y_i)\right]^{\delta_{ik}},
\label{eq:likelihood1}
\end{split}
\end{equation}
where $G'_k(t_k;t)$ is the first-order derivative of $G_k(t_k;t)$ with respect to $t_k$, and $\psi^{(d_i)}(\cdot)$ is the $d_i$\emph{th}-order derivative of $\psi(\cdot)$ .

Next, we consider the scenario where 
individual $i$ is still alive with $\tildedelta_i=0$. 
For ease of presentation, we denote 
the joint density function of $(\tildeT_{m_1},\cdots,\tildeT_{m_d},D)$ by
$f_{m_1,\cdots,m_d,K+1}(t_{m_1},\cdots,t_{m_d},t)$, 
which can be derived from models \eqref{model:copula_terminal}-\eqref{model:copula_nonterminal3} accounting for the probability balance discussed at the end of Section 2, and takes a form 
similar to the right-hand side of Eq.\eqref{eq:likelihood1}. 
In the case that 
$\delta_{i1}=0$ and $\delta_{i2} =\cdots=\delta_{iK}=1$, that is, the first intermediate event time $\tildeT_{i1}$ is censored and the rest  $\tildeT_{i2},\cdots,\tildeT_{iK}$ are exactly observed, we have  
\begin{equation}\label{eq:J2}
\begin{split}
&\Pr (\tildeT_1> Y_i,\tildeT_2=T_{i2},\cdots,\tildeT_K= T_{iK}, D> Y_i)\\
&=\int_{Y_i}^{\infty} \int_{Y_i}^t f_{1,2,\cdots,K,K+1}(t_1,T_{i2},\cdots,T_{iK},t)dt_1dt
+ \int_{Y_i}^{\infty}  f_{2,\cdots,K,K+1}(T_{i2},\cdots,T_{iK},t)dt\\
&: = J_{2,\cdots,K,K+1}^1(T_{i1},\cdots,T_{iK},Y_i)+ J_{2,\cdots,K,K+1}^{\emptyset}(T_{i1},\cdots,T_{iK},Y_i).
\end{split}
\end{equation}
When the first two intermediate event times  $\tildeT_{i1}$ and $\tildeT_{i2}$ are censored with 
$\delta_{i1}=\delta_{i2}=0$, and the rest $\tildeT_{ik}$ are exactly observed with $\delta_{ik}=1$ for $k=3,\cdots, K$, 
we have

\begin{equation}\label{eq:J3}
\begin{split}
&\Pr (\tildeT_1> Y_i,\tildeT_2>T_{i2},\tildeT_3=T_{i3},\cdots,\tildeT_K= T_{iK}, D> Y_i)\\
&=\int_{Y_i}^{\infty} \int_{Y_i}^t\int_{Y_i}^t f_{1,2,\cdots,K,K+1}(t_1,t_2,T_{i3},\cdots,T_{iK},t)dt_1dt_2dt\\
&+ \int_{Y_i}^{\infty}\int_{Y_i}^t  f_{2,\cdots,K,K+1}(t_2,T_{i3},\cdots,T_{iK},t)dt_2dt\\
&+ \int_{Y_i}^{\infty}\int_{Y_i}^t  f_{1,3\cdots,K,K+1}(t_1,T_{i3},\cdots,T_{iK},t)dt_1dt
+ \int_{Y_i}^{\infty} f_{3\cdots,K,K+1}(T_{i3},\cdots,T_{iK},t)dt.
\end{split}
\end{equation}
Analogous to \eqref{eq:J2}, the right-hand side of 
\eqref{eq:J3} can be rewritten as 
\begin{equation}
\begin{split}
& 
J^{1,2}_{3,\cdots,K,K+1}(T_{i1},\cdots,T_{iK},Y_i)+J^2_{3,\cdots,K,K+1}(T_{i1},\cdots,T_{iK},Y_i)\\
& +J^1_{3,\cdots,K,K+1}(T_{i1},\cdots,T_{iK},Y_i)+ J_{3,\cdots,K,K+1}^{\emptyset}(T_{i1},\cdots,T_{iK},Y_i).
\end{split}
\end{equation}
In general, let $s$ be any subset of 
$\{k:\delta_{ik}=0\}$, and $|s|$ be the size of the set $s$. By induction, the contribution of individual $i$ whose 
$\tildedelta_i=0$ 
to the joint likelihood function is given by
\begin{equation}
L_{i2} (\alpha,\mathbf{G},S_D) = \sum\limits_{s\subset\{k:\delta_{ik}=0\}}J^{s}_{\{k:\delta_{ik}=1\},K+1}(T_{i1},\cdots,T_{iK},Y_i),
\label{eq:likelihood2}
\end{equation}
where 
\begin{equation}\label{eq:Js}
\begin{split}
&J^{s}_{\{k:\delta_{ik}=1\},K+1}(T_{i1},\cdots,T_{iK},Y_i)\\
&= (-1)^{d_i+1+|s|}\int_{Y_i}^{\infty}\left\{\prod_{k:\delta_{ik}=1} \phi'_{\alpha}(G_k(T_{ik};t))G'_k(T_{ik};t)\right\}
\int_{Y_i}^t \psi^{(d_i+|s|)}_{\alpha}\Bigg\{ 
\sum\limits_{k:\delta_{ik}=1}\phi_{\alpha}(G_k(T_{ik};t))\\
&+ \sum\limits_{k\in s}\phi_{\alpha}(G_k(t_k;t))+\sum\limits_{k\notin s,\delta_{ik}=0}\phi_{\alpha}(G_k(t;t))\Bigg\}
d\prod\limits_{k\in s}\phi_{\alpha}(G_k(t_k;t))dS_D(t).
\end{split}
\end{equation}
Combining the likelihood contributions $L_{i1}$ and $L_{i2}$ in \eqref{eq:likelihood1} and \eqref{eq:likelihood2} over all individuals for $i=1, \cdots, n$,  the resulting joint likelihood function is 
\begin{equation}
L_n(\alpha,\mathbf{G},S_D) = \prod_{i=1}^n L_{i1}(\alpha,\mathbf{G},S_D)^{\tildedelta_i}L_{i2}(\alpha,\mathbf{G},S_D)^{1-\tildedelta_i}.
\label{eq:likelihood}
\end{equation}

It is noted that the likelihood above includes 
three- 
or higher-
dimensional integrals due to Eq.\eqref{eq:Js} when $|s|\geq 2$. This computational challenge can be addressed numerically through the application of a Laplace transformation \citep{joe1997multivariate}. 
Since the generator $\psi_{\alpha}$ is a Laplace transformation of a positive distribution function $F_{\alpha}(x)$ with $F_{\alpha}(0)=0$ 
such that 
$\psi_{\alpha}(t) = \int_0^{\infty}\exp(-tx)dF_{\alpha}(x), t\geq 0,$ the $(d_i+|s|)^{th}$ order derivative of $\psi_{\alpha}$ is in the form of $\psi^{(d_i+|s|)}_{\alpha}(t) = \int_0^{\infty} (-x)^{d_i+|s|}\exp(-tx)dF_{\alpha}(x).$ As such, the integral in \eqref{eq:Js} reduces to a two-dimensional one:
\begin{equation}\label{eq:Js2}
\begin{split}
&J^{s}_{\{k:\delta_{ik}=1\},K+1}(T_{i1},\cdots,T_{iK},Y_i)\\
&=(-1)^{d_i+1+|s|}\int_{Y_i}^{\infty}\prod_{k:\delta_{ik}=1} \phi'_{\alpha}(G_k(T_{ik};t))G'_k(T_{ik};t)\\
&\times\int_0^{\infty}(-x)^{d_i}\Bigg[\exp\left\{-x\sum\limits_{k:\delta_{ik}=1}\phi_{\alpha}(G_k(T_{ik};t))\right\}\exp\left\{-x\sum\limits_{k\notin s,\delta_{ik}=0}\phi_{\alpha}(G_k(t;t))\right\}\\
&\times \prod\limits_{k\in s}\Big\{\exp\left\{-x \phi_{\alpha}(G_k(t;t))\right\}-\exp\left\{-x \phi_{\alpha}(G_k(Y_i;t))\right\}\Big\}\Bigg]dF_{\alpha}(x)dS_D(t),
\end{split}
\end{equation}
where the inner integral can be numerically evaluated 
via  Monte Carlo integration, and the outer one is a Stieltjes integral and can be approximated by the Riemann sum.
The use of the Laplace transformation in  
\eqref{eq:Js2} significantly improves computational efficiency 
especially when the size of the set $s$ is large.    

\subsection{Estimation procedure}
\label{sec: two-stage estimation}

The joint likelihood function $L_n(\alpha,\mathbf{G},S_D)$ 
includes unknown functions $G_k$ and $S_D$. We propose an estimation procedure basically starting from estimation of $G_k$ and $S_D$, and then maximizing the pseudo-likelihood obtained by plugging their estimators   
for estimation of $\alpha$. Herein, the survival function of 
$D$, $S_{D}$, can be estimated by the Kaplan-Meier (KM) method since $D$ is subject to independent censoring. We denote the KM estimator of $S_D$ by $\widehat{S}_D$.  The nonparametric estimation of $G_k$ is not trivial due to the intermediate events being subject to informative censoring. 
Let $H_1(u,v;\theta_k) =\partial H(u,v,\theta_k)/\partial u$,   $H_2(u,v;\theta_k) =\partial H(u,v,\theta_k)/\partial v$ and   $H_{12}(u,v;\theta_k) =\partial^2 H(u,v,\theta_k)/\partial v\partial u$. 
From model \eqref{model:copula_terminal}, $G_k(t_k;t)$ 
can be written as
\begin{equation}\label{eq:tildeS}
G_k(t_k;t) = H_2\{S_k(t_k),S_{D}(t);\theta_k\}, \quad t_k\leq t,
\end{equation} 
and thus its derivative $G'_k(t_k;t)$ is 
$G'_k(t_k;t) = H_{12}\{S_k(t_k),S_{D}(t);\theta_k\}S'_k(t_k).$
Clearly, substituting consistent estimates of $\theta_k$, $S_k$ and $S_{D}$ yields estimates of $G_k$, denoted by $\widehat{G}_k$. Unlike $S_D$, the KM method is invalid for estimating the marginal survival functions $S_k$ due to informative censoring. 
In the following, 
we present the estimation methods for $\theta_k$ and $S_k$ first 
and then 
the maximum pseudo-likelihood estimation for $\alpha$. 

Under an arbitrary Archimedean copula, a unified estimator, denoted by $\widehat{\theta}_k$, can be developed following the framework 
of \cite{lakhal2008}, which results in a consistent and asymptotically normal estimator of $\theta_k$. To estimate the marginal survival function $S_k$ in model \eqref{model:copula_terminal}, we adopt pseudo self-consistent estimator denoted by $\widehat{S}_k(t)$, which is similar to that of \cite{jiang2005pseudo}. Further details are outlined in Web Appendix B. 
Plugging  $\widehat{S}_k$, $\widehat{S}_D$ and  $\widehat{\theta}_k$ into \eqref{eq:tildeS} yields  the estimator for $G_k$, denoted by $\widehat{G}_k$. Subsequently,  the association parameter $\alpha$ is estimated by $\hatalpha=\arg\max\limits_{\alpha} \log L(\alpha,\widehat{\mathbf{G}},\widehat{S}_D)$.  The steps to facilitate the proposed estimation procedure are outlined 
in Algorithm 1 in Web Appendix A. 

This two-stage approach decomposes the estimation problem into lower dimensional components, estimating only one-dimensional parameters in the second stage. 
It substantially reduces computational burden and improves numerical stability compared with full joint likelihood estimation, while retaining consistency and correct coverage.
The robustness of this algorithm has been assessed via simulation studies (Web Appendix D) across a range of copula structures and perturbations in the unobservable region of the potential data. 
%
%
%
Since the asymptotics results of $\widehat{S}_D$, $\widehat{\theta}_k$ and $\widehat{S}_k$ follow from the arguments along the lines of \cite{fleming2013counting}, \cite{lakhal2008} and \cite{jiang2005pseudo}, respectively, we now provide the properties of the estimator $\widehat{\alpha}$ in the following theorems. 
\begin{thm}
Under Conditions 1-6 provided in Web Appendix C, 

(i) $\hatalpha$ is a consistent estimator for the true parameters $\alpha_0$; 

(ii) $\sqrt{n}(\hatalpha-\alpha_0)$ converges to a zero-mean normal distribution.
\end{thm}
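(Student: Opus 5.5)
We treat $\hatalpha$ as a two-stage (pseudo-)maximum likelihood estimator in the sense of Gong and Samaniego. The nuisance parameter is $\eta=(\theta_1,\dots,\theta_K,S_1,\dots,S_K,S_D)$. Through \eqref{eq:tildeS} it determines $\mathbf{G}$. The first-stage estimator is $\widehat\eta=(\widehat\theta_k,\widehat S_k,\widehat S_D)$. Write $\ell_n(\alpha,\eta)=n^{-1}\log L_n(\alpha,\mathbf{G}(\eta),S_D)$ and let $\ell(\alpha,\eta)$ be its expectation under the true distribution. By the results cited before the theorem, the following hold: $\widehat S_D$ (Kaplan--Meier), $\widehat\theta_k$ (Lakhal et al.) and $\widehat S_k$ (pseudo self-consistent estimator) are uniformly consistent on $[0,\tau]$. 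Moreover, $\sqrt n(\widehat\eta-\eta_0)$ admits an asymptotically linear representation $n^{-1/2}\sum_i \xi_i+o_p(1)$ in $\ell^\infty[0,\tau]\times\mathbb R^K$, with i.i.d.\ mean-zero influence functions $\xi_i$. We take these representations as given, together with Conditions 1--6. We assume these conditions supply compactness of the parameter space for $\alpha$, a bounded follow-up window $[0,\tau]$ with $S_D(\tau)>0$, smoothness of $\phi_\alpha,\psi_\alpha$ and $H$ in their arguments and parameters, and identifiability.

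\emph{Consistency.} First we show that the class of log-likelihood contributions $\{\log L_{i1}^{\tildedelta_i}L_{i2}^{1-\tildedelta_i}:\alpha\in\Theta,\ \eta\in\mathcal N(\eta_0)\}$ is Glivenko--Cantelli. Here $\mathcal N(\eta_0)$ is a sup-norm neighbourhood of $\eta_0$. Each term in \eqref{eq:likelihood1} and \eqref{eq:Js2} is a Lipschitz composition of monotone functions ($S_k$, $S_D$), smooth maps of $(\alpha,\theta_k)$, and integrals of these against $dS_D$ and $dF_\alpha$. Bounded-variation/monotone classes have finite bracketing entropy, and Lipschitz maps preserve the Glivenko--Cantelli property, so the class is Glivenko--Cantelli. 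The same Lipschitz bound in $\eta$ gives
\[
\sup_\alpha|\ell_n(\alpha,\widehat\eta)-\ell(\alpha,\eta_0)|\le \sup_{\alpha,\eta\in\mathcal N}|\ell_n-\ell|+\sup_\alpha|\ell(\alpha,\widehat\eta)-\ell(\alpha,\eta_0)|=o_p(1).
\]
The Kullback--Leibler inequality together with identifiability of $\alpha$ (a condition) shows that $\alpha_0$ is the unique, well-separated maximizer of $\ell(\cdot,\eta_0)$. The argmax theorem (van der Vaart, Thm 5.7) then gives $\hatalpha\to\alpha_0$ in probability.

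\emph{Asymptotic normality.} We Taylor-expand the score $U_n(\alpha,\eta)=\partial\ell_n/\partial\alpha$ around $\alpha_0$:
\[
0=U_n(\hatalpha,\widehat\eta)=U_n(\alpha_0,\widehat\eta)+\partial_\alpha U_n(\alpha^*,\widehat\eta)(\hatalpha-\alpha_0).
\]
By consistency and a uniform law for second derivatives, $\partial_\alpha U_n(\alpha^*,\widehat\eta)\to -I(\alpha_0)$, with $I$ the (nonsingular, by assumption) Fisher information. We then split
\[
\sqrt n U_n(\alpha_0,\widehat\eta)=\sqrt n U_n(\alpha_0,\eta_0)+\sqrt n\{U(\alpha_0,\widehat\eta)-U(\alpha_0,\eta_0)\}+R_n .
\]
The first term is a normalized sum of i.i.d.\ mean-zero scores. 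The second equals $\dot U_\eta[\sqrt n(\widehat\eta-\eta_0)]+o_p(1)$ by Hadamard differentiability of $\eta\mapsto U(\alpha_0,\eta)$, and is therefore $n^{-1/2}\sum_i\dot U_\eta[\xi_i]+o_p(1)$. The remainder $R_n$ is an empirical-process increment, $\sqrt n(\mathbb P_n-P)\{u(\alpha_0,\widehat\eta)-u(\alpha_0,\eta_0)\}$. It is $o_p(1)$ by asymptotic equicontinuity, because the score class is Donsker (same bracketing argument, with derivatives $\psi^{(d+1)}$, $\phi''$, $H_{12}$ bounded on the relevant range) and $\widehat\eta\to\eta_0$. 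This yields
\[
\sqrt n(\hatalpha-\alpha_0)=I(\alpha_0)^{-1}n^{-1/2}\sum_i\{u_i(\alpha_0,\eta_0)+\dot U_\eta[\xi_i]\}+o_p(1)\to N(0,I^{-1}\Sigma I^{-1}),
\]
where $\Sigma$ is the variance of the bracketed influence term.

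\emph{Main obstacle.} The hard step is verifying the Donsker/Lipschitz properties and the Hadamard derivative $\dot U_\eta$ for the censored contributions $L_{i2}$. These involve the nested integrals in \eqref{eq:Js2} over $dF_\alpha$ and $dS_D$, and $\log L_{i2}$ must be kept away from $-\infty$. I would first bound $L_{i2}$ below uniformly on $[0,\tau]$ using $S_D(\tau)>0$ and positivity of $H_{12}$. Next I would treat $d S_D$ integration through integration by parts, so that only sup-norm perturbations of $\widehat S_D$ appear. Finally I would use the Laplace representation to move the $\alpha$- and $\eta$-derivatives inside the integrals, by dominated convergence under the moment conditions on $F_\alpha$.
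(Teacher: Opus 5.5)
Your proposal follows the paper's proof essentially step for step. For consistency, both use a uniform law of large numbers for the log-likelihood class, uniform continuity of $\ell$ in the plug-in nuisance estimates, a Jensen/Kullback--Leibler identifiability argument, and the argmax theorem. For asymptotic normality, both Taylor-expand the pseudo-score and use the same three-term split $\sqrt{n}(\mathcal{P}_n-\mathcal{P}_0)\{u_{\alpha}(\alpha_0,\widehat\eta)-u_{\alpha}(\alpha_0,\eta_0)\}+\sqrt{n}(\mathcal{P}_n-\mathcal{P}_0)u_{\alpha}(\alpha_0,\eta_0)+\sqrt{n}\,\mathcal{P}_0u_{\alpha}(\alpha_0,\widehat\eta)$, with the last term linearized by a functional delta (von Mises) expansion fed by the i.i.d.\ representations of $\widehat S_D$, $\widehat\theta_k$ and $\widehat S_k$.

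One detail needs tightening. The paper defines $\pi^{(1)}+\pi^{(2)}+\pi^{(3)}$ with the $\alpha$-free density factors $S_D'(Y_i)$ and $S_k'(T_{ik})$ (the latter inside $G_k'$) removed from $\log L_n$, and you should do the same before claiming sup-norm Lipschitz continuity in $\eta$. Those factors are not controlled by sup-norm perturbations of the step-function estimators $\widehat S_k$ and $\widehat S_D$, and removing them does not change $\widehat{\alpha}$.
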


The detailed proof of this Theorem 
is provided in Web Appendix C. The asymptotic variance is complicated and lacks an explicit form. As such, 
we suggest the percentile Bootstrap method, which accounts for the uncertainty associated with the first-stage estimates, to construct a confidence interval for $\alpha$ in practice. 
Specifically, we sample with replacement from these $n$ subjects and apply 
Algorithm 1 in Web Appendix A to each of the $B$ bootstrapped data sets to acquire $\{\widehat{\alpha}_b\}_{b=1}^B$. The confidence interval for $\alpha$ can then be constructed using the empirical percentiles of these bootstrap estimates.

\section{Dynamic prediction of overall survival}
\label{sec:prediction}
\subsection{Overall survival rate 
}

In the 
estimation procedure, we have inferred 
the marginal survival functions of intermediate and terminal event times and their association parameters. 
We now construct a prediction for overall survival for a patient who has experienced some intermediate events and is still alive at the last follow-up. Since the exact death time is unknown for this patient and the order of future events is unclear, the prediction is based solely on the observable intermediate events, without considering those yet to occur. 

Particularly, given a new subject with the exactly observed intermediate event times $\tildeT_{l_1} = t_{l_1}, \cdots, \tildeT_{l_m} = t_{l_m}$, where $l_1,\cdots, l_m$ are distinct elements belonging to $\{1,\cdots,K\}$, and assuming 
$t_{l_1}\leq \cdots\leq t_{l_m}$ without loss of generality, 
we predict 
the overall survival rate 
for a patient 
who is still alive at the landmark time $t_m^{*} = \max(t_{l_1},\cdots,t_{{l_m}})$. That is, $S^{*}(t| \tildeT_{l_1} = t_{l_1}, \cdots, \tildeT_{{l_m}} = t_{{l_m}})=\Pr(D>t|\tildeT_{l_1} = t_{l_1}, \cdots, \tildeT_{{l_m}} = t_{{l_m}}, D> t_m^{*} )$. Here we allow the 
landmark time  to vary 
up to the maximum observation of intermediate event times, enabling dynamic prediction. 
When no intermediate event occurs in the observable region, we assign zero to each of $m, l_0, t_0, t_m^{*}$ and let $\tildeT_{0} =0.$  
In this case, $S^{*}(t| \tildeT_{0} = 0)=\Pr(D>t)= S_D(t)$ can be estimated by the KM method. Next, we
consider $m=1,2,\cdots, K$. When $m=1$, there is only one intermediate event occurring before the landmark time, say, event $l_1$, occurring at $t_{l_1}$. 
For $t>t_{l_1}$,
$S^{*}(t| \tildeT_{l_1} = t_{l_1})
= H_1\{S_{l_1}(t_{l_1}),S_{D}(t);\theta_k\}/H_1\{S_{l_1}(t_{l_1}),S_{D}(t_{l_1});\theta_k\},$
which 
can be estimated by plugging the estimators of the estimated marginal survival functions and the association between $l_1$-th intermediate event and terminal event times.
When $m\geq 2$, to obtain the survival probability $S^{*}$ requires the joint distribution of multiple intermediate event times, and thus the association parameter $\alpha$ plays a critical role in the prediction. 
The computation of the survival rate relies on the permutation-symmetric property of the Archimedean copula. 
To ease presentation, we define 
\begin{equation*}
\begin{split}
&Q_m(t_{l_1},\cdots,t_{l_m},t;\alpha,G_{l_1},\cdots,G_{l_m})\\
&= (-1)^m\psi^{(m)}_{\alpha}\left\{\sum\limits_{k=l_1,\cdots,l_m}\phi_{\alpha}(G_k(t_k;t))\right\} \prod\limits_{k=l_1,\cdots,l_m}\phi'_{\alpha}(G_k(t_k;t))G'_k(t_k;t).
\end{split}
\end{equation*}
For $m=2$, 
the conditional survival probability at $t$, $t_U>t>\max(t_{l_1},t_{{l_2}})$,  to be predicted is
\begin{equation}
\begin{split}
&S^{*}(t| \tildeT_{l_1} = t_{l_1}, \tildeT_{{l_2}} = t_{{l_2}})
=\frac{\Pr(\tildeT_{l_1} = t_{l_1}, \tildeT_{{l_2}} = t_{{l_2}},D>t)}{\Pr(\tildeT_{l_1} = t_{l_1}, \tildeT_{{l_2}} = t_{{l_2}}, D> \max(t_{l_1},t_{{l_2}}))}\\
&=\frac{\int_t^{\infty}Q_2(t_{l_1},t_{l_2},y;\alpha,G_{l_1},G_{l_2}) dS_D(y)}{\int_{\max(t_{l_1},t_{{l_2}})}^{\infty}Q_2(t_{l_1},t_{l_2},y;\alpha,G_{l_1},G_{l_2}) dS_D(y)}.
\end{split}
\end{equation}
More generally, for $m>2$ and $t_U>t>t_m^{*} =\max(t_{l_1},\cdots,t_{l_m})$, 
it becomes
\begin{equation}
\begin{split}
&S^{*}(t| \tildeT_{l_1} = t_{l_1}, \cdots, \tildeT_{{l_m}} = t_{{l_m}})
=\frac{\int_t^{\infty}Q_m(t_{l_1},\cdots,t_{l_m},y;\alpha,G_{l_1},\cdots,G_{l_m}) dS_{D}(y)}{\int_{t_m^{*} }^{\infty}Q_m(t_{l_1},\cdots,t_{l_m},y;\alpha,G_{l_1},\cdots,G_{l_m}) dS_{D}(y)}.
\end{split}
\end{equation}
The implementation of 
this survival prediction procedure is summarized in Algorithm 2 in Web Appendix A. We assess the in-sample and out-of-sample performances of this algorithm and other competing approaches via simulations (see details in Web Appendix D).

\subsection{
Residual lifetime}
In addition to the 
conditional survival probabilities, the prediction of the remaining lifetime based on past medical history is often of interest in practice. We now introduce the methods for prediction of restricted mean residual life and quantile residual lifetime.  
Given the occurrence of intermediate events, the conditional restricted mean of residual lifetime \citep{
cortese2017regression} 
has the form of

$E\left[\min(D,t_U^{*})-t_m^{*}|\tildeT_{l_1} = t_{l_1}, \cdots, \tildeT_{{l_m}} = t_{{l_m}}, D> t_m^{*}\right]
=\int _{t_m^{*}}^{t_U^{*}} S^{*}(t| \tildeT_{l_1} = t_{l_1}, \cdots, \tildeT_{{l_m}} = t_{{l_m}})dt,$
where $t_m^{*}\leq t_U^{*}\leq t_U$. 
To compare with the actually observed survival time, we consequently compute a conditional restricted mean survival time (CMST) given the landmark time $t_m^{*}$: 
$$E\left[\min(D,t_U^{*})|\tildeT_{l_1} = t_{l_1}, \cdots, \tildeT_{{l_m}} = t_{{l_m}}, D> t_m^{*}\right]
=t_m^{*}+\int _{t_m^{*}}^{t_U^{*}} S^{*}(t| \tildeT_{l_1} = t_{l_1}, \cdots, \tildeT_{{l_m}} = t_{{l_m}})dt.$$
To evaluate the distribution of residual lifetimes for survival, the quantile residual lifetime is also of interest. Let $\eta_{\tau,t_m^{*}}$ 
be the $\tau$-th conditional quantile of residual lifetimes such that  
$S^{*}(t_m^{*}+ \eta_{\tau,t_m^{*}}| \tildeT_{l_1} = t_{l_1}, \cdots, \tildeT_{{l_m}} = t_{{l_m}})
= 1-\tau,$
where $0<\tau< 1-S^{*}(t_U| \tildeT_{l_1} = t_{l_1}, \cdots, \tildeT_{{l_m}} = t_{{l_m}})$ for 
identifiability. 
Similar to CMST, we compute $t_m^{*}+ \eta_{\tau,t_m^{*}}$ in our numerical studies for ease of comparison with the actual survival times, referring it as the conditional quantile survival time (CQST) given the landmark time $t_m^{*}$.  
The $95\%$ prediction interval for each in-sample/out-of-sample individual can be derived from CQST at 0.025 and 0.975 quantile levels. 

To facilitate these predictions 
using estimators $(\hatalpha, \widehat{\theta}_1,\cdots,\widehat{\theta}_K,\widehat{S}_1,\cdots, \widehat{S}_K,\widehat{S}_D)$ we obtain a plug-in estimator for the joint survival function $S_{T_1,\cdots,T_K,D}(t_1,\cdots,t_K,t)=\Pr (\tildeT_1> t_1,\cdots,\tildeT_K> t_K, D>t) $ in model \eqref{model:copula_nonterminal} defined on the upper wedge. By Theorem 1 and mimicking the proof of Theorem 3 in \cite{zhu2012analysing}, it is straightforward to show that 
this plug-in joint survival estimator, denoted by $\widehat{S}_{T_1,\cdots,T_K,D}(t_1,\cdots,t_K,t)$ and subsequently $\widehat{S}^{*}$ are consistent estimators of $\Pr (\tildeT_1> t_1,\cdots,\tildeT_K> t_K, D>t) $ and $S^{*}$, respectively. 
The asymptotic normality of both estimators can also be justified using Theorem 1(ii), the functional delta method and the reasoning provided in the proof of Theorem 3 in \cite{zhu2012analysing}. Plugging in $\widehat{S}^{*}$ for $S^{*}$, the resulting estimator of the conditional restricted mean or quantile of residual lifetimes 
is also asymptotically normal.

\section{Numerical examples}
\label{sec:simulation}
\subsection{Simulation studies}

We conducted extensive simulation studies to evaluate the performance of the proposed methods in association estimation and dynamic survival prediction. Two simulation settings, Ex1 and Ex2 
described in Supplementary Section D.1, were considered to examine model performance under varying numbers of intermediate events and various dependence structures between intermediate and terminal events, respectively.  
Results reported in Web Appendix 
D.3 demonstrate that the proposed method yields consistent estimation of the association parameters, as reflected by small bias and standard deviation, and reasonable coverage probability. To assess survival prediction performance, we considered overall accuracy metrics: 
mean squared prediction error (MSPE), quantile prediction error (QPE), Brier score (BS), integrated Brier score (IBS), and area under the receiver operating characteristic curve (AUC), as detailed in Web Appendix D.1. 
Empirical coverage probability and median interval width are used to measure the reliability of individual prediction intervals.
Results in 
Web Appendix D.5 show that the proposed dynamic prediction (DP) algorithm provides accurate prediction estimates and reliable individual-level prediction intervals, outperforming competing approaches (listed in 
Web Appendix D.4). Web Appendix D.6 in the Supplementary Materials shows that our algorithm exhibits robustness to model misspecification. Furthermore, additional simulation results in 
 Web Appendix D.7 demonstrate that the estimated joint model on the upper wedge remains robust to variations in the joint density within the unobservable region.

\subsection{Application to the  Framingham heart data}
\label{sec:realdata}

In this section, we apply the proposed methodology to analyze the data obtained in the Framingham heart study. 
The Framingham Heart Study, initiated in 1948 by the National Heart, Lung, and Blood Institute of the National Institutes, is a long-term prospective study of cardiovascular epidemiology in the community of Framingham, Massachusetts. Participants have been examined biennially since the study entry and monitored throughout their lifetimes. Disease progression and survival status have been completely recorded through regular surveillance of area hospitals, participant contact, and death certificates. Clinical data were collected during three examination cycles approximately six years apart. In our study, subjects were excluded if (1) they were not enrolled at the first examination cycle, (2) they had a prior history of diseases, (3) they had missing data or irregular observations. This study included 2833 patients and considered $K=7$ intermediate cardiovascular diseases: AP, CHD, MIFC, CVD, STRK, HYP and MI. 
Among them, 696 patients (25$\%$) had recorded death times ranging from 58 to 8766 days, 
330 patients (12$\%$) had exact observations for the times to diagnosis of AP (ranged $231-8764$ days), 562 (20$\%$) for  CHD (ranged $80-8758$ days), 303 (11$\%$) for MIFC (ranged $169-8758 $ days),  479 (17$\%$) for CVD (ranged $101-8758$ days),  140 (5$\%$) for STRK (ranged $ 101-8696$ days),  1714 (61$\%$) for HYP (ranged $0-8764$ days), and  198 (7$\%$) for MI (ranged $169-8758$ days). 
Observations obtained from the first 2500 patients 
were used to form a training dataset for parameter estimation, while the rest were served for testing.

\begin{table}[htbp]
\scriptsize
\tabcolsep=4pt
\begin{center}
\caption{Estimation results of association parameters $\tau_{\alpha}$ and $\tau_{\theta_k}$ for the Framingham heart data under different copula structures (Frank, Clayton, and Gumbel copula functions), where the subscript $k$ corresponds to one of seven intermediate events (AP,CHD,MIFC,CVD,STRK,HYP,MI). Values in each bracket are 95$\%$ confidence intervals based on 200 bootstrap samples.}
\label{table:realdata alpha}
\begin{tabular}{ccccccccccccccc}
\toprule
\multicolumn{1}{l}{Copula}&\multicolumn{1}{c}{$\tau_{\alpha}$}& \multicolumn{7}{c}{$\tau_{\theta_k}$}\tabularnewline
\cmidrule(lr){3-9}
&&\multicolumn{1}{c}{AP}&\multicolumn{1}{c}{CHD}&\multicolumn{1}{c}{MIFC}&\multicolumn{1}{c}{CVD}&\multicolumn{1}{c}{STRK}&\multicolumn{1}{c}{HYP}&\multicolumn{1}{c}{MI}\tabularnewline
\hline
Frank   &0.371  &0.300& 0.482& 0.694& 0.608& 0.680& 0.113& 0.584\tabularnewline
& [0.314,0.426]  & [0.238,0.36] & [0.434,0.522] & [0.641,0.746] & [0.568,0.652] & [0.615,0.75] & [0.07,0.15] & [0.514,0.651]\tabularnewline
Clayton &0.443  &0.449& 0.631& 0.806& 0.740& 0.798& 0.166& 0.728\tabularnewline
& [0.364,0.495] &[0.371,0.516] &[0.585,0.668] &[0.768,0.842] &[0.708,0.773] &[0.748,0.844] &[0.105,0.218] &[0.666,0.778] \tabularnewline
Gumbel  &0.222  &0.224& 0.401& 0.584& 0.512& 0.560& 0.100& 0.459\tabularnewline
& [0.18,0.279] &[0.172,0.273]& [0.36,0.439]& [0.534,0.638]& [0.476,0.558]& [0.488,0.634]& [0.06,0.137]& [0.393,0.526]\tabularnewline
\bottomrule
\end{tabular}\end{center}
\end{table}

Table \ref{table:realdata alpha} summarizes the estimates of 
the association parameters $\theta_k$ and $\alpha$ under Frank, Clayton, and Gumbel copula structures, 
and  
95$\%$ confidence intervals (CI) using 50 Bootstrap replicates. The results indicate that all intermediate events have moderately to highly positive associations with death. The ranking of these associations from highest to lowest is 
MIFC $>$ STRK $>$ CVD $>$ MI $>$ CHD $>$ AP $>$ HYP and concordant across different copula types,  suggesting the inappropriateness of assuming the same association between the intermediate events and death. 
We identified fatal coronary heart disease, fatal cerebrovascular disease, and stroke as strongly significant contributors to mortality, which aligns 
with clinical findings 
\citep{wang2006multiple}. 
We further assessed the impacts of each intermediate disease; for instance, the estimated concordance parameter of Clayton’s copula was $8.31$, corresponding to Kendall’s tau of $0.806$ between the onset times of MIFC  
disease and death. This can be interpreted as a predictive hazard ratio: the hazard of death for those who had experienced a MIFC 
disease is $8.31$ times bigger than the hazard of death for those who had not experienced this disease. 
Additionally, the conditional global associations (Kendall's tau corresponding to $\alpha$) among these seven intermediate events were found using our methods to be $0.37$ (95$\%$ CI $[0.310,0.428]$), $0.450$ (CI $[0.361,0.497]$), 0.217 (CI $[0.162,0.294]$) under the Frank, Clayton, and Gumbel copula structures, respectively, highlighting significant associations among these events. Coupled with estimates of $\theta_k$, the estimate of $\alpha$ depicts the underlying dependence structure of these events. 

As indicated in the simulation study (see Web Appendix D), the performance of the 
dynamic prediction (DP) algorithm is slightly sensitive to model misspecification of the copula structure. Model selection can be carried out practically by maximizing the joint likelihood function \eqref{eq:likelihood} or minimizing Akaike information criterion (AIC). 
In this real data example, the AIC values  in the Frank, Clayton, and Gumbel models are 
110053, 110312 and 113114,
respectively, indicating the Frank model is preferable, followed by the Clayton model.

To illustrate the capabilities of our proposal in dynamics and personalized prediction, 
referring to the workflow in Figure \ref{fig:workflow}, we present in  Table \ref{table:realdata} (upper panel) the predicted outcomes for several patients randomly selected from the test dataset.
The individualized prediction intervals can be constructed by $[\operatorname{CQST}_{i,0.025},\operatorname{CQST}_{i,0.975}]$ in Table \ref{table:realdata}. Aggregate summaries presented in 
Web Figure S.3  demonstrate the concordance between the actual death time and the predicted values.
In addition, 
we generated 14 artificial subjects whose intermediate events occurred in sequence.  
Predicted overall survival reported in the lower panel (for synthetic data) of Table  \ref{table:realdata} indicate clear dynamic changes in patients'  predicted survival rates, CMST and CQST with different intermediate disease progressions.

We also evaluate the predictive accuracy and robustness against the different copula types 
using the out-of-sample Brier score curves in Web Figure S.4 in the supplementary material.
The proposed DP method always offered 
a lower Brier prediction error in the identifiable region $[0,t_U]$ with $t_U=24.02 $ years (8766 days), 
showing the advantages of the proposed dynamic prediction algorithm over the other competitors. 
Furthermore, the DP methods with the Frank, Clayton, and Gumbel copula structures yield the integrated Brier scores of $0.0863$,  $0.0866$, and $0.0908$, respectively, for the test dataset, suggesting the Frank and Clayton models are comparable in terms of the survival prediction accuracy in this example. Calibration analysis provided in Web Appendix 
E further demonstrates that the proposed model is well-calibrated and yields reliable predicted survival probabilities.

\begin{table}[t]
\tiny
\begin{center}
\caption{Personalized and dynamic prediction for the Framingham heart data under Frank copula assumption. $t_U^{*}=24.02$ years (8766 days) was chosen for computing conditional restricted mean survival time (CMST) and conditional quantile survival time (CQST). The landmark time $t_{m}^{*}$ is random and corresponds to the maximal intermediate event time of each patient. The individual prediction intervals can be derived from CQST at 0.025 and 0.975 quantile levels. }
\label{table:realdata}
\begin{tabular}{ccccccccccccccc}
\toprule
&\multicolumn{7}{c}{intermediate event times (days)}& \multicolumn{1}{c}{OS }&\multicolumn{5}{c}{prediction of overall survival}\tabularnewline
\cmidrule(lr){2-8}\cmidrule(lr){10-15}
\multicolumn{1}{c}{ID}&\multicolumn{1}{c}{AP}&\multicolumn{1}{c}{CHD}&\multicolumn{1}{c}{MIFC}&\multicolumn{1}{c}{CVD}&\multicolumn{1}{c}{STRK}&\multicolumn{1}{c}{HYP}&\multicolumn{1}{c}{MI}&\multicolumn{1}{c}{DTH}&\multicolumn{1}{c}{CMST}&\multicolumn{3}{c}{CQST (years)}&\multicolumn{2}{c}{$S(t+t_{m}^{*}| \tildeT_{l_1} = t_{l_1}, \cdots, \tildeT_{l_m}=t_{l_m})$}\tabularnewline
\cmidrule(lr){11-13}\cmidrule(lr){14-15}
&&&&&&&&(years)&(years)&(0.025)&(0.5)&(0.975)&$t=1$ year& $t=2$ years\tabularnewline
\hline
\multicolumn{10}{l}{Patients in the Framingham heart data }\tabularnewline
457803&6945&6945&7260&6945&-&6595&7260&20.84&21.6&19.95&21.3&23.84&0.64&0.38\tabularnewline
843035&1770&1770&-&1770&3975&2767&-&13.73&15.08&11.1&14.44&22.09&0.87&0.72\tabularnewline
905672&3948&3948&6307&3948&-&2948&-&21.64&20.27&17.52&20.02&23.75&0.83&0.64\tabularnewline
909239&5216&3241&-&3241&-&3663&-&17.97&18.21&14.44&17.7&23.6&0.87&0.71\tabularnewline
954821&-&7293&7293&6598&6598&4206&7293&21.62&21.63&20.02&21.37&23.84&0.59&0.35\tabularnewline
1401968&6895&6289&6289&6289&-&1470&6289&21.36&21.32&19.02&21.05&23.9&0.79&0.57\tabularnewline
1641185&-&7267&-&3863&3863&3598&-&20.74&21.65&19.95&21.4&23.9&0.65&0.41\tabularnewline
1757873&3941&3941&5118&-&-&4326&5118&19.27&17.5&14.1&16.93&23.08&0.84&0.67\tabularnewline
2511641&5247&2231&2231&2231&-&3634&2231&16.85&16.97&14.44&16.46&22.28&0.75&0.53\tabularnewline
2754801&1882&1882&6753&-&-&3626&6753&21.05&20.91&18.64&20.75&23.81&0.77&0.55\tabularnewline
2964471&3973&3973&7916&3973&-&5380&-&23.6&22.86&21.75&22.8&23.95&0.55&0.14\tabularnewline
3053543&-&2493&2493&2493&6130&5614&2493&16.86&18.55&16.83&18.13&22.58&0.6&0.36\tabularnewline
4812868&-&2243&-&2243&3058&1511&-&10.52&13.84&8.58&13.36&21.96&0.9&0.78\tabularnewline
5650728&5268&5268&6622&5268&-&3643&6622&20.85&20.61&18.27&20.38&23.76&0.79&0.55\tabularnewline
6317475&-&8256&8256&8256&-&2890&-&23.84&23.35&22.66&23.34&23.96&0.35&0\tabularnewline
6318099&497&406&406&406&-&2919&406&16.07&11.73&8.16&10.81&19.11&0.81&0.66\tabularnewline
7048683&1688&1596&1596&1596&-&1486&1596&12.36&10.07&4.81&9.26&19.22&0.86&0.76\tabularnewline
7135704&-&5602&5602&5602&-&826&-&21.28&19.6&15.6&19.49&23.76&0.9&0.76\tabularnewline
7138958&4594&3839&-&3839&-&2044&-&19.1&17.91&12.95&17.55&23.64&0.92&0.81\tabularnewline
7222607&-&6232&6232&6232&-&651&-&20.56&20.86&17.41&20.86&23.9&0.91&0.79\tabularnewline
7943235&3935&3935&6452&3935&-&4186&-&19.02&20.32&17.86&20.02&23.75&0.79&0.58\tabularnewline
8265511&5774&4397&4397&4397&7484&3632&4397&21.79&21.83&20.59&21.55&23.84&0.51&0.28\tabularnewline
8659129&3494&3494&3584&3494&-&2205&3584&19.17&14.57&9.93&13.98&21.96&0.83&0.74\tabularnewline
8755719&4109&4109&4201&2752&2752&-&4201&17.89&16.16&11.66&15.49&22.73&0.86&0.75\tabularnewline
9030568&1096&1096&5094&1096&-&2150&-&17.23&17.92&14.08&17.38&23.34&0.87&0.73\tabularnewline
9661647&1117&1117&6233&1117&-&2952&6233&17.15&19.99&17.22&19.75&23.67&0.83&0.64\tabularnewline
9674054&4368&4276&4276&4276&-&3647&4276&15.83&15.97&12.24&15.4&22.43&0.84&0.69\tabularnewline
9967157&3273&3273&6662&3273&-&-&6662&20.17&21.04&18.42&20.94&23.9&0.85&0.66\tabularnewline
\hline
\multicolumn{10}{l}{Synthetic data}\tabularnewline
A1&500&-&-&-&-&-&-&-&18.97&3.95&22.98&24.02&0.99&0.99\tabularnewline
A2&500&500&-&-&-&-&-&-&12.7&2.27&12.83&23.21&0.97&0.96\tabularnewline
A3&500&500&900&-&-&-&-&-&10.43&3.22&9.78&20.91&0.98&0.91\tabularnewline
A4&500&500&900&1000&-&-&-&-&9.53&3.75&8.52&19.59&0.98&0.87\tabularnewline
A5&500&500&900&1000&1100&-&-&-&8.34&3.66&7.42&16.83&0.89&0.78\tabularnewline
A6&500&500&900&1000&1100&1200&-&-&8.39&3.75&7.33&16.95&0.9&0.75\tabularnewline
A7&500&500&900&1000&1100&1200&1600&-&8.59&4.5&8.11&16.42&0.8&0.67\tabularnewline
A8&-&-&200&-&-&-&-&-&11.95&1.06&11.24&24.02&0.97&0.94\tabularnewline
A9&-&-&200&-&300&-&-&-&8.5&1.06&7.42&19.43&0.95&0.89\tabularnewline
A10&-&-&200&400&300&-&-&-&7.65&1.63&6.78&17.18&0.94&0.89\tabularnewline
A11&-&-&200&400&300&-&500&-&6.86&1.63&5.91&15.49&0.91&0.86\tabularnewline
A12&-&600&200&400&300&-&500&-&6.89&1.96&5.87&14.75&0.91&0.87\tabularnewline
A13&700&600&200&400&300&-&500&-&6.6&1.96&5.87&14.69&0.89&0.79\tabularnewline
A14&700&600&200&400&300&800&500&-&6.89&2.27&5.89&14.75&0.93&0.79\tabularnewline
\bottomrule
\end{tabular}\end{center}
\end{table}

\section{Discussion}
\label{sec:conclusion}

We propose a new modeling approach to learn the association between multiple dependent intermediate events and the terminal event. 
A copula-based framework is particularly introduced for jointly modeling these intermediate and terminal events. The proposed model and association learning algorithm capture the stochastic nature of multiple intermediate events, their partial-ordering in relation to the terminal event, as well as informative censoring due to the occurrence of the terminal event.

The application to a heart disease dataset reveals that 
the premature or delayed onset of intermediate diseases MIFC/ STRK/ CVD/ MI/ CHD/ AP 
are significantly associated with the mortality risks. 
On the other hand, different diseases have different impacts on death, and the onset of hypertension  has   
the least impact on mortality. 
This finding indicates that aggregated recurrent-event models fail to capture such asymmetric dependencies between intermediate and terminal outcomes. Furthermore, it 
emphasizes the importance of leveraging onset information from various intermediate diseases to improve the accuracy of overall survival predictions.  
Our DP approach enhances the accuracy of predicting the terminal event on a personalized level by integrating information on intermediate clinical events.

We model each intermediate event based on its first occurrence, though extensions to different types of recurrent events are possible. In practice, selecting intermediate events shall balance clinical relevance, statistical identifiability, and predictive utility. Events should be chosen based on clinical or biological factors, with enough cases and follow-up time for reliable analysis. Similar events can be grouped together if they share hazard patterns and outcomes, but distinct ones with unique characteristics should be modeled separately.

It is noteworthy that our proposed prediction algorithm differs significantly from the available  
landmarking methods \citep{van2007dynamic,parast2012landmark}, 
which postulated working time-specific models up to the last observed values directly or relied on some partition rules, lacking information regarding the dependency and marginal distributions of event times.   
Our DP algorithm is constructed upon conditional probabilities for death time given some subject-specific 
observable short-term events, which are derived from their joint distribution, 
adapting to dynamics caused by evolving information from multiple intermediate events. 

We acknowledge the proposed method may become increasingly computationally intensive as the sample size grows, the number of events increases, or the censoring rate becomes heavier. 
Computational efficiency could be improved in future work through parallel computing, algorithmic approximations, or optimized and compiled code implementations.

It would be valuable to extend the proposed method by 
incorporating some predictive biomarkers alongside multiple intermediate event times to reduce estimation bias caused by patient heterogeneity and further improve prediction accuracy. 
Methods based upon pseudo-observations of quantities
of interest \citep{andersen2003generalised} 
would be adopted for this purpose, with further details warranting future investigation.

\section*{Acknowledgements}
The authors thank the Co-editor, the associate editor and two anonymous reviewers for their constructive feedback and suggestion that significantly improved the quality of this paper. 
This work was supported by the HIT Research Start-up Fund ( Grant No. AUGA5710010426) , the Singapore Ministry of Education Academic Research Fund Tier 1 Grant( RG105/24) and Tier 2 Grant( MOE-T2EP20121-0004) .

\section*{Data Availability}
The Framingham heart study data used in this paper is accessible through the R package riskCommunicator 
at \url{https://search.r-project.org/CRAN/refmans/riskCommunicator/html/framingham.html}.

\bibliography{ref}
\bibliographystyle{apalike}

\section*{Supporting Information}
The supplementary material contains our algorithms for implementation, estimation for $\theta_k$ \blue{and $S_k$}, technical proof of Theorem 1 and regularity conditions referenced in Section 3.2, 
additional simulation studies referenced in Section 5.1, 
as well as additional numerical results referenced in Section 5.2. 

\clearpage

\appendix
\renewcommand{\theequation}{A.\arabic{equation}} 
\renewcommand{\thetable}{S.\arabic{table}}
\renewcommand{\thefigure}{S.\arabic{figure}}
\setcounter{table}{0}
\setcounter{figure}{0}
\setcounter{equation}{0}
\begin{center}
\Large{\textbf{Supplementary Material for 
``Learning association from multiple intermediate events for dynamic prediction of survival: an application to  cardiovascular disease prognosis"}}
\end{center}

\section{The proposed algorithms}
\quad

{
\linespread{1}
\begin{algorithm}[h]
\DontPrintSemicolon
\footnotesize
\caption{Marginal and association analyses}
\label{alg:asso}
\KwInput{$\{T_{i1},\cdots,T_{iK}, \delta_{i1},\cdots,\delta_{iK}, Y_i,\tildedelta_i:i=1,\cdots,n\}$}
\KwOutput{$\widehat{S}_D$, $\widehat{\theta}_k, \widehat{S}_k$ ($k=1,\cdots,K$) and $\widehat{\alpha}$}
Compute Kaplan-Meier estimate $\widehat{S}_D$ based on data $\{Y_i,\tildedelta_i:i=1,\cdots,n\}$.\\
\For{$k =1$  \KwTo $K$}{
     Solve $U_k(\theta_k)=0$, get $\widehat{\theta}_k$. 
     \tcp*{see Appendix B}
\tcc{In $U_k(\cdot)$, the copula family shall be pre-specified.}
Solve \eqref{eq:JFKC1} 
with plugging $\widehat{\theta}_k$, get $\widehat{S}_k$.\\
Compute $\widehat{G}_k(t_k;t) \gets H_2\{\widehat{S}_k(t_k),\widehat{S}_{D}(t);\widehat{\theta}_k\}$.
     }
    Compute $\hatalpha\leftarrow\arg\max\limits_{\alpha} \log L(\alpha,\widehat{\mathbf{G}},\widehat{S}_D)$.  \tcp*{see Section 3 in our main manuscript} 
\end{algorithm}
}

{
\linespread{1}
\begin{algorithm}[!ht]\label{alg:pred surv}
\DontPrintSemicolon
\footnotesize
\caption{Survival prediction algorithm}
\KwInput{$\{T_{i1},\cdots,T_{iK}, \delta_{i1},\cdots,\delta_{iK}, Y_i,\tildedelta_i:i=1,\cdots,n\}$}
\KwData{A test patient with exact intermediate event times $(t_{l_1},\cdots,t_{l_m})$}
\KwOutput{$\widehat{S}^{*}(t| \tildeT_{l_1} = t_{l_1}, \cdots, \tildeT_{{l_m}} = t_{{l_m}})$ for $t_m^{*}<t<t_U$}
Run Algorithm \ref{alg:asso} on $\{T_{i1},\cdots,T_{iK}, \delta_{i1},\cdots,\delta_{iK}, Y_i,\tildedelta_i:i=1,\cdots,n\}$ to get $\widehat{S}_D$, $\widehat{\theta}_k, \widehat{S}_k$ ($k=1,\cdots,K$) and $\widehat{\alpha}$.\\
\eIf{$m=0$}{
    $\widehat{S}^{*} \gets \widehat{S}_D(t)$;
}{
    \eIf{$m=1$}
    {
        $\widehat{S}^{*} \gets H_1\{\widehat{S}_{l_1}(t_{l_1}),\widehat{S}_{D}(t);\widehat{\theta}_k\}/H_1\{\widehat{S}_{l_1}(t_{l_1}),\widehat{S}_{D}(t_{l_1});\widehat{\theta}_k\}$;
    }{
    \For{$y>t_m^{*}$}{\For{$k=l_1,\cdots,l_m$}{$\widehat{G}_k(t_k;y) \gets H_2\{\widehat{S}_k(t_k),\widehat{S}_{D}(y);\widehat{\theta}_k\}$;
    \\$\widehat{G}'_k(t_k;y) \gets H_{12}\{\widehat{S}_k(t_k),\widehat{S}_{D}(y);\widehat{\theta}_k\}\widehat{S}'_k(t_k)$;\\
    \tcc{$\widehat{S}'_k$ is the numerical differentiation of $\widehat{S}_k$.}
}}
     \For{$y>t_m^{*}$}{
$ Q_m(y)\gets (-1)^m\psi^{(m)}_{\widehat{\alpha}}\left\{\sum\limits_{k=l_1,\cdots,l_m}\phi_{\widehat{\alpha}}(\widehat{G}_k(t_k;y))\right\} \prod\limits_{k=l_1,\cdots,l_m}\phi'_{\alpha}(\widehat{G}_k(t_k;y))\widehat{G}'_k(t_k;y)$;
    }
     $\widehat{S}^{*}\gets [\int_t^{\infty}Q_m(y) d\widehat{S}_{D}(y)]/[\int_{t_m^{*} }^{\infty}Q_m(y) d\widehat{S}_{D}(y)]$.
    }

}
Output $\widehat{S}^{*}$.
\end{algorithm}
}

\clearpage
\section{Estimators of $\theta_{k0}$ and $S_{k0}$}
If a Clayton copula structure is assumed for the joint model (1) 
between $\tilde T_k$ and $D$, the association parameter $\theta_k$ can be estimated using the concordance estimator proposed by \cite{fine2001semi}. 
For a more general copula, a unified estimator, denoted by $\widehat{\theta}_k$, can be developed following the framework 
of \cite{lakhal2008}, 
providing a consistent and asymptotically normal estimator of $\theta_k$. Specifically, let $\widetilde{X}_{ij}^{(k)} = \min(\tildeT_i^{(k)},\tildeT_j^{(k)})$, $\widetilde{Y}_{ij} = \min (D_i,D_j)$, $\widetilde{C}_{ij} = \min(C_i,C_j)$, $X_{ij}^{(k)} = \min(T_i^{(k)},T_j^{(k)})$, and $Y_{ij} = \min (Y_i, Y_j)$.  
$I\{ (\tildeT_i^{(k)}-\tildeT_j^{(k)})(D_i-D_j)>0\}$ is computable only if $\widetilde{X}_{ij}^{(k)}\leq \widetilde{Y}_{ij}\leq \widetilde{C}_{ij}$. 
Besides, $X_{ij}^{(k)}=\widetilde{X}_{ij}^{(k)}$ and $Y_{ij} = \widetilde{Y}_{ij}$ when $\widetilde{X}_{ij}^{(k)}\leq \widetilde{Y}_{ij}\leq \widetilde{C}_{ij}$ is true. Accordingly, the estimating equation for $\theta_k$ can be constructed as
\begin{equation}\label{eq:theta_nonparametric_estimating}
\begin{split}
U_k(\theta_k)&=\left(\begin{array}{c}n  \\ 2 \end{array}\right)^{-1}
\sum\limits_{i<j} w(X_{ij}^{(k)}, Y_{ij}) I(\widetilde{X}_{ij}^{(k)}\leq \widetilde{Y}_{ij}\leq \widetilde{C}_{ij}) \\
&\times \left[
I\{(T_i^{(k)}-T_j^{(k)})(Y_i-Y_j)>0 \} - \frac{\gamma_{\theta_k}\{s(X_{ij}^{(k)}, Y_{ij}) \}}{\gamma_{\theta_k}\{s(X_{ij}^{(k)}, Y_{ij}) \}+1}\right] = 0,
\end{split}
\end{equation}
where 
\begin{equation*}
s(x,y) = \frac{\sum\limits_{i=1}^n I\{T_i^{(k)}>x,Y_i>y\}}{n\widehat{S}_C(y)}, 0\leq x\leq y,
\end{equation*}
$\widehat{S}_C$ is the Kaplan-Meier (KM) estimator of the survival function of censoring variable $C$ based on observable data $\{(Y_i, 1-\delta_i^{(2)}):i=1,\cdots,n\}$, $\gamma_{\theta}\{s(t_1,t_2)\} = -s(t_1,t_2)\frac{\phi^{''}_{\theta}(s(t_1,t_2))}{\phi^{'}_{\theta}(S(t_1,t_2))}$, and $ \phi_{\theta}$ is the Archimedean generator function. 
As suggested by \cite{fine2001semi}, the weight function $w$ in \eqref{eq:theta_nonparametric_estimating} can be $w=1$ or
\begin{equation}\label{eq:theta_nonparametric_estimating_weight}
 w^{-1}(x, y;a,b) = n^{-1}\sum\limits_{i=1}^n I\{ T_i^{(k)}\geq \min(a,x), Y_i\geq \min(b,y)\}
\end{equation}
with positive constants $a$ and $b$ chosen to dampen $w(x, y) $ for large $x$ and $y$. The solution to \eqref{eq:theta_nonparametric_estimating}, denoted by $\hat\theta$ was shown by \cite{lakhal2008} to have consistency and asymptotical normality.

To estimate the marginal survival function $S_k$ in model (1) 
under an arbitrary Archimedean copula, we adopt a self-consistency estimation approach similar to that of \cite{jiang2005pseudo} developed for semi-competing risks. 
An alternative closed-form estimator for the marginal survival function was proposed by \cite{lakhal2008} based on the copula-graphic method. 
As demonstrated in their study, both estimators have comparable performance and outperform the simple plug-in estimator of  \cite{fine2001semi}.
Jiang et al's method builds on 
\cite{efron1967two}'s self-consistency equation used in deriving the Kaplan-Meier (KM) estimator, and possesses desirable 
asymptotic properties. 
In particular, the 
estimators of $S_k$ are obtained by solving the following self-consistency estimating equations: 
\begin{equation}\label{eq:JFKC1}
\begin{split}
S_k(t) &= \frac{1}{n}\sum\limits_{i=1}^n I(T_{ik}> t) +\frac{1}{n}\sum\limits_{T_{ik}\leq t} (1-\delta_i^{(1)})(1-\delta_i^{(2)})\frac{H(S_k(t),S_D(Y_i);\theta_k)}{H(S_k(T_{ik}),S_D(Y_i);\theta_k)}\\
&+\frac{1}{n}\sum\limits_{T_{ik}\leq t} (1-\delta_i^{(1)})\delta_i^{(2)}\frac{H_2(S_k(t),S_D(Y_i);\theta_k)}{H_2(S_k(T_{ik}),S_D(Y_i);\theta_k)}
\end{split}   
\end{equation}
for $k=1,\cdots,K$.
Specifically, we substitute $\widehat{S}_D$, $\widehat{\theta}_k$ and the initial guesses of $S_k(t)$ into the right-hand side of \eqref{eq:JFKC1} to produce updated estimates, and then 
iterate the process until convergence is achieved.  
We denote the resulting pseudo self-consistent estimators by $\widehat{S}_k(t) $.

\section{Technical proofs}
The proofs of theorems in the main text are based on the empirical processes, Glivenko-Cantelli theorem \citep{vandervaart1996}, the functional delta method and calculus of differentiable statistical functions in \cite{mises1947asymptotic}. 
Before proving asymptotic properties of $\widehat{\alpha}$, we first introduce some notations. Let $\calP_n$ and $\calP_0$ be the empirical measure and the true underlying measure, respectively. Write
\begin{equation*}
\begin{split}
&l_n(\alpha,\mathbf{G}, S_D) = n^{-1}\log L_n(\alpha,\mathbf{G}, S_D)\\
& = \calP_n\left[\widetilde{\delta}\log \psi_{\alpha}^{(d)}\left\{ \sum\limits_{k=1}^K\phi_{\alpha}(G_k(T_k; Y))\right\}\right]
 + \calP_n \left[ \sum\limits_{k=1}^K \widetilde{\delta}\delta_k\log \phi'_{\alpha}(G_k(T_k; Y))\right]\\
&+ \calP_n \Bigg[ (1-\widetilde{\delta})\log\Bigg\{\sum\limits_{s\in\{k:\delta_k=0\}} (-1)^{1+|s|}\int_Y^{\infty} \prod\limits_{k=1}^K \left[\phi'_{\alpha}(G_k(T_k; t))H_{12}\{S_k(T_k),S_D(t);\theta_k\}\right]^{\delta_k}dS_D(t) \\
&\times \int_0^{\infty}(-x)^{d}\bigg\{\exp\bigg[-x\sum\limits_{k=1}^K\left(\phi_{\alpha}(G_k(T_k;t)) \right)^{\delta_k}\left(\phi_{\alpha}(G_k(t;t)) \right)^{I\{k\notin s,\delta_k=0\}}\bigg]\\
&\times \prod\limits_{k\in s}\Big(\exp\left[-x \phi_{\alpha}(G_k(t;t))\right]-\exp\left[-x \phi_{\alpha}(G_k(Y_i;t))\right]\Big)\bigg\}dF_{\alpha}(x)\Bigg\}\Bigg]\\
 &: = \calP_n \{\pi^{(1)}+\pi^{(2)}+\pi^{(3)})(T_1,\cdots,T_K,\delta_1,\cdots,\delta_K,Y,\widetilde{\delta},d;\alpha,\mathbf{G}, S_D)\},
\end{split}
\end{equation*}
in which some terms irrelevant to $\alpha$ are dropped and $(T_1,\cdots,T_K,\delta_1,\cdots,\delta_K,Y,\widetilde{\delta},d)$ are random counterparts of  $(T_{i1},\cdots,T_{iK},\delta_{i1},\cdots,\delta_{iK},Y_i,\widetilde{\delta}_i,d_i)$. We write $\pi = \pi^{(1)}+\pi^{(2)}+\pi^{(3)}$ and $\ell (\alpha,\mathbf{G}, S_D)= \calP_0 \{\pi(T_1,\cdots,T_K,\delta_1,\cdots,\delta_K,Y,\widetilde{\delta},d;\alpha,\mathbf{G}, S_D)\}$. Let $u_{\alpha}(\alpha,\mathbf{G},S_D) = \partial \pi(\alpha,\mathbf{G},S_D)/\partial\alpha$, $v_{\alpha}(\alpha,\mathbf{G},S_D) = \partial^2 \pi(\alpha,\mathbf{G},S_D)/\partial\alpha^2$, where $\pi(\alpha,\mathbf{G},S_D)$, $u_{\alpha}(\alpha,\mathbf{G},S_D)$ and $v_{\alpha}(\alpha,\mathbf{G},S_D)$ are shorthand versions of 
$\pi(T_1,\cdots,T_K,$ $\delta_1,\cdots,\delta_K,Y,\widetilde{\delta},d;\alpha,\mathbf{G},S_D)$, $u_{\alpha}(T_1,\cdots,T_K,\delta_1,\cdots,\delta_K,Y,\widetilde{\delta},d;\alpha,\mathbf{G},S_D)$ and $v_{\alpha}(T_1,\cdots,T_K,$ $\delta_1,\cdots,\delta_K,Y,\widetilde{\delta},d;\alpha,\mathbf{G},S_D)$.  The score function of $\alpha$ is $U_n(\alpha,\mathbf{G},S_D) = \calP_nu_{\alpha}(\alpha,\mathbf{G},S_D) =0$ and the pseudo score function is $U_n^P(\alpha,\widehat{\mathbf{G}},\widehat{S}_D)=\calP_nu_{\alpha}(\alpha,\widehat{\mathbf{G}},\widehat{S}_D)$.

We then present some regularity conditions for preparation of theoretical justifications.
\begin{cond}\label{cond:Theta}
$K$ is finite. All $\theta_k$ and $\alpha$ belong to a bounded parameter space $\Theta\subset \mathbb{R}$. The generator functions $\phi_{\theta}(x)$ is Lipschitz continuous for $\theta \in \Theta$ with any fixed $0\leq x\leq 1$. 
\end{cond}
\begin{cond}\label{cond:likelihood}
$E\left[\log \frac{L_{i1}(\alpha_1,\mathbf{G}_0, S_{D0})^{\tildedelta_i}L_{i2}(\alpha_1,\mathbf{G}_0, S_{D0})^{1-\tildedelta_i}}{L_{i1}(\alpha_2,\mathbf{G}_0, S_{D0})^{\tildedelta_i}L_{i2}(\alpha_2,\mathbf{G}_0, S_{D0})^{1-\tildedelta_i}}\right]$ exists for all $\alpha_1,\alpha_2\in\Theta$, $i=1,\cdots,n$.
\end{cond}
\begin{cond}\label{cond:phidev}
$\psi_{\alpha}^{(1+K)}(t)$ is bounded for $t\in [0,M_{\psi}]$ for $M_{\psi}<\infty$. And the second derivative of $\phi_{\alpha}$ is bounded for t in an interval $(\epsilon_t, 1)$ for any $\epsilon_t > 0$.
\end{cond}
\begin{cond}\label{cond:tu}
Denote the survival function of the independent censoring variable $C$ by $S_C(\cdot)$. There exist a maximum follow-up $t_U<\infty$ such that $S_{k0}(t_U)S_{D0}(t_U)S_C(t_U)>\epsilon_S>0$ with fixed $\epsilon_S$.
\end{cond}
\begin{cond}\label{cond:IC}
1) For any $k=1,\cdots,K$, the deterministic function $\operatorname{IC}_k(t_k)$ satisfying 
\begin{equation}
\begin{split}
&\frac{\partial}{\partial\epsilon}\calP_0\{u_{\alpha}(\alpha_0,G_{10},\cdots,H_2\{S_{k0}+\epsilon(S_k-S_{k0}),S_{D0};\theta_{k0}\},\cdots,G_
{K0},S_{D0})\}\Big|_{\epsilon=0}\\
&= \int \operatorname{IC}_k (t_k)d(S_k-S_{k0})(t_k)
\end{split}
\end{equation}is uniformly bounded in $t_k\in [0,t_U]$; \\
2) $V_{\theta_k} = \frac{\partial\calP_0\{u_{\alpha}(\alpha_0,G_{10},\cdots,H_2\{S_{k0},S_{D0};\theta_k\},\cdots,G_
{K0},S_{D0})\}}{\partial\theta_k}\Big|_{\theta_k=\theta_{k0}}$ is bounded.\\
3) A function $r(\alpha,\mathbf{G})$ is defined such that $\pi^{(3)}(\alpha,\mathbf{G}, S_D)=(1-\widetilde{\delta})\log\bigg\{\int_Y^{t_U} r(\alpha,\mathbf{G})(t)dS_D(t)\bigg\}$. 
$R(\alpha,\mathbf{G}, S_{D})(t) = \frac{r(\alpha,\mathbf{G})(t)}{\int_Y^{t_U} r(\alpha,\mathbf{G})(x)dS_{D}(x)}$ and its derivative with respect to $\alpha$ are uniformly bounded in $\alpha\in \Theta$ and $S_k\in\mathcal{S}_k$ and $S_D\in\mathcal{S}_D$, where $\mathcal{S}_k$ and $\mathcal{S}_D$ are nonparametric family of variable $\tildeT_k$ and $D$.
\end{cond}
\begin{cond}\label{cond:p0}
$-\calP_0 v_{\alpha}(\alpha_0,\mathbf{G}_0,S_{D0})>\epsilon_v>0$ with fixed $\epsilon_v$.
\end{cond}
Condition \ref{cond:Theta} is necessary for proving Lemma 1 and establishing the consistency of the second-stage estimator. Condition \ref{cond:likelihood} is required for proving the uniqueness of maxima in Lemma 4. Condition \ref{cond:phidev} is a precondition for applying the permanence of the Donsker properties to the closure of the convex hull, similar conditions are imposed in \cite{lakhal2008}. Condition \ref{cond:tu} is also imposed in \cite{lakhal2008} and \cite{jiang2005pseudo} to ensure the asymptotic properties of $\widehat{\theta}$ and weak convergence of $\widehat{S}_k(t)$ in $t\in [0,t_U]$. Conditions \ref{cond:IC} and \ref{cond:p0} are necessary for applying the functional and finite-dimensional delta methods, and subsequently for proving the asymptotic normality of the maximum pseudo-likelihood estimator.

\subsection{Proof of Theorem 1 (i)}

\begin{lem}\label{lem:donsker_Sk}
Under Condition \ref{cond:Theta}, the class $\mathcal{F} = \{H_2(S_k(t_k),S_{D}(t);\theta_k): \theta_k\in\Theta, 0\leq t_k\leq t<\infty, S_k\in\mathcal{S}_k, S_D\in \mathcal{S}_D\}\}$ is $\mathcal{P}_0$-Glivenko-Cantelli, where $\mathcal{S}_k$ and $\mathcal{S}_D$ are nonparametric families of $\tildeT_k$ and $D$, respectively.
\end{lem}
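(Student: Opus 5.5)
We will show that $\mathcal{F}$ is $\mathcal{P}_0$-Glivenko--Cantelli by viewing it as the image of simpler classes under a fixed, well-behaved map. For each $\theta\in\Theta$ we have
\[
H_2(u,v;\theta)=\psi_\theta'\{\phi_\theta(u)+\phi_\theta(v)\}\,\phi_\theta'(v),
\]
so every member of $\mathcal{F}$ has the form $(t_k,t)\mapsto \Phi\{S_k(t_k),S_D(t),\theta_k\}$ with $\Phi(u,v,\theta)=H_2(u,v;\theta)$. The approach has three parts: bound the complexity of the input classes, show that $\Phi$ transfers that control, and then invoke a preservation theorem.

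\emph{Step 1: the input classes.} The class $\{t_k\mapsto S_k(t_k):S_k\in\mathcal{S}_k\}$ consists of monotone functions with values in $[0,1]$. By the standard entropy bound for uniformly bounded monotone functions (van der Vaart and Wellner 1996, Thm 2.7.5), its $L_r(Q)$ bracketing entropy is of order $1/\epsilon$. It is therefore $\mathcal{P}_0$-Glivenko--Cantelli, and in fact Donsker. The same argument applies to $\{t\mapsto S_D(t)\}$. The parameter set $\Theta$ is a bounded subset of $\mathbb{R}$ by Condition \ref{cond:Theta}, so it can be covered by $O(1/\epsilon)$ points.

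\emph{Step 2: continuity of $\Phi$.} Next we combine the two classes through $\Phi$. We will use the preservation theorem for continuous functions of Glivenko--Cantelli classes (van der Vaart and Wellner, Thm 2.10.1 / 3.1 of van der Vaart--Wellner 2000). That theorem requires $\Phi$ to be continuous and the resulting class to have an integrable envelope. Continuity in $\theta$ comes from the Lipschitz property of $\phi_\theta$ in Condition \ref{cond:Theta}. To get joint continuity, and Lipschitz control in $(u,v)$, we restrict to the region where the survival functions are bounded away from zero. Condition \ref{cond:tu} gives $S_k,S_D\ge\epsilon_S$ on $[0,t_U]$. On $[\epsilon_S,1]$, Condition \ref{cond:phidev} (with Condition \ref{cond:Theta}) bounds $\phi_\theta'$, $\phi_\theta''$ and $\psi_\theta''$ on the compact range $[0,2\phi_\theta(\epsilon_S)]$.

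\emph{Step 3: bracketing and the envelope.} With these bounds, $\Phi$ is Lipschitz in all three arguments with a uniform constant $L$. Hence $\epsilon$-brackets for $S_k$, $S_D$ and $\theta$ map to $L\epsilon$-brackets for $\mathcal{F}$, and the bracketing number of $\mathcal{F}$ is finite for every $\epsilon$. Finite bracketing numbers give the Glivenko--Cantelli property. The envelope is a constant, since $0\le H_2\le 1$ (it is a conditional survival probability), so integrability is automatic.

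\emph{Main obstacle.} The hard part is the boundary behaviour, where $S_k$ or $S_D$ tends to $0$ and $\phi_\theta'$ may blow up. The statement allows $t<\infty$, so this region cannot be ignored. We would handle it by truncation. First, establish the Glivenko--Cantelli property for the class restricted to $[0,t_U]^2$ by the bracketing argument above. Then note that the complement contributes at most $P_0(\text{arguments}>t_U)$, because the functions are bounded by one. The truncation level is pushed out as needed; in the paper's use, all arguments lie in $[0,t_U]$ anyway, so only the restricted class matters. If Lipschitz control fails near $u=1$, where $\phi_\theta\to0$, we would instead rely on uniform continuity of $\Phi$ on the compact set $[\epsilon_S,1]^2\times\bar\Theta$. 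Uniform continuity is enough for the preservation theorem, even without an explicit Lipschitz constant.
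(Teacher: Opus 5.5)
Your Step 1 is the same as the paper's: monotone-class brackets via Theorem 2.7.5 of van der Vaart and Wellner, plus an $\epsilon$-grid on $\Theta$. The gap is in Steps 2--3.

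Transferring brackets through a Lipschitz map needs one Lipschitz constant valid at every value $(S_k(t_k),S_D(t),\theta_k)$ taken by members of $\mathcal{F}$ and by the bracket endpoints. Even the pointwise inclusion of $H_2(S_k(t_k),S_D(t);\theta_k)$ in the transferred bracket uses it.

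\begin{itemize}
\item You get the lower bound from Condition \ref{cond:tu}, but that condition concerns only the true $S_{k0},S_{D0}$. The class $\mathcal{F}$ is indexed by arbitrary members of $\mathcal{S}_k,\mathcal{S}_D$. These can be arbitrarily close to $0$ on $[0,t_U]$, where no uniform constant exists: for Clayton, $\phi'_\theta(v)=-v^{-\theta-1}$, and $H_2$ has no continuous extension to $(0,0)$.
\item The truncation does not rescue this. Under the law of $(\tildeT_k,D)$ that $\mathcal{P}_0$ integrates against in the paper, the region $t>t_U$ has mass $S_{D0}(t_U)>\epsilon_S$ by Condition \ref{cond:tu} itself. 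A trivial bracket there therefore leaves an $L_1$-width bounded away from zero. Pushing the cut-off past $t_U$ instead discards the only lower bound you had.
\item Your fallback to uniform continuity on $[\epsilon_S,1]^2\times\bar\Theta$ also fails. For the Gumbel family used in the paper with $\theta>1$, $H_2(1,v)=1$ for $v<1$ but $H_2(u,1)=0$ for $u<1$, so $H_2$ is discontinuous at $(1,1)$.
\item Conditions \ref{cond:phidev}--\ref{cond:tu} are not hypotheses of the lemma. Condition \ref{cond:phidev} is stated for $\psi_\alpha,\phi_\alpha$, not for the $\theta_k$-generators.
\item The pointwise Lipschitz property of $\phi_\theta(x)$ in Condition \ref{cond:Theta} does not give Lipschitz control of $\psi'_\theta,\phi'_\theta$ in $\theta$.
\end{itemize}

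What your argument does establish needs two added assumptions: Condition \ref{cond:phidev} imposed on the $\theta_k$-generators uniformly over $\bar\Theta$ (this excludes Gumbel with $\theta>1$), and $H_2$ Lipschitz in $\theta$ uniformly on $[\epsilon_S/2,1]^2$. Under these, you get the Glivenko--Cantelli property of the restricted class $\{H_2(S_k(t_k),S_D(t);\theta_k)I(t\le t_U):\inf_{[0,t_U]}S_k\ge\epsilon_S/2,\ \inf_{[0,t_U]}S_D\ge\epsilon_S/2\}$. That is all Lemmas \ref{lem: law of large numbers ln}--\ref{lem:l0} and Theorem 1 actually use. The reason is that $t_U$ is the maximal follow-up, and the uniformly consistent $\widehat{S}_k,\widehat{S}_D$ fall in this set with probability tending to one. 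You should state and prove that version rather than claim the unrestricted lemma.

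By comparison, the paper builds its brackets from the monotonicity of $\psi'_\theta$ and $\phi'_\theta$, in inequality \eqref{eq:LUbound}. Inclusion in the $(S_k,S_D)$-directions then holds for every member of the families, with no lower bound or Lipschitz constant. The boundary problem surfaces only in its width bound $\eqref{eq:diff_UL1}\le c_0\epsilon$, which is justified just by ``continuity'' and ignores the blow-up of $\phi'_\theta$ near $0$. So the paper's proof has the same weak spot, unacknowledged. You locate it correctly, but your proposed remedy does not close it for the class as stated.
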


\begin{proof}
According to Theorem 2.4.1 in \cite{vandervaart1996}, to prove $\mathcal{F}$ to be $\mathcal{P}_0$ -Glivenko-Cantelli, it shall be shown that the bracketing number $N_{[~]}(\epsilon,\mathcal{F}, L_1(\mathcal{P}_0))$, which is the minimum number of $\epsilon$-brackets needed to cover $\mathcal{F}$, is finite for any nontrivial $\epsilon>0$. 

Let $\mathcal{F}_1 = \{S_k(t_k)\}$ and $\mathcal{F}_2 = \{S_D(t)\}$. From Theorem 2.7.5 in \cite{vandervaart1996}, the number of brackets $[L_i^{(k)},U_i^{(k)}]$ such that $L_i^{(k)}(t_k)\leq S_k(t_k)\leq U_i^{(k)}(t_k)$ and $\int  |U_i^{(k)}(t_k)- L_i^{(k)}(t_k)|dS_{k0}(t_k)\leq \epsilon$ satisfies $\log N_{[~]}(\epsilon,\mathcal{F}_1, L_1(\mathcal{P}_0))\leq B_1/\epsilon$  for a constant $0<B_1<\infty$. Similarly, the number of brackets $[L_j^{(D)},U_j^{(D)}]$ such that $L_j^{(D)}(t)\leq S_D(t)\leq U_j^{(D)}(t)$ and $\int  |U_j^{(D)}(t)- L_j^{(D)}(t)|dS_{D0}(t)\leq \epsilon$ satisfies $\log N_{[~]}(\epsilon,\mathcal{F}_2, L_1(\mathcal{P}_0))\leq B_2/\epsilon$ for a constant $0<B_2<\infty$. Because $\Theta$ is bounded, we partition $\Theta$ by a set of intervals $[l_m,u_m)$ such that $|u_m-l_m|<\epsilon$, and consequently the number of such intervals is bounded by $B_3/\epsilon$ with a constant $0<B_3<\infty$. 

Now we construct brackets for $\mathcal{F}$. Note that under the Archimedean copula structure, $H_2(S_k(t_k),S_{D}(t);\theta_k)=\psi'_{\theta_k}\left\{\phi_{\theta_k}(S_k(t_k))+\phi_{\theta_k}(S_{D}(t))\right\}\phi'_{\theta_k}(S_{D}(t))$. By construction, the generator functions $\phi_{\theta_k}(\cdot)$ and $\psi_{\theta_k}$ are continuous and strictly decreasing, $\phi'_{\theta_k}(\cdot)$ and $\psi'_{\theta_k}$ are continuous and strictly increasing. Thus, 
\begin{equation*}
\begin{split}
0>&\psi'_{\theta_k}\left\{\phi_{\theta_k}(L_i^{(k)}(t_k))+\phi_{\theta_k}(L_j^{(D)}(t))\right\}
\geq \psi'_{\theta_k}\left\{\phi_{\theta_k}(S_k(t_k))+\phi_{\theta_k}(S_D(t))\right\}\\
&\geq \psi'_{\theta_k}\left\{\phi_{\theta_k}(U_i^{(k)}(t_k))+ \phi_{\theta_k}(U_j^{(D)}(t))\right\}.
\end{split}
\end{equation*} 
and $ \phi'_{\theta_k}(L_j^{(D)}(t))\leq \phi'_{\theta_k}(S_D(t))\leq \phi'_{\theta_k}(U_j^{(D)}(t))<0$. 
It follows that
\begin{equation}\label{eq:LUbound}
\begin{split}
0<&\psi'_{\theta_k}\left\{\phi_{\theta_k}(L_i^{(k)}(t_k))+\phi_{\theta_k}(L_j^{(D)}(t))\right\}\phi'_{\theta_k}(U_j^{(D)}(t))\\
&\leq \psi'_{\theta_k}\left\{\phi_{\theta_k}(S_k(t_k))+\phi_{\theta_k}(S_D(t))\right\}\phi'_{\theta_k}(S_D(t))\\
&\leq \psi'_{\theta_k}\left\{\phi_{\theta_k}(U_i^{(k)}(t_k))+ \phi_{\theta_k}(U_j^{(D)}(t))\right\}\phi'_{\theta_k}(L_j^{(D)}(t)).
\end{split}
\end{equation} 
Define 
\begin{equation*}
\begin{split}
\widetilde{L}_{ijm}(t_k,t) &= \min\bigg\{\psi'_{l_m}\left\{\phi_{l_m}(L_i^{(k)}(t_k))+\phi_{l_m}(L_j^{(D)}(t))\right\}\phi'_{l_m}(U_j^{(D)}(t)),\\
&\quad\quad\quad\quad  \psi'_{u_m}\left\{\phi_{u_m}(L_i^{(k)}(t_k))+\phi_{u_m}(L_j^{(D)}(t))\right\}\phi'_{u_m}(U_j^{(D)}(t))\bigg\},
\end{split}
\end{equation*} 
\begin{equation*}
\begin{split}
\widetilde{U}_{ijm}(t_k,t) &=  \min\bigg\{\psi'_{l_m}\left\{\phi_{l_m}(U_i^{(k)}(t_k))+ \phi_{l_m}(U_j^{(D)}(t))\right\}\phi'_{l_m}(L_j^{(D)}(t))\\
&\quad\quad\quad\quad  \psi'_{u_m}\left\{\phi_{u_m}(U_i^{(k)}(t_k))+ \phi_{u_m}(U_j^{(D)}(t))\right\}\phi'_{u_m}(L_j^{(D)}(t))
\bigg\}.
\end{split}
\end{equation*} 
Clearly, $\widetilde{L}_{ijm}(t_k,t) \leq H_2(S_k(t_k),S_{D}(t);\theta_k)\leq \widetilde{U}_{ijm}(t_k,t)$. And we see that
\begin{align}
&\mathcal{P}_0|\widetilde{U}_{ijm}-\widetilde{L}_{ijm}|\nonumber\\
&\leq \int_0^{\infty}\int_{0}^t \Bigg| \psi'_{l_m}\left\{\phi_{l_m}(U_i^{(k)}(t_k))+ \phi_{l_m}(U_j^{(D)}(t))\right\}\phi'_{l_m}(L_j^{(D)}(t))-\nonumber\\
& \psi'_{l_m}\left\{\phi_{l_m}(L_i^{(k)}(t_k))+\phi_{l_m}(L_j^{(D)}(t))\right\}\phi'_{l_m}(U_j^{(D)}(t))\Bigg|dH(S_{k0}(t_k),S_{D0}(t);\theta_{k0})\label{eq:diff_UL1}\\
&+ \int_0^{\infty}\int_{0}^t \Bigg|\psi'_{u_m}\left\{\phi_{u_m}(U_i^{(k)}(t_k))+ \phi_{u_m}(U_j^{(D)}(t))\right\}\phi'_{u_m}(L_j^{(D)}(t))-\nonumber\\
&\psi'_{u_m}\left\{\phi_{u_m}(L_i^{(k)}(t_k))+\phi_{u_m}(L_j^{(D)}(t))\right\}\phi'_{u_m}(U_j^{(D)}(t))
\Bigg|dH(S_{k0}(t_k),S_{D0}(t);\theta_{k0})\label{eq:diff_UL2}\\
&+ \int_0^{\infty}\int_{0}^t \Bigg|
\psi'_{u_m}\left\{\phi_{u_m}(U_i^{(k)}(t_k))+ \phi_{u_m}(U_j^{(D)}(t))\right\}\phi'_{u_m}(L_j^{(D)}(t))-\nonumber\\
&\psi'_{l_m}\left\{\phi_{l_m}(L_i^{(k)}(t_k))+\phi_{l_m}(L_j^{(D)}(t))\right\}\phi'_{l_m}(U_j^{(D)}(t))
\Bigg|dH(S_{k0}(t_k),S_{D0}(t);\theta_{k0})\label{eq:diff_UL3}\\
&+ \int_0^{\infty}\int_{0}^t \Bigg|
\psi'_{l_m}\left\{\phi_{l_m}(U_i^{(k)}(t_k))+ \phi_{l_m}(U_j^{(D)}(t))\right\}\phi'_{l_m}(L_j^{(D)}(t))-\nonumber\\
&\psi'_{u_m}\left\{\phi_{u_m}(L_i^{(k)}(t_k))+\phi_{u_m}(L_j^{(D)}(t))\right\}\phi'_{u_m}(U_j^{(D)}(t))
\Bigg|dH(S_{k0}(t_k),S_{D0}(t);\theta_{k0}).
\label{eq:diff_UL4}
\end{align}
Since $[L_i^{(k)},U_i^{(k)}]$ and $[L_j^{(D)},U_j^{(D)}]$ are brackets for $\mathcal{F}_1$ and $\mathcal{F}_2$, respectively, from the continuity of generator functions and their derivatives, there exist a constant $0<c_0<\infty$ such that $\eqref{eq:diff_UL1}\leq c_0\epsilon$ and $\eqref{eq:diff_UL2}\leq c_0\epsilon$. Furthermore, from \eqref{eq:LUbound}, we obtain
\begin{equation*}
\begin{split}
\eqref{eq:diff_UL3}
&\leq 2c_o\epsilon+ \int_0^{\infty}\int_{0}^t \Bigg|H_2(S_k(t_k),S_D(t);u_m)-H_2(S_k(t_k),S_D(t);l_m)
\Bigg|dH(S_{k0}(t_k),S_{D0}(t);\theta_{k0})\\
&\leq (2c_0+c_1)\epsilon,
\end{split}
\end{equation*}
for some constant $c_1>0$ derived from Condition \ref{cond:Theta}. Similarly, we have $\eqref{eq:diff_UL4}\leq (2c_0+c_1)\epsilon$. 
Hence we have $N_{[~]}((6c_0+2c_1)\epsilon,\mathcal{F}, L_1(\mathcal{P}_0))\leq \exp(B_1/\epsilon+B_2/\epsilon)B_3/\epsilon$, i.e.  $N_{[~]}(\epsilon,\mathcal{F}, L_1(\mathcal{P}_0))\leq \exp([(B_1+B_2)(6c_0+2c_1)]/\epsilon)B_3(6c_0+2c_1)/\epsilon\leq \exp([(B_1+B_2+B_3)(6c_0+2c_1)]/\epsilon)$. Therefore, $\mathcal{F}$ is $\mathcal{P}_0$ -Glivenko-Cantelli.
\end{proof}

\begin{lem}\label{lem: law of large numbers ln}
Under Condition \ref{cond:Theta}, $\sup\limits_{\alpha,\mathbf{G},S_D}|l_n(\alpha,\mathbf{G}, S_D)-\ell(\alpha,\mathbf{G}, S_D)|\xrightarrow{p}0$.
\end{lem}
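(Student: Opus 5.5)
The plan is to show that the class of functions $\{\pi(\cdot;\alpha,\mathbf{G},S_D):\alpha\in\Theta,\ \mathbf{G},S_D\}$, with $G_k=H_2\{S_k,S_D;\theta_k\}$ ranging over the class $\mathcal{F}$ of Lemma \ref{lem:donsker_Sk}, is $\calP_0$-Glivenko--Cantelli. The uniform convergence $\sup|l_n-\ell|=\sup|(\calP_n-\calP_0)\pi|\xrightarrow{p}0$ is then exactly the definition. Since $l_n=\calP_n(\pi^{(1)}+\pi^{(2)}+\pi^{(3)})$ and sums of finitely many Glivenko--Cantelli classes (with integrable envelopes) are Glivenko--Cantelli, I will treat the three pieces separately.

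For $\pi^{(1)}$ and $\pi^{(2)}$, I will use the Glivenko--Cantelli preservation theorem of \cite{vandervaart1996}: if $\mathcal{F}_1,\dots,\mathcal{F}_m$ are GC and $\varphi$ is continuous, then $\varphi(\mathcal{F}_1,\dots,\mathcal{F}_m)$ is GC provided it has an integrable envelope. The inputs are the following.
\begin{itemize}
\item The classes $\{G_k(T_k;Y)\}$, which are GC by Lemma \ref{lem:donsker_Sk}.
\item The indicator classes $\{\tildedelta,\delta_k\}$, which are finite.
\item The parameter $\alpha$, handled by the same $\epsilon$-partition of the bounded set $\Theta$ used in Lemma \ref{lem:donsker_Sk}, together with the Lipschitz continuity in Condition \ref{cond:Theta}.
\end{itemize}
The maps $(\alpha,g_1,\dots,g_K)\mapsto \log\{(-1)^d\psi^{(d)}_\alpha(\sum_k\phi_\alpha(g_k))\}$ and $\log\{-\phi'_\alpha(g)\}$ are continuous. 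By Condition \ref{cond:tu}, the arguments $G_k$ are evaluated only on $[0,t_U]$, where $S_k,S_D$ stay above $\epsilon_S$. The arguments therefore live in a compact set bounded away from $0$. Condition \ref{cond:phidev} then bounds $\psi^{(d)}$ ($d\le K$) and $\phi''$, which gives a bounded envelope and uniform continuity. Equivalently, I can build brackets directly by monotonicity, as in the proof of Lemma \ref{lem:donsker_Sk}: take brackets $[\widetilde L,\widetilde U]$ for each $G_k$, push them through the monotone maps $\phi_\alpha$ and $\psi^{(d)}_\alpha$, and take min/max over the endpoints $l_m,u_m$ of each $\alpha$-cell. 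The bracket sizes are controlled by a Lipschitz constant times $\epsilon$, so the bracketing number remains finite.

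The main obstacle is $\pi^{(3)}$, the censored-death contribution. It is the log of a Stieltjes integral over $t\in[Y,t_U]$ of a product of $G_k(\cdot;t)$-terms and a Laplace-transform integral against $dF_\alpha$. I plan to write it as $(1-\tildedelta)\log\int_Y^{t_U}r(\alpha,\mathbf{G})(t)\,dS_D(t)$, in the notation of Condition \ref{cond:IC}(3), and argue in three steps.
\begin{enumerate}
\item The integrand $r(\alpha,\mathbf{G})(t)$ is a finite sum (over subsets $s$) of continuous functions of $(\alpha,G_k(T_k;t),G_k(t;t),G_k(Y;t))$. Its brackets come from brackets of these $G_k$-evaluations, uniformly in $t\le t_U$. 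The $dF_\alpha$ integral is dominated by $\int x^{d}dF_\alpha=|\psi^{(d)}_\alpha(0)|$, which is bounded by Condition \ref{cond:phidev}.
\item The map $(r,S_D)\mapsto\int_Y^{t_U}r\,dS_D$ is handled by integration by parts. This bounds the difference of integrals by $\|r_1-r_2\|_\infty$ plus the total variation of $r$ times $\|S_{D1}-S_{D2}\|_\infty$. Brackets for the integral are then obtained from sup-norm brackets of $r$ and of the monotone function $S_D$.
\item The logarithm is Lipschitz on the range of the integral, provided the integral is bounded below. This lower bound follows from $S_D(t_U)>\epsilon_S$ and positivity of $r$ on the upper wedge, and is exactly what the boundedness of $R(\alpha,\mathbf{G},S_D)$ in Condition \ref{cond:IC}(3) encodes.
\end{enumerate}
If the full sup over $\mathbf{G},S_D$ is too ambitious, I will restrict to neighbourhoods of $(\mathbf{G}_0,S_{D0})$ containing $(\widehat{\mathbf{G}},\widehat S_D)$ with probability tending to one, which suffices for the consistency argument. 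Combining the three GC classes yields the lemma.
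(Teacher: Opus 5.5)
Your treatment of $\pi^{(1)}$ and $\pi^{(2)}$ through the Glivenko--Cantelli preservation theorem is a legitimate and lighter alternative to the paper. The paper instead proves the stronger Donsker property (bracketing entropy, Lipschitz transformations, sums) and then deduces Glivenko--Cantelli.

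\textbf{The gap.} It lies in your step 2 for $\pi^{(3)}$, together with the ``uniformly in $t\le t_U$'' brackets of step 1. You bound $\int_Y^{t_U}r\,dS_D$ through $\|r_1-r_2\|_\infty$ and $\|S_{D1}-S_{D2}\|_\infty$. That requires finitely many sup-norm brackets for $S_D$. Through $G_k(\cdot;t)=H_2\{S_k(\cdot),S_D(t);\theta_k\}$, it also requires them for $t\mapsto G_k(T_k;t),G_k(t;t),G_k(Y;t)$. Such brackets do not exist.
\begin{itemize}
\item The survival functions $t\mapsto I(t<s)$, $s\in(0,t_U)$, are pairwise at sup-distance one. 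So the nonparametric family is not totally bounded in $\|\cdot\|_\infty$. Theorem 2.7.5 of \cite{vandervaart1996} gives finite bracketing only in $L_r(Q)$.
\item Your fallback is the right way to get the lower bounds in step 3 and the envelopes for $\pi^{(1)},\pi^{(2)}$, since Condition \ref{cond:tu} constrains only $S_{k0},S_{D0}$. It does not rescue the bracketing, however. The functions $S_{D0}(t)-\eta I(t\ge s)$ all lie in the sup-norm ball of radius $\eta$ around $S_{D0}$ and are pairwise $\eta$ apart. Moreover, $\widehat S_D$ is itself a step function with data-dependent jumps.
\item $L_1$ brackets of $S_D$ cannot substitute either. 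The integrator $dS_D$ changes with the index and can put its mass exactly where such a bracket is wide.
\end{itemize}

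\textbf{The missing idea.} The paper views the Stieltjes integral as a mixture over $t$, with mixing measure $-dS_D$ of total mass at most one. The integral then lies in the closed symmetric convex hull of the integrand class $\{I(Y\le t)\,r(\alpha,\mathbf G)(t):t\le t_U,\ \alpha\in\Theta,\ \mathbf G\}$. The paper invokes permanence of the Donsker property under closed convex hulls (Theorem 2.10.3 of \cite{vandervaart1996}), once for the $dF_\alpha$ integral and again for the $dS_D$ integral.

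For your Glivenko--Cantelli purpose you do not even need that theorem: linearity of $\calP_n-\calP_0$ and Fubini give
\[
\sup_{\alpha,\mathbf G,S_D}\Bigl|(\calP_n-\calP_0)\int_Y^{t_U}r(\alpha,\mathbf G)(t)\,dS_D(t)\Bigr|\le\sup_{t\le t_U,\,\alpha,\,\mathbf G}\bigl|(\calP_n-\calP_0)\{I(Y\le t)\,r(\alpha,\mathbf G)(t)\}\bigr|.
\]
For fixed $t$, the quantities $S_D(t)$ and $G_k(t;t)$ are just constants in $[0,1]$. Meanwhile $G_k(T_k;t)$, $G_k(Y;t)$ and $H_{12}\{S_k(T_k),S_D(t);\theta_k\}$ are continuous transforms of monotone functions of the data. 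The right-hand class is therefore Glivenko--Cantelli by exactly the preservation argument you used for $\pi^{(1)}$, and no control of $S_D$ as an integrator is ever needed. Only after this step should the logarithm be handled, by your step 3.
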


\begin{proof}
Let $\mathcal{F}_j = \{\pi^{(j)}(T_1,\cdots,T_K,\delta_1,\cdots,\delta_K,Y,\widetilde{\delta},d;\alpha,\mathbf{G}, S_D)\}$, $j=1,2,3$. Analogue to the proof of Lemma \ref{lem:donsker_Sk}, we can show that  the 
$\int_0^1\sqrt{\log N_{[~]}(\epsilon,\mathcal{F}, L_2(\mathcal{P}_0))}d\epsilon$ is also finite, i.e. $\mathcal{F}$ is $\mathcal{P}_0$-Donsker according to Theorem 2.5.6 in \cite{vandervaart1996}.
Note that $G_k(t_k;t) = H_2\{S_k(t_k),S_{D}(t);\theta_k\}$. By Theorem 2.10.6 and Example 2.10.11 in \cite{vandervaart1996} and the continuity of generator functions,  $\mathcal{F}_1$ and $\mathcal{F}_2$ are $\mathcal{P}_0$-Donsker and thus $\mathcal{P}_0$ -Glivenko-Cantelli. According to the permanence of the Donsker property for the closure of the convex hull in \cite{vandervaart1996}'s Theorem 2.10.3, the class consisting of the inner integrals in $\pi^{(3)}$ is $\mathcal{P}_0$-Donsker since $(-x)^d$ is Lipschitz. Again by this the permanence property,  the class consisting of the double integrals in  $\pi^{(3)}$ is also  $\mathcal{P}_0$-Donsker. Similar to $\mathcal{F}_1$ and $\mathcal{F}_2$, we can show that  $\mathcal{F}_3$ is a Donsker class. By Example  2.10.7 in \cite{vandervaart1996}, $\pi^{(1)}+\pi^{(2)}+\pi^{(3)}$ belongs to a Donsker class, and thus a Glivenko-Cantelli class. Therefore, the result in Lemma \ref{lem: law of large numbers ln} can be obtained using the definition of Glivenko-Cantelli classes.
\end{proof}

\begin{lem}\label{lem:l0}
$\sup\limits_{\alpha\in\Theta}|\ell(\alpha,\widehat{\mathbf{G}}, \widehat{S}_D)-\ell(\alpha,\mathbf{G}_0, S_{D0})|\xrightarrow{p}0$.
\end{lem}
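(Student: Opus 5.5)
The plan is to obtain Lemma \ref{lem:l0} by combining consistency of the first-stage estimators with a uniform (in $\alpha$) continuity property of the population criterion $\ell$ in its nuisance arguments $(\mathbf{G},S_D)$. First, I would record the first-stage results. The Kaplan--Meier estimator satisfies $\sup_{t\le t_U}|\widehat S_D(t)-S_{D0}(t)|\to 0$ in probability \citep{fleming2013counting}. From \cite{lakhal2008} we have $\widehat\theta_k\to\theta_{k0}$, and from \cite{jiang2005pseudo} we have $\sup_{t\le t_U}|\widehat S_k(t)-S_{k0}(t)|\to0$ in probability, for $k=1,\dots,K$; these use Condition \ref{cond:tu}. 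Since $\widehat G_k(t_k;t)=H_2\{\widehat S_k(t_k),\widehat S_D(t);\widehat\theta_k\}$, I would next show that $\sup_{0\le t_k\le t\le t_U}|\widehat G_k(t_k;t)-G_{k0}(t_k;t)|\to0$. The key facts here are that $H_2(u,v;\theta)=\psi'_\theta\{\phi_\theta(u)+\phi_\theta(v)\}\phi'_\theta(v)$ is uniformly continuous on compacts bounded away from the boundary, and that $\phi_\theta$ is Lipschitz in $\theta$ (Condition \ref{cond:Theta}). The same argument, using $H_{12}$, handles $\widehat G'_k$. Condition \ref{cond:tu} keeps every argument inside $[\epsilon_S,1]$, where Condition \ref{cond:phidev} bounds the derivatives of the generators.

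Second, I would bound the difference term by term, writing $\ell=\calP_0(\pi^{(1)}+\pi^{(2)}+\pi^{(3)})$. For $\pi^{(1)}$ and $\pi^{(2)}$, the integrands are $\log|\psi^{(d)}_\alpha(\cdot)|$ and $\log|\phi'_\alpha(\cdot)|$, evaluated at $G_k$ values lying in a compact subset of $(0,1]$. By the mean value theorem and Condition \ref{cond:phidev}, each is Lipschitz in $\mathbf G$ with a constant uniform over $\alpha\in\Theta$, because $\Theta$ is bounded and $\psi^{(d)}_\alpha$ stays away from zero on $[0,M_\psi]$. It follows that $\sup_\alpha|\calP_0\{\pi^{(j)}(\alpha,\widehat{\mathbf G},\widehat S_D)-\pi^{(j)}(\alpha,\mathbf G_0,S_{D0})\}|\le c\max_k\|\widehat G_k-G_{k0}\|_\infty\to0$ for $j=1,2$.

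Third, the main obstacle is $\pi^{(3)}$. This term is a log of a Stieltjes integral against $dS_D$, so it depends on $\widehat S_D$ through the integrating measure and not only through the integrand. I would write $\pi^{(3)}=(1-\widetilde\delta)\log\int_Y^{t_U}r(\alpha,\mathbf G)(t)\,dS_D(t)$ as in Condition \ref{cond:IC}(3), and then split the difference into two parts: $\int_Y^{t_U}\{r(\alpha,\widehat{\mathbf G})-r(\alpha,\mathbf G_0)\}\,d\widehat S_D$ and $\int_Y^{t_U} r(\alpha,\mathbf G_0)\,d(\widehat S_D-S_{D0})$. The first part is controlled by the uniform convergence of $\widehat{\mathbf G}$ together with the total variation of $\widehat S_D$ being at most $1$. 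For the second part, I would integrate by parts; this requires $r(\alpha,\mathbf G_0)(\cdot)$ to have bounded variation on $[0,t_U]$ uniformly in $\alpha$, which I would deduce from the Laplace-transform representation \eqref{eq:Js2} and the smoothness of $G_{k0}$. To pass from the integral to its logarithm, I would use the mean value theorem. The resulting factor is the ratio $R$, which Condition \ref{cond:IC}(3) bounds uniformly in $\alpha$, $S_k$ and $S_D$; this is what keeps the denominator away from zero. Finally, dominated convergence under $\calP_0$ yields the uniform convergence for $\pi^{(3)}$, and combining the three terms proves the lemma.
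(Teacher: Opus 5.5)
Your proposal is correct and follows the paper's proof essentially step for step. The shared steps are: uniform consistency of $\widehat G_k=H_2\{\widehat S_k,\widehat S_D;\widehat\theta_k\}$ from the three first-stage estimators; continuity, uniform in $\alpha$, for $\pi^{(1)}$ and $\pi^{(2)}$; and a two-piece split of $\pi^{(3)}$ into a $\mathbf G$-perturbation (uniform convergence of $r$) and an $S_D$-measure perturbation controlled through the bounded ratio $R$ of Condition~\ref{cond:IC}(3). The paper simply asserts $\int_Y^{t_U}R\,d(\widehat S_D-S_{D0})=o_p(1)$ via a directional derivative, whereas your integration by parts makes explicit the bounded-variation property this step really needs; also, for $\widehat G_k'$ you only need the factor $H_{12}\{\widehat S_k(T_k),\widehat S_D(t);\widehat\theta_k\}$, because the $\widehat S_k'(T_k)$ factor is free of $\alpha$ and drops out of $\pi$.
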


\begin{proof}
We first prove the consistency of $\widehat{G}_k(t_k;t)$, $0\leq t_k,t\leq t_U$. Note that
\begin{align}
&\widehat{G}_k(t_k;t) = H_2\{\widehat{S}_k(t_k),\widehat{S}_D(t);\widehat{\theta}_k\}\nonumber\\
& = H_2\{\widehat{S}_k(t_k),\widehat{S}_D(t);\widehat{\theta}_k\}-H_2\{S_{k0}(t_k),\widehat{S}_D(t);\widehat{\theta}_k\}\label{eq:tildeS1}\\
&+H_2\{S_{k0}(t_k),\widehat{S}_D(t);\widehat{\theta}_k\}-H_2\{S_{k0}(t_k),S_{D0}(t);\widehat{\theta}_k\}\label{eq:tildeS2}\\
&+H_2\{S_{k0}(t_k),S_{D0}(t);\widehat{\theta}_k\}-H_2\{S_{k0}(t_k),S_{D0}(t);\theta_{k0}\}\label{eq:tildeS3}\\
&+H_2\{S_{k0}(t_k),S_{D0}(t);\theta_{k0}\}.\nonumber
\end{align}
By the consistency of $\widehat{S}_k$ \citep{jiang2005pseudo} and the mean value theorem, we have
\begin{equation*}
\begin{split}
&\eqref{eq:tildeS1}
=\left[\psi'_{\widehat{\theta}_k}\left\{\phi_{\widehat{\theta}_k}(\widehat{S}_k(t_k))+\phi_{\widehat{\theta}_k}(\widehat{S}_{D}(t))\right\}-
\psi'_{\widehat{\theta}_k}\left\{\phi_{\widehat{\theta}_k}(S_{k0}(t_k))+\phi_{\widehat{\theta}_k}(\widehat{S}_{D}(t))\right\}\right]\phi'_{\widehat{\theta}_k}(\widehat{S}_{D}(t))\\
& = o_p(1)+\psi^{(2)}_{\widehat{\theta}_k}\left\{\phi_{\widehat{\theta}_k}(S_{k0}(t_k))+\phi_{\widehat{\theta}_k}(\widehat{S}_{D}(t))\right\}\phi'_{\widehat{\theta}_k}(S_{k0}(t_k))[\widehat{S}_k(t_k)-S_{k0}(t_k)]\phi'_{\widehat{\theta}_k}(\widehat{S}_{D}(t))\\
&=o_p(1).
\end{split}
\end{equation*}
By the consistency of the Kaplan-Meier estimator $\widehat{S}_D$ \citep{csorgHo1983rate}, similarly we have $\eqref{eq:tildeS2}=o_p(1)$. Condition \ref{cond:Theta} and the consistency of $\widehat{\theta}_k$ \citep{lakhal2008} imply $\eqref{eq:tildeS3}=o_p(1)$. It then follows that $\widehat{G}_k(t_k;t) = G_{k0}(t_k;t) + o_p(1)$, i.e., $\sup\limits_{t_k,t}|\widehat{G}_k(t_k;t) - G_{k0}(t_k;t)|=o_p(1)$, where $G_{k0}(t_k;t) =H_2\{S_{k0}(t_k),S_{D0}(t);\theta_{k0}\}$. We apply the continuous mapping theorem for statistical functionals, and get $\sup\limits_{t_k,t\leq t_U,\alpha\in\Theta}|\phi_{\alpha}(\widehat{G}_k(t_k; t))-\phi_{\alpha}(G_{k0}(t_k; t))|=o_p(1)$. By the triangle inequality, $$\sup\limits_{\alpha\in\Theta}|\sum\limits_{k=1}^K\phi_{\alpha}(\widehat{G}_k(T_k; Y))-\phi_{\alpha}(G_{k0}(T_k; Y))|=o_p(1).$$ 
Again apply the continuous mapping theorem, we have \begin{equation}\label{eq:pi1}
\sup\limits_{\alpha\in\Theta}|\pi^{(1)}(\alpha,\widehat{\mathbf{G}})-\pi^{(1)}(\alpha,\mathbf{G}_0)|=o_p(1),
\end{equation} 
and similarly 
\begin{equation}\label{eq:pi2}
\sup\limits_{\alpha\in\Theta}|\pi^{(2)}(\alpha,\widehat{\mathbf{G}})-\pi^{(2)}(\alpha,\mathbf{G}_0)|=o_p(1).
\end{equation} 
As for $\pi^{(3)}$, we partition into $\pi^{(3)}(\alpha,\widehat{\mathbf{G}}, \widehat{S}_D)-\pi^{(3)}(\alpha,\mathbf{G}_0, S_{D0}) =\pi^{(3)}(\alpha,\widehat{\mathbf{G}}, \widehat{S}_D)-\pi^{(3)}(\alpha,\widehat{\mathbf{G}}, S_{D0})+ \pi^{(3)}(\alpha,\widehat{\mathbf{G}}, S_{D0})-\pi^{(3)}(\alpha,\mathbf{G}_0, S_{D0}) .$ 
By Hadamard derivative, we see that
\begin{equation*}
\begin{split}
&\frac{\partial}{\partial\epsilon} \pi^{(3)}(\alpha,\mathbf{G}, S_{D0}+\epsilon(\widehat{S}_D-S_{D0}))\Bigg|_{\epsilon=0}\\
& = (1-\widetilde{\delta})\int_{Y}^{t_U}R(\alpha,\mathbf{G}, S_{D0})(t)d(\widehat{S}_D-S_{D0})(t) = o_p(1),
\end{split}
\end{equation*}
where 
\begin{equation*}
\begin{split}
R(\alpha,\mathbf{G}, S_{D})(t) = \frac{r(\alpha,\mathbf{G})(t)}{\int_Y^{t_U} r(\alpha,\mathbf{G})(x)dS_{D}(x)}
<\infty.
\end{split}
\end{equation*}
Thus, $\pi^{(3)}(\alpha,\widehat{\mathbf{G}}, \widehat{S}_D)-\pi^{(3)}(\alpha,\widehat{\mathbf{G}}, S_{D0})=o_p(1)$. Analogue to $\pi^{(1)}$ and $\pi^{(2)}$, we can show that $\sup\limits_{\alpha\in\Theta}|r(\alpha,\widehat{\mathbf{G}})-r(\alpha,\mathbf{G}_0)|=o_p(1),$
which follows by 
\begin{equation}\label{eq:pi3}
\sup\limits_{\alpha\in\Theta}|\pi^{(3)}(\alpha,\widehat{\mathbf{G}}, \widehat{S}_D)-\pi^{(3)}(\alpha,\mathbf{G}_0, S_{D0})|=o_p(1).
\end{equation} 
Coupled with \eqref{eq:pi1} and \eqref{eq:pi2}, we get $\sup\limits_{\alpha\in\Theta}||\ell(\alpha,\widehat{\mathbf{G}}, \widehat{S}_D)-\ell(\alpha,\mathbf{G}_0, S_{D0})||\xrightarrow{p}0$, which completes the proof of Lemma \ref{lem:l0}.
\end{proof}

\begin{lem}\label{lem:maximum}
Under Conditions \ref{cond:Theta}, \ref{cond:likelihood},  for any $\epsilon>0$, $\sup\limits_{\alpha: |\alpha- \alpha_0|>\epsilon}\ell(\alpha,\mathbf{G}_0, S_{D0})< \ell(\alpha_0,\mathbf{G}_0, S_{D0})$.  
\end{lem}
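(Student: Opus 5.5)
This is the usual identifiability argument for maximum likelihood, via Kullback--Leibler divergence and Jensen's inequality. At the true nuisance values $(\mathbf{G}_0,S_{D0})$, the function $\ell(\alpha,\mathbf{G}_0,S_{D0})$ is, up to terms not depending on $\alpha$ (those dropped in forming $\pi$), the expected log of the per-subject likelihood contribution $L_{i1}^{\tildedelta_i}L_{i2}^{1-\tildedelta_i}$ evaluated at $\alpha$. Write $p_\alpha(O)$ for this contribution, viewed as a density of the observed vector $O=(T_1,\cdots,T_K,\delta_1,\cdots,\delta_K,Y,\tildedelta)$ with respect to a dominating measure. It has a continuous part in the uncensored coordinates and a mass part for censored ones, and the censoring-distribution factors are common to all $\alpha$.

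The key steps are as follows.

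\textbf{Step 1.} Under independent censoring and the wedge construction of Section 2, $p_{\alpha}$ is a genuine probability density for every $\alpha\in\Theta$, and $p_{\alpha_0}$ is the true density of $O$. To see this, note that for each $\alpha$, models \eqref{model:copula_terminal}--\eqref{model:copula_nonterminal3} with fixed $\mathbf{G}_0,S_{D0}$ define a valid joint law, and \eqref{eq:likelihood1}, \eqref{eq:likelihood2} are its derivatives and integrals.

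\textbf{Step 2.} By Condition \ref{cond:likelihood}, the expectation of $\log\{p_{\alpha}(O)/p_{\alpha_0}(O)\}$ exists. Jensen's inequality then gives
\begin{equation*}
\ell(\alpha,\mathbf{G}_0,S_{D0})-\ell(\alpha_0,\mathbf{G}_0,S_{D0})=E_{\alpha_0}\log\frac{p_\alpha(O)}{p_{\alpha_0}(O)}\le \log E_{\alpha_0}\frac{p_\alpha(O)}{p_{\alpha_0}(O)}\le 0 .
\end{equation*}
Equality holds only if $p_\alpha=p_{\alpha_0}$ almost surely.

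\textbf{Step 3 (identifiability).} I must show that $p_\alpha=p_{\alpha_0}$ a.s.\ forces $\alpha=\alpha_0$. I would use the subjects with $\tildedelta=1$ and $d\ge2$, whose observations carry positive probability by Condition \ref{cond:tu}. Equality of the densities on this set gives equality of $\psi_\alpha\{\phi_\alpha(u_1)+\phi_\alpha(u_2)\}$ and $\psi_{\alpha_0}\{\phi_{\alpha_0}(u_1)+\phi_{\alpha_0}(u_2)\}$ on a nondegenerate rectangle of $(u_1,u_2)=(G_{1}(t_1;y),G_{2}(t_2;y))$ values. Distinct members of a parametric Archimedean family are distinct copulas on any open set, so $\alpha=\alpha_0$. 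This step is the main obstacle: it requires that the family be identifiable, meaning the map $\alpha\mapsto\psi_\alpha\circ(\phi_\alpha+\phi_\alpha)$ is injective locally. This holds for the Clayton, Frank, and Gumbel families, since Kendall's tau is strictly monotone in the parameter. I would state it as implicit in the model specification.

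\textbf{Step 4.} I upgrade the strict inequality to a uniform gap on $\{|\alpha-\alpha_0|>\epsilon\}$. By Condition \ref{cond:Theta}, $\Theta$ is bounded; I take its closure, which is compact. The map $\alpha\mapsto\ell(\alpha,\mathbf{G}_0,S_{D0})$ is continuous there: the integrands depend continuously on $\alpha$ through Lipschitz generators (Condition \ref{cond:Theta}) and bounded derivatives (Condition \ref{cond:phidev}), with domination from the Glivenko--Cantelli/Donsker envelope used in Lemma \ref{lem: law of large numbers ln}. A continuous function attains its supremum on the compact set $\{\alpha\in\bar\Theta:|\alpha-\alpha_0|\ge\epsilon\}$ at some $\alpha^*\neq\alpha_0$. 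Steps 2--3 then give $\ell(\alpha^*)<\ell(\alpha_0)$, which proves the claim.
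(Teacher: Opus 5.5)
Your proposal uses the same Jensen/Kullback--Leibler argument as the paper's proof, and it is more complete. The paper writes $\log E[p_\alpha/p_{\alpha_0}]<\log 1=0$, which without a non-degeneracy argument like your Step 3 yields only $\le 0$, and it proves the inequality only for each $\alpha$ separately, never the strict gap for the supremum that your compactness-and-continuity Step 4 supplies. One ingredient you assume rather than derive is identifiability of $\alpha$ from the $\tildedelta=1$ densities; these densities are mixed partials $\psi^{(d)}_\alpha\{\cdots\}\prod\phi'_\alpha$ of the $K$-variate copula with the non-occurring coordinates fixed at $G_k(y;y)$, not the bivariate copula values you wrote. That assumption is not among Conditions \ref{cond:Theta}--\ref{cond:likelihood}, but the paper's argument relies on it just as much, only without saying so.
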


\begin{proof}
For any $\alpha$ with $ |\alpha- \alpha_0|>\epsilon$, Jensen inequality and Condition \ref{cond:likelihood} imply that
\begin{equation*}
\begin{split}
&\ell(\alpha,\mathbf{G}_0, S_{D0})- \ell(\alpha_0,\mathbf{G}_0, S_{D0})\\
& =\lim\limits_{n\rightarrow\infty} n^{-1}\log L_n(\alpha,\mathbf{G}_0, S_{D0})-n^{-1}\log L_n(\alpha_0,\mathbf{G}_0, S_{D0})\\
& =E\left[\log \frac{L_{i1}(\alpha,\mathbf{G}_0, S_{D0})^{\tildedelta_i}L_{i2}(\alpha,\mathbf{G}_0, S_{D0})^{1-\tildedelta_i}}{L_{i1}(\alpha_0,\mathbf{G}_0, S_{D0})^{\tildedelta_i}L_{i2}(\alpha_0,\mathbf{G}_0, S_{D0})^{1-\tildedelta_i}}\right]\\
&\leq \log \left\{E\left[\frac{L_{i1}(\alpha,\mathbf{G}_0, S_{D0})^{\tildedelta_i}L_{i2}(\alpha,\mathbf{G}_0, S_{D0})^{1-\tildedelta_i}}{L_{i1}(\alpha_0,\mathbf{G}_0, S_{D0})^{\tildedelta_i}L_{i2}(\alpha_0,\mathbf{G}_0, S_{D0})^{1-\tildedelta_i}}\right]\right\}\\
&< \log(1)=0.
\end{split}
\end{equation*}
\end{proof}

\begin{proof}[Proof of Theorem 1(i)]
By construction, $\widehat{\alpha} = \arg\max_{\alpha}l_n(\alpha,\widehat{\mathbf{G}}, \widehat{S}_D)$. It follows that
\begin{equation*}
\begin{split}
0< &l_n(\widehat{\alpha},\widehat{\mathbf{G}}, \widehat{S}_D)-l_n(\alpha_0,\widehat{\mathbf{G}}, \widehat{S}_D)\\
&=l_n(\widehat{\alpha},\widehat{\mathbf{G}}, \widehat{S}_D)-\ell(\alpha_0,\widehat{\mathbf{G}}, \widehat{S}_D)+o_p(1)\\
& = l_n(\widehat{\alpha},\widehat{\mathbf{G}}, \widehat{S}_D)-\ell(\alpha_0,\mathbf{G}_0, S_{D0})+o_p(1),\\
\end{split}
\end{equation*}
where the second line is derived from Lemma \ref{lem: law of large numbers ln}, the third line comes from Lemma \ref{lem:l0}. This implies $\ell(\alpha_0,\mathbf{G}_0, S_{D0})<l_n(\widehat{\alpha},\widehat{\mathbf{G}}, \widehat{S}_D)+o_p(1)$ and consequently,
\begin{equation}
\begin{split}
&\ell(\alpha_0,\mathbf{G}_0, S_{D0})-\ell(\widehat{\alpha},\mathbf{G}_0, S_{D0})\\
&<l_n(\widehat{\alpha},\widehat{\mathbf{G}}, \widehat{S}_D)-\ell(\widehat{\alpha},\mathbf{G}_0, S_{D0})+o_p(1)\\
&\leq \sup\limits_{\alpha\in\Theta}|l_n(\alpha,\widehat{\mathbf{G}}, \widehat{S}_D)-\ell(\alpha,\widehat{\mathbf{G}}, \widehat{S}_D)+\ell(\alpha,\widehat{\mathbf{G}}, \widehat{S}_D)-\ell(\alpha,\mathbf{G}_0, S_{D0})|+o_p(1)\\
&\leq \sup\limits_{\alpha,\mathbf{G},S_D}|l_n(\alpha,\mathbf{G}, S_D)-\ell(\alpha,\mathbf{G}, S_D)|+\sup\limits_{\alpha\in\Theta}|\ell(\alpha,\widehat{\mathbf{G}}, \widehat{S}_D)-\ell(\alpha,\mathbf{G}_0, S_{D0})|+o_p(1)=o_p(1).
\label{eq:ell_alpha0}
\end{split}
\end{equation}
Take $\alpha$ such that $|\alpha-\alpha_0|\geq \epsilon$ for any fixed $\epsilon>0$. By Lemma \ref{lem:maximum}, there must exist some $\gamma_{\epsilon}>0$ such that $\ell(\alpha_0,\mathbf{G}_0, S_{D0})>\ell(\widehat{\alpha},\mathbf{G}_0, S_{D0})+\gamma_{\epsilon}$. It follows that 
$$\Pr(|\widehat{\alpha}-\alpha_0|\geq \epsilon)\leq \Pr\{\ell(\alpha_0,\mathbf{G}_0, S_{D0})>\ell(\widehat{\alpha},\mathbf{G}_0, S_{D0})+\gamma_{\epsilon} \}.$$
Equation \eqref{eq:ell_alpha0} implies that $ \Pr\{\ell(\alpha_0,\mathbf{G}_0, S_{D0})>\ell(\widehat{\alpha},\mathbf{G}_0, S_{D0})+\gamma_{\epsilon} \}\rightarrow 0$ as $n\rightarrow \infty$. Therefore, $\Pr(|\widehat{\alpha}-\alpha_0|\geq \epsilon)\rightarrow 0$ as $n\rightarrow \infty$, which completes the proof of Theorem 1(i).
\end{proof}

\subsection{Proof of Theorem 1(ii)}
\begin{proof}
Using the Taylor expansion on the pseudo score function $U_n^P(\alpha,\widehat{\mathbf{G}},\widehat{S}_D)$ around $\alpha_0$, rearranging and evaluating it at $\alpha = \widehat{\alpha}$, we have 
\begin{equation}
n^{1/2}(\widehat{\alpha}-\alpha_0)  = \frac{\sqrt{n}U_n^P(\alpha_0,\widehat{\mathbf{G}},\widehat{S}_D)}{-\calP_n v_{\alpha}(\alpha_0,\widehat{\mathbf{G}},\widehat{S}_D)}
\label{eq:hatalpha0}
\end{equation}
Analogue to the proof of Lemma \ref{lem:l0}, we have $\calP_n v_{\alpha}(\alpha_0,\widehat{\mathbf{G}},\widehat{S}_D)-\calP_n v_{\alpha}(\alpha_0,\mathbf{G},S_D)\xrightarrow{p} 0$. And analogue to the proof of Lemma \ref{lem: law of large numbers ln}, we have $\calP_n v_{\alpha}(\alpha_0,\widehat{\mathbf{G}},\widehat{S}_D)-\calP_0 v_{\alpha}(\alpha_0,\mathbf{G},S_D)\xrightarrow{p} 0$. Next, the numerator in \eqref{eq:hatalpha0} can be written as
\begin{equation}
\begin{split}\label{eq:Up123}
&\sqrt{n}U_n^P(\alpha_0,\widehat{\mathbf{G}},\widehat{S}_D)\\
&=\sqrt{n}(\calP_n-\calP_0)\{(u_{\alpha}(\alpha_0,\widehat{\mathbf{G}},\widehat{S}_D)-
u_{\alpha}(\alpha_0,\mathbf{G}_0,S_{D0}))\}\\
&+ \sqrt{n}(\calP_n-\calP_0)\{u_{\alpha}(\alpha_0,\mathbf{G}_0,S_{D0}))\}
+\sqrt{n}\calP_0\{u_{\alpha}(\alpha_0,\widehat{\mathbf{G}},\widehat{S}_D)\}
\end{split}
\end{equation}
where the first term converges to 0 by the proof of Lemma \ref{lem:l0}, the Donsker property and the dominated convergence theorem, the second term is a sum of $n$ independent and identically distributed zero-mean random variables by the Donsker property, namely
\begin{equation}\label{eq:calPn-calP0}
\sqrt{n}(\calP_n-\calP_0)\{u_{\alpha}(\alpha_0,\mathbf{G}_0,S_{D0}))\} = n^{-1/2}\sum_{i}I_{i0}.
\end{equation} 
The third term in \eqref{eq:Up123} can be decomposed using the Von Mises expansion \citep{mises1947asymptotic,serfling2009approximation} into
\begin{equation}
\begin{split}\label{eq:calP0_ua}
&\sqrt{n}\calP_0\{u_{\alpha}(\alpha_0,\widehat{\mathbf{G}},\widehat{S}_D)\}
=
\sum\limits_{k=1}^K\iint \operatorname{IC}_k^0(t_k,t)d\sqrt{n}(\widehat{G}_k(t_k;t)-G_{k0}(t_k;t))\\
&+\int \operatorname{IC}_D(t) d\sqrt{n}(\widehat{S}_D(t)-S_{D0}(t)) + o(||\sqrt{n}(\widehat{\mathbf{G}}-\mathbf{G}_0)||_{\infty}+||\sqrt{n}(\widehat{S}_D-S_{D0})||_{\infty}),
\end{split}
\end{equation}
where the influence curves $ \operatorname{IC}_k^0$ ($k=1,\cdots,K$) and $ \operatorname{IC}_D$ are obtained by differentiating $\calP_0\{u_{\alpha}(\alpha_0,G_1+\epsilon_1(\widehat{G}_1-G_1),\cdots,G_K+\epsilon_K(\widehat{G}_K-G_K),S_{D0}+\epsilon_{K+1}(\widehat{S}_D-S_{D0}))\}$ with respect to $\epsilon_k$ ($k=1,\cdots,K+1)$ and evaluating at $\epsilon_1=\cdots=\epsilon_{K+1} = 0$. 
We see from \cite{fleming2013counting} that
$\sqrt{n}(\widehat{S}_D(t)-S_{D0}(t))= n^{-1/2}\sum_{i}I_{i1}(t)+o(1),$ 
where $I_{i1}(t) = -S_{D0}(t)\int_{-\infty}^{t}dM_{i}(u)/y(u)$, $M_{i}(u)=\sum_{i}\widetilde{\delta}_i I(Y_i\leq u) - \int_{0}^{u}I(Y_i\geq t)d\wedge_D(t)$, $y(u)=\lim\limits_{n \rightarrow \infty} n^{-1}\sum_iI(Y_i\ge u)$, and $\bm\wedge_{D}(u)$ is the cumulative hazard function of $D$. From \cite{lakhal2008}, we see that $\sqrt{n}(\widehat{\theta}_k-\theta_k)$  is asymptotically equivalent to zero-mean U-statistic so that it can be reduced to sums of independent terms using the Hoeffding decomposition as $\sqrt{n}(\widehat{\theta}_k-\theta_k)=n^{-1/2}\sum_{i}I_{i2}+o(1)$. Furthermore, we see from \cite{jiang2005pseudo} that $\sqrt{n}(\widehat{S}_k(t)-S_{k0}(t))=n^{-1/2}\sum_{i}I_{i3}(t)+o(1).$ The mathematical forms of $I_{i2}$ and $I_{i3}$ can be found in \cite{lakhal2008} and \cite{jiang2005pseudo}, respectively. 
Thus, by the functional and finite-dimensional delta methods,
\begin{equation}\label{eq:calP0_ua2}
\begin{split}
\eqref{eq:calP0_ua}& = \sum\limits_{k=1}^K\iint \operatorname{IC}_k(t_k)d\sqrt{n}(\widehat{S}_k(t_k)-S_{k0}(t_k))\\
&+ \sum\limits_{k=1}^K V_{\theta_k}\sqrt{n}(\widehat{\theta}_k-\theta_{k0})
+n^{-1/2}\sum_{i} \int \operatorname{IC}_D(t) d I_{i1}(t) \\
&+ o(1+|\sqrt{n}(\widehat{\theta}_k-\theta_{k0})|+||\sqrt{n}(\widehat{S}_k-S_k)||_{\infty}+||\sqrt{n}(\widehat{S}_D-S_{D0})||_{\infty})\\
& =n^{-1/2}\sum_{i}\sum\limits_{k=1}^K\iint \operatorname{IC}_k(t_k)dI_{i3}(t_k)
+ n^{-1/2}\sum_{i}\sum\limits_{k=1}^K V_{\theta_k}I_{i2}\\
&+n^{-1/2}\sum_{i} \int \operatorname{IC}_D(t) d I_{i1}(t) 
+ o_p(1),
\end{split}
\end{equation}
where $\operatorname{IC}_k(t_k)$ and $V_{\theta_k}$ are defined in Condition \ref{cond:IC}, and
\begin{equation*}
\begin{split}
\operatorname{IC}_D(t)
= \calP_0\left\{  (1-\widetilde{\delta}) \frac{\partial}{\partial\alpha} R(\alpha,\mathbf{G}_0,S_{D0})(t)  \right\}.
\end{split}
\end{equation*}
Combining \eqref{eq:Up123}-\eqref{eq:calP0_ua2}, the numerator in \eqref{eq:hatalpha0}, $\sqrt{n}U_n^P(\alpha_0,\widehat{\mathbf{G}},\widehat{S}_D)$, can be asymptotically represented by a sum of $n$ independent and identically distributed zero-mean random variables, denoted as $n^{-1/2}\sum_{i}I_{i}$. Therefore, $$n^{1/2}(\widehat{\alpha}-\alpha_0)  = -[\calP_0 v_{\alpha}(\alpha_0,\mathbf{G}_0,S_{D0})]^{-1}n^{-1/2}\sum_{i}I_{i}+o_p(1)$$ is asymptotically normal.
\end{proof}

\section{Simulation studies}
\label{sec:simulation}

\subsection{Predictive accuracy measures}

To investigate the incremental value of incorporating intermediate event information in predicting  $I(D>t)$ or residual lifetime, we compare the prediction algorithms using various accuracy measures. The first measure is the mean square prediction error (MSPE) given by $\operatorname{MSPE} = n^{-1}\sum\limits_{i=1}^n (\min(D_i,t_U^{*}) - \operatorname{CMST}_i)^2$. The second measure is the quantile prediction error (QPE) given by $\operatorname{QPE} = n^{-1}\sum\limits_{i=1}^n \rho_{\tau}(\min(D_i,t_U^{*}) - \operatorname{CQST}_i)$, where $\rho_{\tau}(x)= x[\tau-I(x<0)]$ is the quantile check function. 
Both MSPE and QPE 
metrics are based on true values $D_i$ which are known in simulation studies but unknown in practice. As such, these two metrics are adjusted with inverse probability of censoring weights (IPCW) to evaluate agreement between predicted and observed survival times. 
Additionally, we consider the Brier score (BS) defined as
$BS(t) 
=n^{-1}\sum\limits_{i=1}^n w_i(t)I(t> t_{mi}^{*} )\left[ I(Y_i>t)- \widehat{S}^{*}(t|\{T_{ik}:\delta_{ik}=1,k=1,\cdots,K\})\right]^2
$
and its integrated version $\operatorname{IBS} = \int_0^{t_U^{*}}BS(t)dt$, where
$t_{mi}^{*} = \max\{T_{ik}:\delta_{ik}=1,k=1,\cdots,K\}$, 
$w_i(t)=\tildedelta_iI(Y_i\leq t)/\widehat{S}_c(Y_i-)+
I(Y_i> t)/\widehat{S}_c(t)$
and $\widehat{S}_c(t)$ is the KM estimate of the survival function of censoring times. We set $t_U^{*}=12$ in simulation studies 
to allow for a standardized comparison across different scenarios, and in the real data analysis, we set the maximum follow-up time as $t_U^{*}$. 
The three metrics above quantify overall prediction error. 
And as suggested by an anonymous reviewer,  we included the time-dependent AUC to assess discrimination, as defined by 
\citep{uno2007evaluating} 
\[ AUC(t) =\frac{\sum\limits_{i,j}w_i(t)w_j(t)I(t>t_{mi}^{*})I(t>t_{mj}^{*})I(Y_i\leq t)I(Y_j>t)I(\widehat{S}^{*}_i(t)<\widehat{S}^{*}_j(t)}{\sum\limits_{i,j}w_i(t)w_j(t)I(t>t_{mi}^{*})I(t>t_{mj}^{*})I(Y_i\leq t)I(Y_j>t)}.\]
The consistency and normality of these accuracy measures can be justified using the uniform consistency
of the survival prediction estimates and functional central limit theorem along with modifications of the arguments given by \cite{zheng2008time} and \cite{tian2007model}.

To further assess performance of individual prediction,
we construct $95\%$ prediction intervals for each test individual $i$ as follows. 
Let $\operatorname{CQST}_{i,0.025}$ and $\operatorname{CQST}_{i,0.975}$ denote the estimated 2.5\% and 97.5\%  conditional quantile survival times. The prediction interval for restricted death time $\min(D_i,t_U^{*})$ is construct as $[\operatorname{CQST}_{i,0.025},\operatorname{CQST}_{i,0.975}]$. The accuracy and reliability of individual prediction intervals are assessed in terms of empirical coverage probability (CP) $$n_{\operatorname{test}}^{-1}\sum\limits_{i=1}^{n_{\operatorname{test}}} I\left\{ \min(D_i,t_U^{*}) \in [\operatorname{CQST}_{i,0.025},\operatorname{CQST}_{i,0.975}]\right\}$$ and median interval width $$\operatorname{Median}\left\{ \operatorname{CQST}_{i,0.975}-\operatorname{CQST}_{i,0.025}]\right\}.$$

\subsection{Similation setups (Ex1 and Ex2)}
We conducted extensive simulation studies to evaluate the performance of our proposal for association estimation and predictions for overall survival.  
Before describing the simulation setups, we define Kendall's tau, a common metric for assessing the overall association between two survival times. It is defined as the probability of concordance minus the probability of discordance. Under Archimedean copula structure, Kendall’s $\tau$ can be expressed as \citep{lakhal2008}
\begin{equation}\label{eq:Kendall tau}
\operatorname{Kendall's~ tau} = 4\int_0^1 \phi_{\theta}(u)/\phi'_{\theta}(u) du+1,
\end{equation}
which is unrelated to the marginal distributions. 

We considered two practical settings in the following examples. The first example (Ex1) includes 
$K=3$ or $7$ intermediate events with different associations with death time. The true Kendall's taus corresponding to $(\theta_1,\cdots,\theta_K)$ were set to be $\tau_{\theta}$ $=(0.8, \cdots,0.8-0.6(k-1)/(K-1),\cdots, 0.2)$. In the second example (Ex2),  all $K$ intermediate events were assumed to equally contribute to overall survival with $\tau_{\theta} = (0.5,\cdots,0.5)$.
In both examples, we take Kendall's tau associated with $\alpha$ as $\tau_{\alpha} = 0.2$ or $0.5$, providing a weak or strong association between intermediate events.  
The marginal components $S_k$ and $S_D$ are survival functions of the exponential distributions with rates of 1 and 0.6, respectively.

Based on two samples 
generated under the Frank copula with setups Ex1 and Ex2, respectively, we presented 
the sample Kendall correlation coefficients of the true survival times $(\tildeT_1,\cdots,\tildeT_K,D)$ in Figure \ref{fig:simdata}.  
Figure \ref{fig:simdata}(a) presents different (unconditional) association structures among intermediate events in Ex1, while Figure \ref{fig:simdata}(b) depicts the same association structure in Ex2. 
This illustrates the flexibility of the proposed algorithm and its capability to accommodate various (unconditional) dependence structures among the intermediate and terminal event times.

We generated $n=n_{\operatorname{train}} +n_{\operatorname{test}}$ sets of $(\tildeT_1,\cdots,\tildeT_K,D)$ under models (1)-(3) 
on both the upper and lower wedges, where the number of training datasets  $n_{\operatorname{train}}= 100$ or $200$ for parameter estimation, and $n_{\operatorname{test}} =50$ test datasets were used for model evaluation. 
The independent censoring variable $C$ followed a $\operatorname{Uniform}[0,20]$ or $\operatorname{Uniform}[0,5]$ distribution, resulting in $10\%$ or $35\%$ censoring for $D$. 
For each simulation setting, 200 replications were generated to examine the performance of the proposed method.

\begin{figure}[h]
\centering
\includegraphics[width=0.6\textwidth,trim=100 210 100 130,clip]{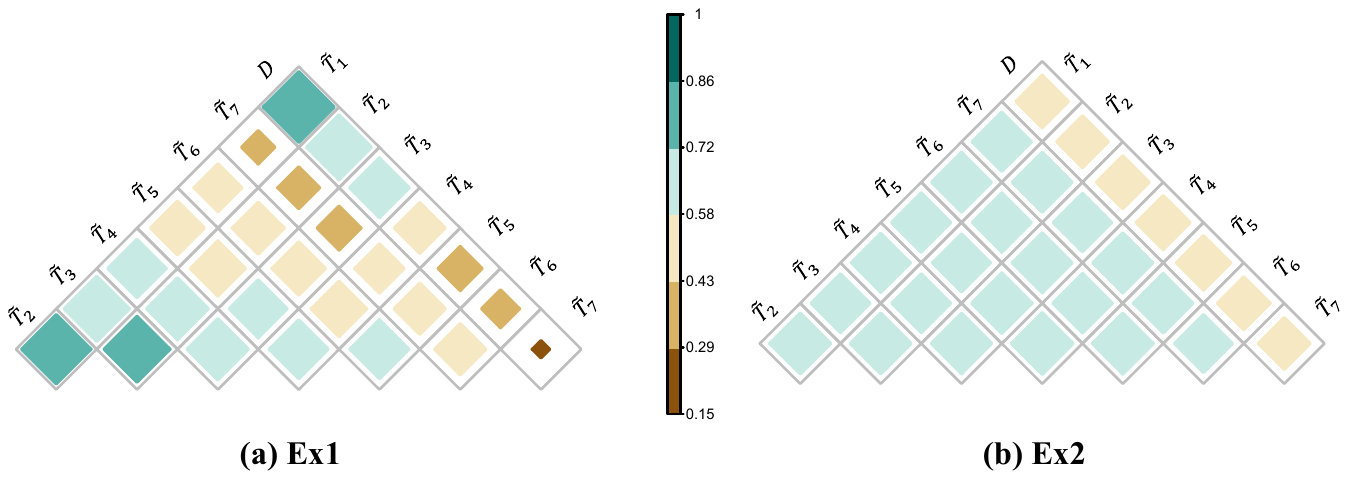}
\caption{Pairwise comparisons and 
sample Kendall correlation coefficients for 200 sets of $(\tildeT_1,\cdots,\tildeT_7,D)$ generated under models (1)-(3) on both wedge. The setup in Figure (a) is same as Ex1 with true Kendall's taus corresponding to $(\alpha,\theta_1,\cdots,\theta_7)$ equal to $(0.5,0.8, 0.7, 0.6, 0.5, 0.4, 0.3, 0.2)$. The setup in Figure (b) is same as Ex2 with true Kendall's taus corresponding to $(\alpha,\theta_1,\cdots,\theta_7)$ equal to $(0.5,0.5, \cdots,0.5)$.} 
\label{fig:simdata}
\end{figure}


\subsection{Simulation results for association analysis (Ex1 and Ex2)}
We considered the Frank copula function in both models (1)-(3) 
first. 
Our estimation results for $\theta_k$ and $S_k$ exhibited similar trends to those reported in \cite{lakhal2008} and \cite{jiang2005pseudo} and were thus omitted. Table \ref{table:alpha} summarizes estimates of $\tau_{\alpha}$ 
in terms of relative mean bias (RBias, namely, the mean bias divided by the truth), empirical standard deviation (SD), and coverage probability (CP) in Ex1 and Ex2 with different numbers $K$ of intermediate events. 
The proposed maximum pseudo-likelihood estimator $\widehat{\alpha}$ performed well across all setups. The relative mean bias was minimal and generally decreased as 
the sample size increased, with the coverage probability remaining close to the nominal level of 0.95.  The empirical SD was notably small and decreased as $n_{\operatorname{train}}$ or $K$ increased. 
Moreover, the performance of $\widehat{\alpha}$ 
remained consistent across different rates of independent censoring. Similar trends can be found in the results for the Gumbel and Clayton copula structures (see Tables \ref{table:alpha gumber}-\ref{table:alpha clayton}). 

\subsection{Competing survival prediction algorithms}
For comparison and to showcase the dynamic prediction proposed, we summarize existing survival prediction methods: landmark prediction using the Kaplan-Meier (KM) estimator of $S_D$ (P0), survival prediction incorporating the $k$-th intermediate outcome (P$k$), and landmark prediction incorporating the $k$-th intermediate outcome and the maximum observed time (P$k$m). 
In particular, the method
P0 estimates the probability $S^{*}_{(P0)}(t|t_m^{*}) = \Pr(D>t| D> t_m^{*} )  =\Pr(D>t)/\Pr(D> t_m^{*} )$. By plugging in the KM estimator $\widehat{S}_D$, we obtain the estimator $S^{*}_{(P0)}$, denoted by $\widehat{S}^{*}_{(P0)}$. Consequently, P0-derived CMST is given by $t_m^{*} + \int_{t_m^{*}}^{t_U^{*}}\widehat{S}^{*}_{(P0)}(t|t_{m}^{*})dt$. Methods P$k$ and P$k$m focus on estimating the probability $S^{*}_{(Pk)}(t|t_k) = S^{*}(t| \tildeT_k = t_k)$ and $S^{*}_{(Pkm)}(t|t_k,t_m^{*}) = \Pr(D>t|\tildeT_k=t_k, D> t_m^{*} )=H_1\{S_k(t_k),S_{D}(t);\theta_k\}/H_1\{S_k(t_k),S_{D}(t_m^{*});\theta_k\}$. Thus with the plug-in estimators $\widehat{S}^{*}_{(Pk)}(t|t_k)$ and $\widehat{S}^{*}_{(Pkm)}(t|t_k,t_{m}^{*})$, the P$k$-derived and P$k$m-derived CMST are given by $t_k + \int_{t_k}^{t_U^{*}}\widehat{S}^{*}_{(Pk)}(t|t_k)dt$, $t_m^{*} + \int_{t_m^{*}}^{t_U^{*}}\widehat{S}^{*}_{(Pkm)}(t|t_k,t_{m}^{*})dt$, respectively. Their CQST can be calculated in a similar manner.

\subsection{Simulation results of survival prediction algorithms (Ex1-2)}

Tables \ref{table:ex1_n100_cr5}-\ref{table:ex2_n200_cr5} 
summarize predictive accuracy measures for Ex1 and Ex2 with the Frank copula structure using the proposed dynamic prediction (DP) algorithm in comparison with the other competing algorithms (P0, P1, P1m, PK, and PKm). 
These tables include averaged mean squared prediction error (MSPE), quantile prediction error (QPE) with $\tau=0.5$ and integrated Brier Score (IBS), 
along with their empirical standard deviations and relative prediction accuracy. The relative prediction accuracy was computed using the proposed DP algorithm as a benchmark, with a higher value indicating better performance. 
Figure 
\ref{fig:EX2 bs} 
presents averaged Brier score curves across 200 training and test samples. 

These tables and the figure demonstrate the following: 
1) The proposed DP algorithm consistently outperformed the other algorithms 
across different scenarios for both training and testing datasets. 
As $n_{\operatorname{train}}$ increased, the prediction error of the DP method decreased.
2) The prediction error of the P0 method was $42\%\sim 150\%$ higher than that of the DP method, highlighting the advantages 
of incorporating intermediate event information in prediction. 
3) Given the same value of Kendall's tau 
$\tau_{\alpha}$, the prediction errors and their empirical standard deviations of the DP algorithm decreased as the number of intermediate events increased 
in most cases.
4) The DP algorithm's outperformance 
over those competing algorithms was more pronounced when all intermediate event times had moderate to low 
association with death.
Specifically, in the first example (EX1), where the $K$th intermediate event had the lowest effect on death, PK and P0 performed worst, followed by PKm, P1, P1m, and DP. While methods P1, P1m, and DP were comparable in some specific cases regarding certain predictive measures. Overall, the DP method performed better and was more stable. 
In the second example (Ex2), where the 1st and $K$th intermediate events 
were equally associated with death, a clear difference 
can be found between single and simultaneous analysis of multiple intermediate events. The ranking in terms of prediction accuracy was DP $>$ P1m $=$ PKm $>$ PK $=$ P1. As $K$ increased, the difference between DP and P1m became more evident. These results underscore the importance of integrating information on intermediate events and the maximum observed follow-up time into prediction of survival. 
5) The DP method exhibited a degree of robustness to 
variations  in $\tau_{\alpha}$, $K$ and the independent censoring rate. 

We also increased the training sample size to 2000. Table \ref{table:ex12_n2000_cr20} shows a trend similar to that for sample sizes 100 and 200. As suggested by an anonymous reviewer, we added a multi-event comparison (a grouped subset of intermediates) in simulation studies to demonstrate how predictive accuracy improves with the number of events included. The simulation results reported in Table \ref{table:ex12_group} indicate that prediction results improve when multiple intermediate events with strong associations to death time are incorporated into the model.

Two anonymous reviewers suggested evaluating robustness across different training–test configurations.
We further complemented the single train/test split with a repeated K-fold cross-validation procedure as well as a general random cross-validation procedure. Splits were stratified so that the event/censoring distribution is balanced across folds.

\textbf{The $M$-fold cross-validation} randomly splits the data into $M$ disjoint sets of about equal size and labels them as $\mathcal{I}_j$, $j=1,\cdots,M$. An estimate $(\widehat{\alpha}_{-j},\widehat{\bm{\theta}}_{-j})$ is obtained based on all observations that are not in $\mathcal{I}_j$. We then compute the predicted error estimate based on observations in $\mathcal{I}_j$ and get averaged prediction error. 

\textbf{The general random cross-validation scheme} randomly splits the data into training and testing sets in some ratio. Let $n$ be the sizes of the whole sample, and $n_{train}$ and $n_{test}$ be the sizes of the training and test subsamples, where $n/n_{test}$ is roughly a fixed positive integer. For each specified ratio, we repeatedly draw random training subset with remaining subjects as test subset, estimate model parameters using the training set, and compute prediction errors on the corresponding test set. This procedure is repeated with fresh random training-testing splits,  and the prediction errors are averaged over all repetitions. 

For Examples Ex1-2, we adopted these two repeated random splitting strategies to reduce variability arising from a single data partition. Specifically, we implemented 3-fold cross-validation repeated 20 times (yielding 60 splits), and a general random cross-validation scheme with a testing proportion $n_{test}/n=1/3$ or $1/5$ as well as 20 repetitions. Table \ref{table:ex12_split} summarizes predictive performance of different methods in terms of three metrics (MSPE, IBS and AUC at $t=3$). Across all evaluation matrics, the relative predictive performance of different methods keeps consistent, and the two random splitting strategies lead to comparable results.

Furthermore, we conducted additional simulations to evaluate the performance of individual prediction intervals for the proposed method, using both the training samples and independent test sets of 50 new individuals. 
As shown in Table \ref{tab:predictionInterval}, the in-sample individual prediction intervals achieve empirical coverage close to the nominal 0.95 level under the training dataset and generally become narrower with higher observation rates in Example Ex1. Moreover, as the training sample size increases, the out-of-sample prediction intervals evaluated on the testing data become more reasonable. From a overall averaged perspective (MSPE/QPE/IBS/AUC) for prediction estimates, the proposed method yields the smallest prediction error compared with existing approaches, and this error further decreases as training sample size increases. Taken together, these results demonstrate that the proposed method provides accurate prediction estimates and reliable individual-level prediction intervals, a key advantage of our proposed joint modeling approach over existing methods such as landmarking models.  
A correctly specified joint model in our method can lead to 
consistent predictions, wheras landmarking methods fail to capture 
the joint distribution of multiple event times. 

\subsection{Robustness against model misspecification}
We examined the performances of the proposed 
prediction algorithms (Algorithm 2 in Appendix A) 
under model misspecification. 
Results were summarized in Tables \ref{table:mis gumbel} - \ref{table:misspecification comparison} 
where datasets were generated using the Frank copula models but fitted with the Gumbel or Clayton models. 
As evident from Tables \ref{table:ex1_n100_cr5} -\ref{table:ex2_n100_cr5} and S.9- S.11,  when the true model is the Frank copula, the proposed prediction algorithms using the misspecified 
Gumbel copula model yield results comparable to those of the true Frank model in terms of average ranking across all predictive accuracy measures, while the prediction based on the misspecified Clayton model performs 
relatively poorly in Ex1. On the other hand, under the Ex2 setup, predictions based on both the Gumbel and Clayton models appear to be similarly poor, indicating a certain degree of their sensitivity to model misspecification. 

\subsection{Robustness against variations in unobservable regions (Ex3)}
It is important to note that the proposed method is insensitive  
to data in the unobservable region, where $\tildeT_k > D$. To demonstrate it, we further simulated data with different joint densities for $(\tildeT_k, D)$ across upper and lower wedges. In this simulation example (Ex3), we used Frank copula models and considered model (1) 
with $S_{k,D}^{(u)}=(S_k= S_D=S, \tau_{\theta_k}=\tau^{(u)})$ for $\tildeT_k\leq D$, and  $S_{k,D}^{(l)}=(S_k= S_D=S, \tau_{\theta_k}=\tau^{(l)})$ 
for $\tildeT_k> D$, $k=1,\cdots,7$. $S$ is the survival function of the $Exp(1)$ 
distribution, $\tau^{(u)}=0.5$ and $\tau^{(l)}$ is  set to either 0.3, 0.5 or 0.7.  The independent censoring variable $C$ was generated from a $\operatorname{Uniform}[0,10]$ distribution, yielding approximately $50\%$ informative censoring and $6\%$ independent censoring for $\tildeT_k$, and $10\%$ independent censoring for $D$. 
Table \ref{table:ex3mix_theta}
presents the estimation results for $\alpha$ and $\tau^{(u)}$ under different combinations of the true values for $\tau_{\alpha}$ and $\tau^{(l)}$. 
Table \ref{table:ex3mix_DP} reports the performance results of the proposed dynamic prediction algorithm. Results in both tables 
show that $\widehat{\alpha}$, $\widehat{\theta}_k$ and the proposed 
survival prediction are all insensitive to the choice of $S_{k,D}^{(l)}$, supporting the applicability and effectiveness of the proposed method on the lower wedge.

\subsection{Computational times}

Using a standard computer (Intel(R) Core(TM) i9-14900KF), the average runtime per replicate across simulation scenarios is reported in Table \ref{tab:runtime}.
Since the current implementation is entirely written in R, computational speed is limited. 
Computational efficiency could be 
improved in future work by implementing performance-critical components in C++.

\begin{table}[ht]
\scriptsize
\tabcolsep=5pt
\caption{Estimation results of $\tau_{\alpha}$ under train datasets of Ex1 and Ex2.}
\label{table:alpha}
\begin{center}
\begin{tabular}{lllllllllll}
\toprule
&&&&\multicolumn{3}{c}{$10\%$ censoring for $D$}&&\multicolumn{3}{c}{$35\%$ censoring for $D$}\tabularnewline
\cmidrule(lr){5-7}\cmidrule(lr){9-11}
\multicolumn{1}{c}{Example}&\multicolumn{1}{c}{$n_{\operatorname{train}}$}&\multicolumn{1}{c}{K}&\multicolumn{1}{c}{truth}&\multicolumn{1}{c}{RBias}&\multicolumn{1}{c}{SD}&\multicolumn{1}{c}{CP}&\multicolumn{1}{c}{}&\multicolumn{1}{c}{RBias}&\multicolumn{1}{c}{SD}&\multicolumn{1}{c}{CP}\tabularnewline
\hline
Ex1&100&3&0.2&-0.069&0.050&0.930&&-0.065&0.059&0.945\tabularnewline
&&&0.5&-0.039&0.079&0.970&&-0.047&0.089&0.970\tabularnewline
\cline{3-11}
&&7&0.2&-0.054&0.032&0.950&&-0.048&0.035&0.940\tabularnewline
&&&0.5&-0.056&0.043&0.930&&-0.068&0.038&0.855\tabularnewline
\cline{2-11}
&200&3&0.2&-0.039&0.038&0.945&&-0.043&0.043&0.955\tabularnewline
&&&0.5&-0.031&0.039&0.955&&-0.053&0.044&0.935\tabularnewline
\cline{3-11}
&&7&0.2&-0.027&0.026&0.945&&-0.037&0.025&0.945\tabularnewline
&&&0.5&-0.024&0.025&0.930&&-0.035&0.024&0.900\tabularnewline
\hline
Ex2&100&3&0.2&-0.006&0.050&0.950&&0.049&0.060&0.945\tabularnewline
&&&0.5&-0.021&0.051&0.950&&-0.025&0.052&0.955\tabularnewline
\cline{3-11}
&&7&0.2&-0.034&0.033&0.940&&0.001&0.035&0.945\tabularnewline
&&&0.5&-0.036&0.04&0.940&&-0.049&0.044&0.940\tabularnewline
\cline{2-11}
&200&3&0.2&-0.009&0.037&0.950&&-0.010&0.040&0.935\tabularnewline
&&&0.5&-0.008&0.037&0.905&&-0.001&0.039&0.960\tabularnewline
\cline{3-11}
&&7&0.2&-0.005&0.026&0.950&&0.001&0.028&0.935\tabularnewline
&&&0.5&-0.016&0.028&0.930&&-0.005&0.028&0.945\tabularnewline
\bottomrule
\end{tabular}\end{center}
\end{table}

\begin{table}[ht]
\scriptsize
\tabcolsep=5pt
\caption{Estimation results of $\tau_{\alpha}$ under train datasets of Ex1 and Ex2 with Gumbel copula structure.}
\label{table:alpha gumber}
\begin{center}
\begin{tabular}{lllllllllll}
\toprule
&&&&\multicolumn{3}{c}{$10\%$ censoring for $D$}&&\multicolumn{3}{c}{$35\%$ censoring for $D$}\tabularnewline
\cmidrule(lr){5-7}\cmidrule(lr){9-11}
\multicolumn{1}{c}{Example}&\multicolumn{1}{c}{$n_{\operatorname{train}}$}&\multicolumn{1}{c}{K}&\multicolumn{1}{c}{truth}&\multicolumn{1}{c}{RBias}&\multicolumn{1}{c}{SD}&\multicolumn{1}{c}{CP}&\multicolumn{1}{c}{}&\multicolumn{1}{c}{RBias}&\multicolumn{1}{c}{SD}&\multicolumn{1}{c}{CP}\tabularnewline
\hline
Ex1&100&3&0.2&-0.051&0.061&0.965&&-0.024&0.062&0.975\tabularnewline
&&&0.5&-0.095&0.061&0.885&&-0.112&0.064&0.885\tabularnewline
\cline{3-11}
&&7&0.2&-0.033&0.047&0.935&&-0.027&0.046&0.94\tabularnewline
&&&0.5&-0.085&0.052&0.88&&-0.081&0.053&0.87\tabularnewline
\cline{2-11}
&200&3&0.2&-0.057&0.045&0.95&&-0.022&0.049&0.93\tabularnewline
&&&0.5&-0.053&0.046&0.925&&-0.069&0.054&0.915\tabularnewline
\cline{3-11}
&&7&0.2&0.007&0.035&0.95&&0.017&0.035&0.945\tabularnewline
&&&0.5&-0.048&0.038&0.925&&-0.057&0.038&0.91\tabularnewline
\hline
Ex2&100&3&0.2&0.058&0.065&0.95&&0.072&0.067&0.945\tabularnewline
&&&0.5&-0.035&0.061&0.935&&-0.009&0.063&0.945\tabularnewline
\cline{3-11}
&&7&0.2&0.024&0.045&0.955&&0.051&0.051&0.92\tabularnewline
&&&0.5&-0.045&0.055&0.92&&-0.046&0.046&0.885\tabularnewline
\cline{2-11}&200&3&0.2&0.016&0.047&0.95&&0.041&0.048&0.95\tabularnewline
&&&0.5&-0.013&0.046&0.945&&-0.012&0.051&0.95\tabularnewline
\cline{3-11}
&&7&0.2&0.022&0.039&0.94&&0.035&0.037&0.955\tabularnewline
&&&0.5&-0.026&0.04&0.915&&-0.011&0.04&0.97\tabularnewline
\bottomrule
\end{tabular}\end{center}
\end{table}

\begin{table}[ht]
\scriptsize
\tabcolsep=5pt
\caption{Estimation results of $\tau_{\alpha}$ under train datasets of Ex1 and Ex2 with Clayton copula structure.}
\label{table:alpha clayton}
\begin{center}
\begin{tabular}{lllllllllll}
\toprule
&&&&\multicolumn{3}{c}{$10\%$ censoring for $D$}&&\multicolumn{3}{c}{$35\%$ censoring for $D$}\tabularnewline
\cmidrule(lr){5-7}\cmidrule(lr){9-11}
\multicolumn{1}{c}{Example}&\multicolumn{1}{c}{$n_{\operatorname{train}}$}&\multicolumn{1}{c}{K}&\multicolumn{1}{c}{truth}&\multicolumn{1}{c}{RBias}&\multicolumn{1}{c}{SD}&\multicolumn{1}{c}{CP}&\multicolumn{1}{c}{}&\multicolumn{1}{c}{RBias}&\multicolumn{1}{c}{SD}&\multicolumn{1}{c}{CP}\tabularnewline
\hline
Ex1&100&3&0.2&0.047&0.141&0.925&&0.111&0.108&0.92\tabularnewline
&&&0.5&-0.121&0.119&0.94&&-0.103&0.132&0.925\tabularnewline
\cline{3-11}
&&7&0.2&-0.089&0.091&0.955&&0.085&0.115&0.92\tabularnewline
&&&0.5&-0.126&0.114&0.97&&-0.169&0.086&0.9\tabularnewline
\cline{2-11}
&200&3&0.2&0.057&0.115&0.93&&0.113&0.104&0.911\tabularnewline
&&&0.5&-0.095&0.085&0.955&&-0.022&0.12&0.94\tabularnewline
\cline{3-11}
&&7&0.2&-0.063&0.056&0.955&&0.108&0.097&0.91\tabularnewline
&&&0.5&-0.099&0.066&0.9&&-0.083&0.078&0.925\tabularnewline
\hline
Ex2&100&3&0.2&0.025&0.106&0.975&&0.11&0.117&0.97\tabularnewline
&&&0.5&-0.073&0.076&0.915&&-0.059&0.081&0.945\tabularnewline
\cline{3-11}
&&7&0.2&-0.09&0.054&0.98&&-0.008&0.057&0.975\tabularnewline
&&&0.5&-0.115&0.061&0.85&&-0.108&0.066&0.895\tabularnewline
\cline{2-11}
&200&3&0.2&-0.002&0.074&0.98&&0.07&0.063&0.98\tabularnewline
&&&0.5&-0.036&0.065&0.95&&-0.037&0.057&0.95\tabularnewline
\cline{3-11}
&&7&0.2&-0.064&0.035&0.98&&-0.012&0.034&0.95\tabularnewline
&&&0.5&-0.07&0.044&0.875&&-0.067&0.049&0.9\tabularnewline
\bottomrule
\end{tabular}\end{center}
\end{table}

\begin{table}[ht]
\scriptsize
\tabcolsep=3pt
\caption{Predictive accuracy of the proposed dynamic prediction (DP) method in comparison with the competing methods for training and testing 
datasets under Ex1 ($n_{\operatorname{train}}=100$, $n_{\operatorname{test}}=50$) with $35\%$ censoring for $D$. Values in each parenthesis are the empirical standard deviation and relative prediction accuracy.} 
\label{table:ex1_n100_cr5}
\begin{center}
\begin{tabular}{llllrlll}
\toprule
\multirow{2}*{}& \multicolumn{3}{c}{In-sample}&&\multicolumn{3}{c}{Out-of-sample}\tabularnewline
\cmidrule(lr){2-4}\cmidrule(lr){6-8}
{Method}&{MSPE}&{QPE}&{IBS}&&{MSPE}&{QPE}&{IBS}\tabularnewline
\hline
\multicolumn{8}{c}{$K=3$, $\tau_{\alpha} =0.2$}\tabularnewline
DP&1.493 (0.61,1)&0.241 (0.055,1)&0.055 (0.015,1)&$$&1.59 (0.899,1)&0.259 (0.076,1)&0.058 (0.017,1)\tabularnewline
P0&2.713 (0.803,0.55)&0.459 (0.073,0.53)&0.141 (0.026,0.39)&$$&2.872 (1.101,0.55)&0.478 (0.082,0.54)&0.141 (0.027,0.41)\tabularnewline
P1&1.906 (0.746,0.78)&0.247 (0.053,0.98)&0.053 (0.012,1.04)&$$&2.038 (1.106,0.78)&0.263 (0.074,0.98)&0.058 (0.016,1)\tabularnewline
P1m&1.892 (0.746,0.79)&0.24 (0.053,1)&0.05 (0.012,1.1)&$$&2.027 (1.106,0.78)&0.257 (0.074,1.01)&0.055 (0.016,1.05)\tabularnewline
PK&2.854 (0.839,0.52)&0.473 (0.069,0.51)&0.15 (0.026,0.37)&$$&3.035 (1.195,0.52)&0.492 (0.085,0.53)&0.151 (0.03,0.38)\tabularnewline
PKm&2.576 (0.791,0.58)&0.42 (0.071,0.57)&0.125 (0.024,0.44)&$$&2.73 (1.121,0.58)&0.438 (0.084,0.59)&0.126 (0.026,0.46)\tabularnewline
\hline
\multicolumn{8}{c}{$K=3$, $\tau_{\alpha} =0.5$}\tabularnewline
DP&1.479 (0.598,1)&0.221 (0.05,1)&0.059 (0.015,1)&$$&1.588 (0.866,1)&0.241 (0.075,1)&0.064 (0.02,1)\tabularnewline
P0&2.777 (0.845,0.53)&0.444 (0.073,0.5)&0.145 (0.025,0.41)&$$&2.918 (1.109,0.54)&0.461 (0.081,0.52)&0.147 (0.024,0.44)\tabularnewline
P1&1.992 (0.8,0.74)&0.228 (0.046,0.97)&0.051 (0.01,1.16)&$$&2.142 (1.11,0.74)&0.247 (0.074,0.98)&0.057 (0.015,1.12)\tabularnewline
P1m&1.99 (0.8,0.74)&0.226 (0.046,0.98)&0.05 (0.01,1.18)&$$&2.143 (1.11,0.74)&0.245 (0.074,0.98)&0.056 (0.015,1.14)\tabularnewline
PK&2.973 (0.907,0.5)&0.474 (0.071,0.47)&0.159 (0.026,0.37)&$$&3.139 (1.223,0.51)&0.488 (0.087,0.49)&0.164 (0.03,0.39)\tabularnewline
PKm&2.654 (0.847,0.56)&0.409 (0.071,0.54)&0.131 (0.024,0.45)&$$&2.794 (1.111,0.57)&0.424 (0.081,0.57)&0.133 (0.023,0.48)\tabularnewline
\hline
\multicolumn{8}{c}{$K=7$, $\tau_{\alpha} =0.2$}\tabularnewline
DP&1.353 (0.619,1)&0.233 (0.051,1)&0.049 (0.012,1)&$$&1.377 (0.78,1)&0.24 (0.062,1)&0.054 (0.014,1)\tabularnewline
P0&2.603 (0.755,0.52)&0.482 (0.078,0.48)&0.168 (0.032,0.29)&$$&2.665 (0.962,0.52)&0.499 (0.08,0.48)&0.169 (0.027,0.32)\tabularnewline
P1&1.707 (0.701,0.79)&0.258 (0.05,0.9)&0.062 (0.012,0.79)&$$&1.728 (0.953,0.8)&0.268 (0.061,0.9)&0.068 (0.018,0.79)\tabularnewline
P1m&1.679 (0.696,0.81)&0.246 (0.05,0.95)&0.056 (0.013,0.88)&$$&1.701 (0.953,0.81)&0.255 (0.061,0.94)&0.061 (0.017,0.89)\tabularnewline
PK&2.751 (0.809,0.49)&0.482 (0.069,0.48)&0.174 (0.029,0.28)&$$&2.83 (1.067,0.49)&0.495 (0.075,0.48)&0.178 (0.032,0.3)\tabularnewline
PKm&2.443 (0.75,0.55)&0.441 (0.074,0.53)&0.15 (0.029,0.33)&$$&2.496 (0.958,0.55)&0.455 (0.074,0.53)&0.151 (0.025,0.36)\tabularnewline
\hline
\multicolumn{8}{c}{$K=7$, $\tau_{\alpha} =0.5$}\tabularnewline
DP&1.316 (0.548,1)&0.217 (0.045,1)&0.058 (0.012,1)&$$&1.34 (0.705,1)&0.221 (0.063,1)&0.062 (0.015,1)\tabularnewline
P0&2.638 (0.747,0.5)&0.472 (0.074,0.46)&0.17 (0.03,0.34)&$$&2.741 (0.906,0.49)&0.485 (0.079,0.46)&0.172 (0.028,0.36)\tabularnewline
P1&1.745 (0.697,0.75)&0.237 (0.046,0.92)&0.06 (0.012,0.97)&$$&1.815 (0.898,0.74)&0.245 (0.064,0.9)&0.064 (0.016,0.97)\tabularnewline
P1m&1.74 (0.695,0.76)&0.235 (0.046,0.92)&0.059 (0.013,0.98)&$$&1.813 (0.888,0.74)&0.242 (0.063,0.91)&0.063 (0.016,0.98)\tabularnewline
PK&2.816 (0.798,0.47)&0.488 (0.072,0.44)&0.179 (0.03,0.32)&$$&2.929 (0.973,0.46)&0.498 (0.079,0.44)&0.184 (0.031,0.34)\tabularnewline
PKm&2.482 (0.738,0.53)&0.431 (0.07,0.5)&0.152 (0.029,0.38)&$$&2.579 (0.909,0.52)&0.442 (0.076,0.5)&0.154 (0.027,0.4)\tabularnewline
\bottomrule
\end{tabular}\end{center}
\end{table}

\begin{table}[ht]
\scriptsize
\tabcolsep=3pt
\caption{Predictive accuracy of the proposed dynamic prediction (DP) method in comparison with the competing methods for training and testing 
datasets under Ex2 ($n_{\operatorname{train}}=100$, $n_{\operatorname{test}}=50$) with $35\%$ censoring for $D$. Values in each parenthesis are the empirical standard deviation and relative prediction accuracy.} 
\label{table:ex2_n100_cr5}
\begin{center}
\begin{tabular}{llllrlll}
\toprule
\multirow{2}*{}& \multicolumn{3}{c}{In-sample}&&\multicolumn{3}{c}{Out-of-sample}\tabularnewline
\cmidrule(lr){2-4}\cmidrule(lr){6-8}
{Method}&{MSPE}&{QPE}&{IBS}&&{MSPE}&{QPE}&{IBS}\tabularnewline
\hline
\multicolumn{8}{c}{$K=3$, $\tau_{\alpha} =0.2$}\tabularnewline
DP&1.657 (0.608,1)&0.282 (0.057,1)&0.079 (0.017,1)&$$&1.736 (0.796,1)&0.295 (0.071,1)&0.08 (0.02,1)\tabularnewline
P0&2.829 (0.852,0.59)&0.444 (0.073,0.64)&0.145 (0.027,0.54)&$$&2.924 (1.007,0.59)&0.459 (0.085,0.64)&0.143 (0.029,0.56)\tabularnewline
P1&2.529 (0.832,0.66)&0.368 (0.062,0.77)&0.112 (0.021,0.71)&$$&2.617 (1.045,0.66)&0.379 (0.076,0.78)&0.114 (0.029,0.7)\tabularnewline
P1m&2.398 (0.81,0.69)&0.336 (0.061,0.84)&0.098 (0.02,0.81)&$$&2.483 (1,0.7)&0.347 (0.072,0.85)&0.099 (0.022,0.81)\tabularnewline
PK&2.524 (0.827,0.66)&0.366 (0.063,0.77)&0.113 (0.022,0.7)&$$&2.635 (1.049,0.66)&0.383 (0.075,0.77)&0.115 (0.028,0.7)\tabularnewline
PKm&2.4 (0.813,0.69)&0.336 (0.062,0.84)&0.098 (0.021,0.81)&$$&2.493 (1.011,0.7)&0.351 (0.074,0.84)&0.1 (0.024,0.8)\tabularnewline
\hline
\multicolumn{8}{c}{$K=3$, $\tau_{\alpha} =0.5$}\tabularnewline
DP&1.822 (0.614,1)&0.306 (0.06,1)&0.097 (0.02,1)&$$&1.849 (0.821,1)&0.32 (0.074,1)&0.102 (0.023,1)\tabularnewline
P0&2.952 (0.828,0.62)&0.442 (0.069,0.69)&0.154 (0.028,0.63)&$$&3.003 (1.066,0.62)&0.459 (0.08,0.7)&0.157 (0.03,0.65)\tabularnewline
P1&2.686 (0.824,0.68)&0.365 (0.064,0.84)&0.116 (0.023,0.84)&$$&2.72 (1.088,0.68)&0.381 (0.077,0.84)&0.123 (0.031,0.83)\tabularnewline
P1m&2.588 (0.82,0.7)&0.342 (0.062,0.89)&0.105 (0.022,0.92)&$$&2.629 (1.074,0.7)&0.357 (0.073,0.9)&0.112 (0.028,0.91)\tabularnewline
PK&2.672 (0.83,0.68)&0.364 (0.063,0.84)&0.116 (0.023,0.84)&$$&2.708 (1.078,0.68)&0.378 (0.073,0.85)&0.121 (0.03,0.84)\tabularnewline
PKm&2.576 (0.813,0.71)&0.339 (0.061,0.9)&0.105 (0.022,0.92)&$$&2.626 (1.069,0.7)&0.356 (0.07,0.9)&0.112 (0.027,0.91)\tabularnewline
\hline
\multicolumn{8}{c}{$K=7$, $\tau_{\alpha} =0.2$}\tabularnewline
DP&1.39 (0.533,1)&0.261 (0.05,1)&0.066 (0.013,1)&$$&1.517 (0.853,1)&0.271 (0.068,1)&0.069 (0.016,1)\tabularnewline
P0&2.555 (0.711,0.54)&0.47 (0.068,0.56)&0.168 (0.027,0.39)&$$&2.728 (1.058,0.56)&0.486 (0.08,0.56)&0.169 (0.029,0.41)\tabularnewline
P1&2.237 (0.705,0.62)&0.385 (0.059,0.68)&0.131 (0.024,0.5)&$$&2.393 (1.076,0.63)&0.397 (0.07,0.68)&0.132 (0.024,0.52)\tabularnewline
P1m&2.092 (0.68,0.66)&0.353 (0.059,0.74)&0.113 (0.023,0.58)&$$&2.236 (1.045,0.68)&0.365 (0.071,0.74)&0.114 (0.022,0.61)\tabularnewline
PK&2.229 (0.695,0.62)&0.384 (0.056,0.68)&0.131 (0.023,0.5)&$$&2.408 (1.102,0.63)&0.394 (0.072,0.69)&0.133 (0.025,0.52)\tabularnewline
PKm&2.089 (0.665,0.67)&0.353 (0.057,0.74)&0.114 (0.022,0.58)&$$&2.243 (1.051,0.68)&0.366 (0.071,0.74)&0.116 (0.023,0.59)\tabularnewline
\hline
\multicolumn{8}{c}{$K=7$, $\tau_{\alpha} =0.5$}\tabularnewline
DP&1.62 (0.608,1)&0.293 (0.054,1)&0.091 (0.018,1)&$$&1.542 (0.779,1)&0.298 (0.072,1)&0.093 (0.021,1)\tabularnewline
P0&2.709 (0.841,0.6)&0.458 (0.073,0.64)&0.169 (0.036,0.54)&$$&2.651 (0.971,0.58)&0.465 (0.081,0.64)&0.166 (0.03,0.56)\tabularnewline
P1&2.408 (0.791,0.67)&0.376 (0.057,0.78)&0.129 (0.025,0.71)&$$&2.324 (0.993,0.66)&0.38 (0.078,0.78)&0.129 (0.028,0.72)\tabularnewline
P1m&2.295 (0.78,0.71)&0.353 (0.058,0.83)&0.117 (0.025,0.78)&$$&2.215 (0.961,0.7)&0.357 (0.073,0.83)&0.117 (0.025,0.79)\tabularnewline
PK&2.407 (0.816,0.67)&0.375 (0.059,0.78)&0.127 (0.024,0.72)&$$&2.336 (1.014,0.66)&0.383 (0.077,0.78)&0.13 (0.028,0.72)\tabularnewline
PKm&2.293 (0.791,0.71)&0.352 (0.058,0.83)&0.116 (0.024,0.78)&$$&2.226 (0.974,0.69)&0.359 (0.073,0.83)&0.117 (0.026,0.79)\tabularnewline
\bottomrule
\end{tabular}\end{center}
\end{table}


\begin{table}
\scriptsize
\tabcolsep=3pt
\caption{Predictive accuracy of the proposed dynamic prediction (DP) method in comparison with the competing methods for training and testing datasets under Ex1 ($n_{train}=100$, $n_{test}=50$) with $10\%$ censoring for $D$. Values in each parenthesis are the empirical standard deviation and relative prediction accuracy.}
\label{table:ex1_n100_cr20}
\begin{center}
\begin{tabular}{llllrlll}
\toprule
\multirow{2}*{}& \multicolumn{3}{c}{In-sample}&&\multicolumn{3}{c}{Out-of-sample}\tabularnewline
\cmidrule(lr){2-4}\cmidrule(lr){6-8}
{Method}&{MSPE}&{QPE}&{IBS}&&{MSPE}&{QPE}&{IBS}\tabularnewline
\hline
\multicolumn{8}{c}{$K=3$, $\tau_{\alpha} =0.2$}\tabularnewline
DP&0.856 (0.337,1)&0.197 (0.037,1)&0.038 (0.006,1)&$$&0.9 (0.555,1)&0.207 (0.055,1)&0.039 (0.008,1)\tabularnewline
P0&2.055 (0.549,0.42)&0.426 (0.063,0.46)&0.085 (0.011,0.45)&$$&2.151 (0.704,0.42)&0.438 (0.081,0.47)&0.086 (0.013,0.45)\tabularnewline
P1&0.921 (0.404,0.93)&0.213 (0.036,0.92)&0.039 (0.006,0.97)&$$&0.973 (0.591,0.92)&0.221 (0.056,0.94)&0.041 (0.009,0.95)\tabularnewline
P1m&0.901 (0.398,0.95)&0.205 (0.036,0.96)&0.037 (0.006,1.03)&$$&0.957 (0.585,0.94)&0.214 (0.057,0.97)&0.039 (0.009,1)\tabularnewline
PK&2.348 (0.637,0.36)&0.474 (0.066,0.42)&0.096 (0.013,0.4)&$$&2.429 (0.929,0.37)&0.482 (0.083,0.43)&0.098 (0.016,0.4)\tabularnewline
PKm&1.842 (0.528,0.46)&0.387 (0.064,0.51)&0.077 (0.012,0.49)&$$&1.94 (0.703,0.46)&0.399 (0.077,0.52)&0.079 (0.013,0.49)\tabularnewline
\hline
\multicolumn{8}{c}{$K=3$, $\tau_{\alpha} =0.5$}\tabularnewline
DP&0.837 (0.349,1)&0.185 (0.036,1)&0.033 (0.006,1)&$$&0.888 (0.534,1)&0.2 (0.058,1)&0.035 (0.009,1)\tabularnewline
P0&2.035 (0.607,0.41)&0.42 (0.067,0.44)&0.073 (0.011,0.45)&$$&2.141 (0.708,0.41)&0.442 (0.073,0.45)&0.076 (0.011,0.46)\tabularnewline
P1&0.925 (0.447,0.9)&0.204 (0.036,0.91)&0.033 (0.007,1)&$$&0.958 (0.672,0.93)&0.214 (0.056,0.93)&0.034 (0.01,1.03)\tabularnewline
P1m&0.93 (0.448,0.9)&0.203 (0.037,0.91)&0.033 (0.007,1)&$$&0.965 (0.667,0.92)&0.214 (0.055,0.93)&0.034 (0.01,1.03)\tabularnewline
PK&2.338 (0.661,0.36)&0.481 (0.066,0.38)&0.084 (0.012,0.39)&$$&2.426 (0.883,0.37)&0.497 (0.079,0.4)&0.086 (0.015,0.41)\tabularnewline
PKm&1.836 (0.57,0.46)&0.385 (0.063,0.48)&0.067 (0.011,0.49)&$$&1.943 (0.702,0.46)&0.407 (0.068,0.49)&0.07 (0.011,0.5)\tabularnewline
\hline
\multicolumn{8}{c}{$K=7$, $\tau_{\alpha} =0.2$}\tabularnewline
DP&0.755 (0.369,1)&0.172 (0.035,1)&0.035 (0.006,1)&$$&0.751 (0.473,1)&0.182 (0.045,1)&0.038 (0.007,1)\tabularnewline
P0&2.195 (0.672,0.34)&0.446 (0.067,0.39)&0.099 (0.013,0.35)&$$&2.243 (0.718,0.33)&0.459 (0.074,0.4)&0.102 (0.014,0.37)\tabularnewline
P1&0.972 (0.434,0.78)&0.216 (0.039,0.8)&0.045 (0.007,0.78)&$$&0.962 (0.543,0.78)&0.223 (0.049,0.82)&0.048 (0.01,0.79)\tabularnewline
P1m&0.918 (0.415,0.82)&0.2 (0.038,0.86)&0.041 (0.007,0.85)&$$&0.919 (0.513,0.82)&0.208 (0.048,0.88)&0.044 (0.009,0.86)\tabularnewline
PK&2.427 (0.753,0.31)&0.469 (0.07,0.37)&0.109 (0.015,0.32)&$$&2.412 (0.803,0.31)&0.48 (0.071,0.38)&0.113 (0.016,0.34)\tabularnewline
PKm&1.965 (0.621,0.38)&0.404 (0.065,0.43)&0.09 (0.013,0.39)&$$&2.013 (0.694,0.37)&0.418 (0.072,0.44)&0.094 (0.014,0.4)\tabularnewline
\hline
\multicolumn{8}{c}{$K=7$, $\tau_{\alpha} =0.5$}\tabularnewline
DP&0.746 (0.349,1)&0.17 (0.033,1)&0.029 (0.005,1)&$$&0.776 (0.458,1)&0.189 (0.048,1)&0.032 (0.007,1)\tabularnewline
P0&2.119 (0.601,0.35)&0.441 (0.063,0.39)&0.075 (0.01,0.39)&$$&2.21 (0.658,0.35)&0.462 (0.069,0.41)&0.078 (0.011,0.41)\tabularnewline
P1&0.908 (0.43,0.82)&0.208 (0.038,0.82)&0.033 (0.006,0.88)&$$&0.969 (0.581,0.8)&0.226 (0.053,0.84)&0.037 (0.009,0.86)\tabularnewline
P1m&0.91 (0.434,0.82)&0.206 (0.039,0.83)&0.033 (0.007,0.88)&$$&0.971 (0.57,0.8)&0.224 (0.053,0.84)&0.036 (0.009,0.89)\tabularnewline
PK&2.359 (0.694,0.32)&0.485 (0.067,0.35)&0.084 (0.012,0.35)&$$&2.479 (0.81,0.31)&0.503 (0.068,0.38)&0.088 (0.014,0.36)\tabularnewline
PKm&1.9 (0.578,0.39)&0.403 (0.063,0.42)&0.068 (0.011,0.43)&$$&1.986 (0.642,0.39)&0.422 (0.064,0.45)&0.072 (0.011,0.44)\tabularnewline
\bottomrule
\end{tabular}\end{center}
\end{table}

\begin{table}
\scriptsize
\tabcolsep=3pt
\caption{Predictive accuracy of the proposed dynamic prediction (DP) method in comparison with the competing methods for training and testing datasets under Ex1 ($n_{train}=200$, $n_{test}=50$) with $10\%$ censoring for $D$. Values in each parenthesis are the empirical standard deviation and relative prediction accuracy.}
\label{table:ex1_n200_cr20}
\begin{center}
\begin{tabular}{llllrlll}
\toprule
\multirow{2}*{}& \multicolumn{3}{c}{In-sample}&&\multicolumn{3}{c}{Out-of-sample}\tabularnewline
\cmidrule(lr){2-4}\cmidrule(lr){6-8}
{Method}&{MSPE}&{QPE}&{IBS}&&{MSPE}&{QPE}&{IBS}\tabularnewline
\hline
\multicolumn{8}{c}{$K=3$, $\tau_{\alpha} =0.2$}\tabularnewline
DP&0.851 (0.253,1)&0.199 (0.027,1)&0.033 (0.004,1)&$$&0.834 (0.504,1)&0.206 (0.052,1)&0.034 (0.008,1)\tabularnewline
P0&2.085 (0.428,0.41)&0.431 (0.044,0.46)&0.074 (0.007,0.45)&$$&2.076 (0.636,0.4)&0.439 (0.065,0.47)&0.074 (0.01,0.46)\tabularnewline
P1&0.939 (0.301,0.91)&0.217 (0.027,0.92)&0.034 (0.004,0.97)&$$&0.903 (0.598,0.92)&0.223 (0.054,0.92)&0.035 (0.009,0.97)\tabularnewline
P1m&0.915 (0.298,0.93)&0.209 (0.027,0.95)&0.033 (0.004,1)&$$&0.888 (0.595,0.94)&0.215 (0.053,0.96)&0.033 (0.009,1.03)\tabularnewline
PK&2.381 (0.492,0.36)&0.484 (0.046,0.41)&0.084 (0.008,0.39)&$$&2.339 (0.843,0.36)&0.493 (0.076,0.42)&0.084 (0.015,0.4)\tabularnewline
PKm&1.858 (0.4,0.46)&0.39 (0.044,0.51)&0.067 (0.007,0.49)&$$&1.86 (0.624,0.45)&0.4 (0.062,0.51)&0.067 (0.01,0.51)\tabularnewline
\hline
\multicolumn{8}{c}{$K=3$, $\tau_{\alpha} =0.5$}\tabularnewline
DP&0.822 (0.242,1)&0.19 (0.024,1)&0.03 (0.003,1)&$$&0.8 (0.447,1)&0.194 (0.049,1)&0.03 (0.007,1)\tabularnewline
P0&2.072 (0.439,0.4)&0.435 (0.042,0.44)&0.067 (0.006,0.45)&$$&2.057 (0.608,0.39)&0.437 (0.059,0.44)&0.066 (0.008,0.45)\tabularnewline
P1&0.918 (0.311,0.9)&0.213 (0.025,0.89)&0.031 (0.004,0.97)&$$&0.888 (0.562,0.9)&0.213 (0.049,0.91)&0.03 (0.007,1)\tabularnewline
P1m&0.921 (0.312,0.89)&0.212 (0.025,0.9)&0.03 (0.004,1)&$$&0.892 (0.563,0.9)&0.213 (0.049,0.91)&0.03 (0.008,1)\tabularnewline
PK&2.401 (0.51,0.34)&0.498 (0.044,0.38)&0.077 (0.008,0.39)&$$&2.352 (0.777,0.34)&0.496 (0.072,0.39)&0.076 (0.012,0.39)\tabularnewline
PKm&1.871 (0.426,0.44)&0.399 (0.041,0.48)&0.061 (0.006,0.49)&$$&1.848 (0.597,0.43)&0.4 (0.057,0.48)&0.061 (0.008,0.49)\tabularnewline
\hline
\multicolumn{8}{c}{$K=7$, $\tau_{\alpha} =0.2$}\tabularnewline
DP&0.675 (0.232,1)&0.177 (0.025,1)&0.027 (0.003,1)&$$&0.702 (0.447,1)&0.182 (0.046,1)&0.027 (0.006,1)\tabularnewline
P0&2.184 (0.432,0.31)&0.457 (0.041,0.39)&0.077 (0.007,0.35)&$$&2.241 (0.533,0.31)&0.467 (0.059,0.39)&0.077 (0.009,0.35)\tabularnewline
P1&0.907 (0.282,0.74)&0.221 (0.027,0.8)&0.036 (0.004,0.75)&$$&0.932 (0.511,0.75)&0.222 (0.05,0.82)&0.036 (0.009,0.75)\tabularnewline
P1m&0.855 (0.269,0.79)&0.205 (0.026,0.86)&0.032 (0.004,0.84)&$$&0.878 (0.473,0.8)&0.208 (0.046,0.88)&0.033 (0.007,0.82)\tabularnewline
PK&2.374 (0.47,0.28)&0.484 (0.043,0.37)&0.084 (0.007,0.32)&$$&2.402 (0.835,0.29)&0.491 (0.082,0.37)&0.085 (0.015,0.32)\tabularnewline
PKm&1.937 (0.401,0.35)&0.413 (0.04,0.43)&0.07 (0.007,0.39)&$$&1.987 (0.526,0.35)&0.422 (0.059,0.43)&0.07 (0.009,0.39)\tabularnewline
\hline
\multicolumn{8}{c}{$K=7$, $\tau_{\alpha} =0.5$}\tabularnewline
DP&0.677 (0.227,1)&0.172 (0.024,1)&0.026 (0.003,1)&$$&0.745 (0.45,1)&0.187 (0.05,1)&0.027 (0.007,1)\tabularnewline
P0&2.103 (0.449,0.32)&0.448 (0.042,0.38)&0.068 (0.007,0.38)&$$&2.178 (0.634,0.34)&0.461 (0.063,0.41)&0.069 (0.009,0.39)\tabularnewline
P1&0.879 (0.301,0.77)&0.215 (0.028,0.8)&0.031 (0.004,0.84)&$$&0.951 (0.59,0.78)&0.228 (0.055,0.82)&0.033 (0.009,0.82)\tabularnewline
P1m&0.878 (0.299,0.77)&0.215 (0.028,0.8)&0.031 (0.004,0.84)&$$&0.945 (0.573,0.79)&0.228 (0.054,0.82)&0.032 (0.008,0.84)\tabularnewline
PK&2.368 (0.516,0.29)&0.497 (0.048,0.35)&0.077 (0.008,0.34)&$$&2.47 (0.849,0.3)&0.511 (0.076,0.37)&0.078 (0.013,0.35)\tabularnewline
PKm&1.889 (0.427,0.36)&0.41 (0.043,0.42)&0.063 (0.007,0.41)&$$&1.966 (0.636,0.38)&0.424 (0.063,0.44)&0.064 (0.009,0.42)\tabularnewline
\bottomrule
\end{tabular}\end{center}
\end{table}

\begin{table}
\scriptsize
\tabcolsep=3pt
\caption{Predictive accuracy of the proposed dynamic prediction (DP) method in comparison with the competing methods for training and testing datasets under Ex1 ($n_{train}=200$, $n_{test}=50$) with $35\%$ censoring for $D$. Values in each parenthesis are the empirical standard deviation and relative prediction accuracy.}
\label{table:ex1_n200_cr5}
\begin{center}
\begin{tabular}{llllrlll}
\toprule
\multirow{2}*{}& \multicolumn{3}{c}{In-sample}&&\multicolumn{3}{c}{Out-of-sample}\tabularnewline
\cmidrule(lr){2-4}\cmidrule(lr){6-8}
{Method}&{MSPE}&{QPE}&{IBS}&&{MSPE}&{QPE}&{IBS}\tabularnewline
\hline
\multicolumn{8}{c}{$K=3$, $\tau_{\alpha} =0.2$}\tabularnewline
DP&1.419 (0.37,1)&0.237 (0.034,1)&0.059 (0.009,1)&$$&1.48 (0.799,1)&0.254 (0.07,1)&0.057 (0.016,1)\tabularnewline
P0&2.717 (0.492,0.52)&0.458 (0.049,0.52)&0.153 (0.019,0.39)&$$&2.748 (0.999,0.54)&0.475 (0.07,0.53)&0.143 (0.021,0.4)\tabularnewline
P1&1.884 (0.462,0.75)&0.243 (0.034,0.98)&0.057 (0.009,1.04)&$$&1.906 (1.019,0.78)&0.26 (0.068,0.98)&0.057 (0.014,1)\tabularnewline
P1m&1.872 (0.46,0.76)&0.236 (0.033,1)&0.054 (0.009,1.09)&$$&1.89 (1.014,0.78)&0.253 (0.068,1)&0.054 (0.014,1.06)\tabularnewline
PK&2.863 (0.532,0.5)&0.477 (0.048,0.5)&0.164 (0.021,0.36)&$$&2.896 (1.078,0.51)&0.489 (0.08,0.52)&0.155 (0.029,0.37)\tabularnewline
PKm&2.575 (0.491,0.55)&0.42 (0.05,0.56)&0.136 (0.019,0.43)&$$&2.609 (1.019,0.57)&0.436 (0.072,0.58)&0.128 (0.02,0.45)\tabularnewline
\hline
\multicolumn{8}{c}{$K=3$, $\tau_{\alpha} =0.5$}\tabularnewline
DP&1.374 (0.39,1)&0.22 (0.032,1)&0.06 (0.011,1)&$$&1.405 (0.856,1)&0.231 (0.077,1)&0.059 (0.015,1)\tabularnewline
P0&2.738 (0.554,0.5)&0.452 (0.045,0.49)&0.155 (0.019,0.39)&$$&2.79 (1.088,0.5)&0.468 (0.07,0.49)&0.149 (0.02,0.4)\tabularnewline
P1&1.895 (0.519,0.73)&0.23 (0.03,0.96)&0.053 (0.008,1.13)&$$&1.924 (1.107,0.73)&0.24 (0.075,0.96)&0.053 (0.014,1.11)\tabularnewline
P1m&1.895 (0.518,0.73)&0.228 (0.031,0.96)&0.052 (0.008,1.15)&$$&1.924 (1.106,0.73)&0.239 (0.075,0.97)&0.052 (0.014,1.13)\tabularnewline
PK&2.935 (0.583,0.47)&0.485 (0.043,0.45)&0.17 (0.021,0.35)&$$&2.999 (1.169,0.47)&0.501 (0.086,0.46)&0.164 (0.029,0.36)\tabularnewline
PKm&2.598 (0.54,0.53)&0.414 (0.044,0.53)&0.138 (0.018,0.43)&$$&2.654 (1.109,0.53)&0.431 (0.073,0.54)&0.134 (0.021,0.44)\tabularnewline
\hline
\multicolumn{8}{c}{$K=7$, $\tau_{\alpha} =0.2$}\tabularnewline
DP&1.288 (0.381,1)&0.232 (0.033,1)&0.047 (0.009,1)&$$&1.192 (0.71,1)&0.233 (0.067,1)&0.047 (0.013,1)\tabularnewline
P0&2.567 (0.501,0.5)&0.484 (0.045,0.48)&0.159 (0.02,0.3)&$$&2.495 (0.87,0.48)&0.485 (0.063,0.48)&0.151 (0.022,0.31)\tabularnewline
P1&1.655 (0.451,0.78)&0.258 (0.031,0.9)&0.06 (0.008,0.78)&$$&1.581 (0.865,0.75)&0.258 (0.061,0.9)&0.058 (0.014,0.81)\tabularnewline
P1m&1.629 (0.45,0.79)&0.246 (0.031,0.94)&0.054 (0.008,0.87)&$$&1.549 (0.863,0.77)&0.246 (0.06,0.95)&0.052 (0.013,0.9)\tabularnewline
PK&2.726 (0.545,0.47)&0.492 (0.045,0.47)&0.168 (0.02,0.28)&$$&2.64 (0.945,0.45)&0.492 (0.072,0.47)&0.163 (0.027,0.29)\tabularnewline
PKm&2.412 (0.495,0.53)&0.445 (0.045,0.52)&0.143 (0.019,0.33)&$$&2.336 (0.88,0.51)&0.446 (0.065,0.52)&0.136 (0.021,0.35)\tabularnewline
\hline
\multicolumn{8}{c}{$K=7$, $\tau_{\alpha} =0.5$}\tabularnewline
DP&1.347 (0.389,1)&0.221 (0.031,1)&0.052 (0.01,1)&$$&1.295 (0.764,1)&0.228 (0.063,1)&0.05 (0.012,1)\tabularnewline
P0&2.711 (0.552,0.5)&0.475 (0.047,0.47)&0.155 (0.021,0.34)&$$&2.602 (0.987,0.5)&0.478 (0.068,0.48)&0.145 (0.023,0.34)\tabularnewline
P1&1.827 (0.508,0.74)&0.246 (0.031,0.9)&0.056 (0.008,0.93)&$$&1.739 (0.982,0.74)&0.255 (0.063,0.89)&0.056 (0.015,0.89)\tabularnewline
P1m&1.823 (0.507,0.74)&0.244 (0.031,0.91)&0.055 (0.008,0.95)&$$&1.731 (0.982,0.75)&0.253 (0.062,0.9)&0.055 (0.016,0.91)\tabularnewline
PK&2.916 (0.575,0.46)&0.498 (0.044,0.44)&0.166 (0.02,0.31)&$$&2.838 (1.106,0.46)&0.506 (0.077,0.45)&0.159 (0.029,0.31)\tabularnewline
PKm&2.568 (0.538,0.52)&0.438 (0.044,0.5)&0.14 (0.019,0.37)&$$&2.468 (0.991,0.52)&0.443 (0.065,0.51)&0.132 (0.021,0.38)\tabularnewline
\bottomrule
\end{tabular}\end{center}
\end{table}

\begin{table}
\scriptsize
\tabcolsep=3pt
\caption{Predictive accuracy of the proposed dynamic prediction (DP) method in comparison with the competing methods for training and testing datasets under Ex2 ($n_{train}=100$, $n_{test}=50$) with $10\%$ censoring for $D$. Values in each parenthesis are the empirical standard deviation and relative prediction accuracy.}
\label{table:ex2_n100_cr20}
\begin{center}
\begin{tabular}{llllrlll}
\toprule
\multirow{2}*{}& \multicolumn{3}{c}{In-sample}&&\multicolumn{3}{c}{Out-of-sample}\tabularnewline
\cmidrule(lr){2-4}\cmidrule(lr){6-8}
{Method}&{MSPE}&{QPE}&{IBS}&&{MSPE}&{QPE}&{IBS}\tabularnewline
\hline
\multicolumn{8}{c}{$K=3$, $\tau_{\alpha} =0.2$}\tabularnewline
DP&1.178 (0.452,1)&0.266 (0.05,1)&0.053 (0.009,1)&$$&1.282 (0.57,1)&0.282 (0.062,1)&0.056 (0.011,1)\tabularnewline
P0&2.089 (0.626,0.56)&0.432 (0.068,0.62)&0.086 (0.012,0.62)&$$&2.211 (0.689,0.58)&0.449 (0.079,0.63)&0.09 (0.015,0.62)\tabularnewline
P1&1.728 (0.577,0.68)&0.369 (0.057,0.72)&0.074 (0.011,0.72)&$$&1.82 (0.732,0.7)&0.378 (0.073,0.75)&0.077 (0.015,0.73)\tabularnewline
P1m&1.505 (0.534,0.78)&0.325 (0.055,0.82)&0.064 (0.011,0.83)&$$&1.611 (0.656,0.8)&0.34 (0.068,0.83)&0.067 (0.013,0.84)\tabularnewline
PK&1.724 (0.627,0.68)&0.37 (0.06,0.72)&0.074 (0.012,0.72)&$$&1.82 (0.722,0.7)&0.384 (0.072,0.73)&0.077 (0.016,0.73)\tabularnewline
PKm&1.497 (0.554,0.79)&0.325 (0.056,0.82)&0.063 (0.011,0.84)&$$&1.6 (0.638,0.8)&0.341 (0.064,0.83)&0.067 (0.013,0.84)\tabularnewline
\hline
\multicolumn{8}{c}{$K=3$, $\tau_{\alpha} =0.5$}\tabularnewline
DP&1.361 (0.499,1)&0.305 (0.06,1)&0.053 (0.009,1)&$$&1.447 (0.582,1)&0.326 (0.063,1)&0.056 (0.011,1)\tabularnewline
P0&2.057 (0.632,0.66)&0.433 (0.068,0.7)&0.075 (0.012,0.71)&$$&2.148 (0.709,0.67)&0.455 (0.069,0.72)&0.08 (0.013,0.7)\tabularnewline
P1&1.686 (0.565,0.81)&0.366 (0.062,0.83)&0.063 (0.011,0.84)&$$&1.744 (0.73,0.83)&0.382 (0.065,0.85)&0.067 (0.014,0.84)\tabularnewline
P1m&1.555 (0.552,0.88)&0.34 (0.061,0.9)&0.057 (0.011,0.93)&$$&1.628 (0.693,0.89)&0.357 (0.063,0.91)&0.062 (0.013,0.9)\tabularnewline
PK&1.662 (0.59,0.82)&0.365 (0.065,0.84)&0.062 (0.011,0.85)&$$&1.772 (0.761,0.82)&0.389 (0.065,0.84)&0.068 (0.014,0.82)\tabularnewline
PKm&1.548 (0.562,0.88)&0.338 (0.063,0.9)&0.057 (0.011,0.93)&$$&1.638 (0.71,0.88)&0.36 (0.061,0.91)&0.062 (0.013,0.9)\tabularnewline
\hline
\multicolumn{8}{c}{$K=7$, $\tau_{\alpha} =0.2$}\tabularnewline
DP&1.046 (0.416,1)&0.228 (0.047,1)&0.049 (0.009,1)&$$&1.027 (0.537,1)&0.241 (0.058,1)&0.052 (0.01,1)\tabularnewline
P0&2.329 (0.673,0.45)&0.458 (0.066,0.5)&0.103 (0.013,0.48)&$$&2.37 (0.709,0.43)&0.471 (0.069,0.51)&0.106 (0.013,0.49)\tabularnewline
P1&1.849 (0.623,0.57)&0.379 (0.058,0.6)&0.087 (0.012,0.56)&$$&1.81 (0.693,0.57)&0.385 (0.065,0.63)&0.089 (0.015,0.58)\tabularnewline
P1m&1.605 (0.555,0.65)&0.332 (0.056,0.69)&0.073 (0.012,0.67)&$$&1.614 (0.591,0.64)&0.343 (0.058,0.7)&0.076 (0.011,0.68)\tabularnewline
PK&1.823 (0.571,0.57)&0.377 (0.054,0.6)&0.086 (0.011,0.57)&$$&1.791 (0.692,0.57)&0.386 (0.063,0.62)&0.089 (0.014,0.58)\tabularnewline
PKm&1.583 (0.528,0.66)&0.329 (0.052,0.69)&0.073 (0.01,0.67)&$$&1.612 (0.596,0.64)&0.343 (0.059,0.7)&0.076 (0.011,0.68)\tabularnewline
\hline
\multicolumn{8}{c}{$K=7$, $\tau_{\alpha} =0.5$}\tabularnewline
DP&1.336 (0.498,1)&0.304 (0.054,1)&0.052 (0.009,1)&$$&1.438 (0.555,1)&0.321 (0.064,1)&0.055 (0.011,1)\tabularnewline
P0&2.268 (0.689,0.59)&0.464 (0.07,0.66)&0.08 (0.012,0.65)&$$&2.395 (0.743,0.6)&0.482 (0.073,0.67)&0.082 (0.012,0.67)\tabularnewline
P1&1.793 (0.6,0.75)&0.388 (0.058,0.78)&0.067 (0.011,0.78)&$$&1.89 (0.741,0.76)&0.398 (0.071,0.81)&0.069 (0.013,0.8)\tabularnewline
P1m&1.665 (0.591,0.8)&0.36 (0.059,0.84)&0.061 (0.011,0.85)&$$&1.767 (0.693,0.81)&0.373 (0.067,0.86)&0.063 (0.012,0.87)\tabularnewline
PK&1.801 (0.614,0.74)&0.388 (0.057,0.78)&0.067 (0.011,0.78)&$$&1.895 (0.76,0.76)&0.397 (0.068,0.81)&0.069 (0.013,0.8)\tabularnewline
PKm&1.663 (0.591,0.8)&0.36 (0.059,0.84)&0.061 (0.011,0.85)&$$&1.766 (0.698,0.81)&0.371 (0.063,0.87)&0.063 (0.012,0.87)\tabularnewline
\bottomrule
\end{tabular}\end{center}
\end{table}

\begin{table}
\scriptsize
\tabcolsep=3pt
\caption{Predictive accuracy of the proposed dynamic prediction (DP) method in comparison with the competing methods for training and testing datasets under Ex2 ($n_{train}=200$, $n_{test}=50$) with $10\%$ censoring for $D$. Values in each parenthesis are the empirical standard deviation and relative prediction accuracy.}
\label{table:ex2_n200_cr20}
\begin{center}
\begin{tabular}{llllrlll}
\toprule
\multirow{2}*{}& \multicolumn{3}{c}{In-sample}&&\multicolumn{3}{c}{Out-of-sample}\tabularnewline
\cmidrule(lr){2-4}\cmidrule(lr){6-8}
{Method}&{MSPE}&{QPE}&{IBS}&&{MSPE}&{QPE}&{IBS}\tabularnewline
\hline
\multicolumn{8}{c}{$K=3$, $\tau_{\alpha} =0.2$}\tabularnewline
DP&1.126 (0.303,1)&0.267 (0.031,1)&0.046 (0.005,1)&$$&1.129 (0.513,1)&0.276 (0.061,1)&0.047 (0.01,1)\tabularnewline
P0&2.046 (0.424,0.55)&0.438 (0.043,0.61)&0.075 (0.007,0.61)&$$&2.035 (0.57,0.55)&0.446 (0.061,0.62)&0.076 (0.011,0.62)\tabularnewline
P1&1.655 (0.374,0.68)&0.373 (0.035,0.72)&0.064 (0.006,0.72)&$$&1.645 (0.668,0.69)&0.381 (0.072,0.72)&0.065 (0.013,0.72)\tabularnewline
P1m&1.452 (0.351,0.78)&0.327 (0.035,0.82)&0.055 (0.006,0.84)&$$&1.433 (0.586,0.79)&0.334 (0.063,0.83)&0.056 (0.011,0.84)\tabularnewline
PK&1.648 (0.386,0.68)&0.373 (0.038,0.72)&0.064 (0.007,0.72)&$$&1.64 (0.685,0.69)&0.379 (0.072,0.73)&0.065 (0.014,0.72)\tabularnewline
PKm&1.449 (0.359,0.78)&0.328 (0.036,0.81)&0.055 (0.006,0.84)&$$&1.437 (0.593,0.79)&0.335 (0.063,0.82)&0.057 (0.011,0.82)\tabularnewline
\hline
\multicolumn{8}{c}{$K=3$, $\tau_{\alpha} =0.5$}\tabularnewline
DP&1.379 (0.337,1)&0.314 (0.038,1)&0.048 (0.005,1)&$$&1.417 (0.616,1)&0.322 (0.063,1)&0.049 (0.01,1)\tabularnewline
P0&2.145 (0.457,0.64)&0.448 (0.047,0.7)&0.069 (0.007,0.7)&$$&2.189 (0.774,0.65)&0.456 (0.069,0.71)&0.07 (0.012,0.7)\tabularnewline
P1&1.74 (0.416,0.79)&0.376 (0.041,0.84)&0.058 (0.007,0.83)&$$&1.779 (0.845,0.8)&0.384 (0.072,0.84)&0.058 (0.014,0.84)\tabularnewline
P1m&1.612 (0.397,0.86)&0.349 (0.04,0.9)&0.053 (0.006,0.91)&$$&1.645 (0.77,0.86)&0.356 (0.065,0.9)&0.053 (0.012,0.92)\tabularnewline
PK&1.735 (0.413,0.79)&0.378 (0.04,0.83)&0.058 (0.007,0.83)&$$&1.776 (0.816,0.8)&0.384 (0.071,0.84)&0.058 (0.013,0.84)\tabularnewline
PKm&1.612 (0.397,0.86)&0.35 (0.04,0.9)&0.053 (0.006,0.91)&$$&1.647 (0.767,0.86)&0.357 (0.064,0.9)&0.053 (0.012,0.92)\tabularnewline
\hline
\multicolumn{8}{c}{$K=7$, $\tau_{\alpha} =0.2$}\tabularnewline
DP&1.001 (0.32,1)&0.24 (0.034,1)&0.039 (0.005,1)&$$&1.043 (0.596,1)&0.245 (0.059,1)&0.039 (0.009,1)\tabularnewline
P0&2.3 (0.491,0.44)&0.471 (0.045,0.51)&0.079 (0.007,0.49)&$$&2.376 (0.703,0.44)&0.481 (0.064,0.51)&0.08 (0.009,0.49)\tabularnewline
P1&1.828 (0.445,0.55)&0.39 (0.039,0.62)&0.067 (0.007,0.58)&$$&1.91 (0.892,0.55)&0.395 (0.074,0.62)&0.068 (0.013,0.57)\tabularnewline
P1m&1.583 (0.413,0.63)&0.344 (0.039,0.7)&0.057 (0.006,0.68)&$$&1.649 (0.702,0.63)&0.351 (0.062,0.7)&0.058 (0.01,0.67)\tabularnewline
PK&1.836 (0.439,0.55)&0.392 (0.037,0.61)&0.067 (0.006,0.58)&$$&1.869 (0.874,0.56)&0.395 (0.07,0.62)&0.067 (0.012,0.58)\tabularnewline
PKm&1.594 (0.402,0.63)&0.345 (0.038,0.7)&0.058 (0.006,0.67)&$$&1.657 (0.704,0.63)&0.354 (0.061,0.69)&0.058 (0.009,0.67)\tabularnewline
\hline
\multicolumn{8}{c}{$K=7$, $\tau_{\alpha} =0.5$}\tabularnewline
DP&1.262 (0.295,1)&0.3 (0.032,1)&0.045 (0.005,1)&$$&1.296 (0.576,1)&0.313 (0.063,1)&0.046 (0.009,1)\tabularnewline
P0&2.17 (0.411,0.58)&0.459 (0.04,0.65)&0.07 (0.006,0.64)&$$&2.176 (0.668,0.6)&0.471 (0.066,0.66)&0.071 (0.01,0.65)\tabularnewline
P1&1.734 (0.379,0.73)&0.383 (0.035,0.78)&0.058 (0.006,0.78)&$$&1.734 (0.781,0.75)&0.395 (0.071,0.79)&0.06 (0.012,0.77)\tabularnewline
P1m&1.6 (0.365,0.79)&0.357 (0.035,0.84)&0.054 (0.006,0.83)&$$&1.593 (0.667,0.81)&0.368 (0.064,0.85)&0.054 (0.01,0.85)\tabularnewline
PK&1.743 (0.384,0.72)&0.384 (0.037,0.78)&0.059 (0.006,0.76)&$$&1.757 (0.777,0.74)&0.395 (0.07,0.79)&0.06 (0.012,0.77)\tabularnewline
PKm&1.603 (0.365,0.79)&0.358 (0.036,0.84)&0.054 (0.006,0.83)&$$&1.608 (0.681,0.81)&0.369 (0.064,0.85)&0.055 (0.011,0.84)\tabularnewline
\bottomrule
\end{tabular}\end{center}
\end{table}

\begin{table}
\scriptsize
\tabcolsep=3pt
\caption{Predictive accuracy of the proposed dynamic prediction (DP) method in comparison with the competing methods for training and testing datasets under Ex2 ($n_{train}=200$, $n_{test}=50$) with $35\%$ censoring for $D$. Values in each parenthesis are the empirical standard deviation and relative prediction accuracy.}
\label{table:ex2_n200_cr5}
\begin{center}
\begin{tabular}{llllrlll}
\toprule
\multirow{2}*{}& \multicolumn{3}{c}{In-sample}&&\multicolumn{3}{c}{Out-of-sample}\tabularnewline
\cmidrule(lr){2-4}\cmidrule(lr){6-8}
{Method}&{MSPE}&{QPE}&{IBS}&&{MSPE}&{QPE}&{IBS}\tabularnewline
\hline
\multicolumn{8}{c}{$K=3$, $\tau_{\alpha} =0.2$}\tabularnewline
DP&1.608 (0.375,1)&0.28 (0.037,1)&0.082 (0.012,1)&$$&1.663 (0.863,1)&0.287 (0.07,1)&0.08 (0.02,1)\tabularnewline
P0&2.806 (0.513,0.57)&0.447 (0.051,0.63)&0.156 (0.021,0.53)&$$&2.886 (1.12,0.58)&0.455 (0.074,0.63)&0.146 (0.026,0.55)\tabularnewline
P1&2.49 (0.495,0.65)&0.368 (0.041,0.76)&0.119 (0.017,0.69)&$$&2.589 (1.147,0.64)&0.378 (0.074,0.76)&0.116 (0.03,0.69)\tabularnewline
P1m&2.365 (0.481,0.68)&0.336 (0.041,0.83)&0.104 (0.016,0.79)&$$&2.451 (1.121,0.68)&0.346 (0.069,0.83)&0.101 (0.024,0.79)\tabularnewline
PK&2.506 (0.513,0.64)&0.37 (0.043,0.76)&0.121 (0.016,0.68)&$$&2.555 (1.155,0.65)&0.371 (0.071,0.77)&0.114 (0.027,0.7)\tabularnewline
PKm&2.374 (0.494,0.68)&0.337 (0.043,0.83)&0.105 (0.015,0.78)&$$&2.435 (1.129,0.68)&0.342 (0.069,0.84)&0.1 (0.023,0.8)\tabularnewline
\hline
\multicolumn{8}{c}{$K=3$, $\tau_{\alpha} =0.5$}\tabularnewline
DP&1.739 (0.402,1)&0.312 (0.038,1)&0.1 (0.014,1)&$$&1.737 (0.853,1)&0.315 (0.085,1)&0.094 (0.022,1)\tabularnewline
P0&2.886 (0.564,0.6)&0.454 (0.048,0.69)&0.161 (0.022,0.62)&$$&2.932 (1.085,0.59)&0.462 (0.076,0.68)&0.152 (0.026,0.62)\tabularnewline
P1&2.587 (0.56,0.67)&0.371 (0.041,0.84)&0.121 (0.017,0.83)&$$&2.604 (1.128,0.67)&0.372 (0.085,0.85)&0.114 (0.028,0.82)\tabularnewline
P1m&2.496 (0.552,0.7)&0.349 (0.04,0.89)&0.111 (0.017,0.9)&$$&2.519 (1.1,0.69)&0.351 (0.08,0.9)&0.105 (0.025,0.9)\tabularnewline
PK&2.588 (0.56,0.67)&0.372 (0.041,0.84)&0.121 (0.017,0.83)&$$&2.625 (1.15,0.66)&0.376 (0.09,0.84)&0.117 (0.029,0.8)\tabularnewline
PKm&2.496 (0.553,0.7)&0.349 (0.04,0.89)&0.111 (0.016,0.9)&$$&2.53 (1.103,0.69)&0.354 (0.081,0.89)&0.106 (0.025,0.89)\tabularnewline
\hline
\multicolumn{8}{c}{$K=7$, $\tau_{\alpha} =0.2$}\tabularnewline
DP&1.502 (0.448,1)&0.267 (0.037,1)&0.063 (0.01,1)&$$&1.547 (0.854,1)&0.275 (0.073,1)&0.062 (0.015,1)\tabularnewline
P0&2.747 (0.583,0.55)&0.482 (0.052,0.55)&0.163 (0.021,0.39)&$$&2.784 (1.017,0.56)&0.492 (0.068,0.56)&0.153 (0.022,0.41)\tabularnewline
P1&2.414 (0.576,0.62)&0.394 (0.043,0.68)&0.126 (0.015,0.5)&$$&2.497 (1.119,0.62)&0.409 (0.077,0.67)&0.123 (0.025,0.5)\tabularnewline
P1m&2.25 (0.555,0.67)&0.361 (0.043,0.74)&0.11 (0.015,0.57)&$$&2.29 (1.043,0.68)&0.373 (0.07,0.74)&0.105 (0.02,0.59)\tabularnewline
PK&2.41 (0.572,0.62)&0.394 (0.042,0.68)&0.126 (0.016,0.5)&$$&2.45 (1.093,0.63)&0.403 (0.076,0.68)&0.121 (0.026,0.51)\tabularnewline
PKm&2.249 (0.551,0.67)&0.362 (0.042,0.74)&0.11 (0.016,0.57)&$$&2.281 (1.032,0.68)&0.37 (0.071,0.74)&0.104 (0.02,0.6)\tabularnewline
\hline
\multicolumn{8}{c}{$K=7$, $\tau_{\alpha} =0.5$}\tabularnewline
DP&1.69 (0.423,1)&0.3 (0.038,1)&0.086 (0.013,1)&$$&1.633 (0.871,1)&0.303 (0.067,1)&0.083 (0.021,1)\tabularnewline
P0&2.902 (0.591,0.58)&0.47 (0.054,0.64)&0.159 (0.025,0.54)&$$&2.835 (1.167,0.58)&0.473 (0.071,0.64)&0.15 (0.026,0.55)\tabularnewline
P1&2.576 (0.569,0.66)&0.385 (0.044,0.78)&0.12 (0.018,0.72)&$$&2.505 (1.19,0.65)&0.387 (0.069,0.78)&0.115 (0.026,0.72)\tabularnewline
P1m&2.463 (0.555,0.69)&0.362 (0.043,0.83)&0.11 (0.018,0.78)&$$&2.387 (1.166,0.68)&0.365 (0.065,0.83)&0.106 (0.023,0.78)\tabularnewline
PK&2.578 (0.571,0.66)&0.384 (0.043,0.78)&0.121 (0.018,0.71)&$$&2.518 (1.199,0.65)&0.39 (0.07,0.78)&0.116 (0.026,0.72)\tabularnewline
PKm&2.464 (0.558,0.69)&0.362 (0.043,0.83)&0.111 (0.018,0.77)&$$&2.393 (1.169,0.68)&0.366 (0.064,0.83)&0.105 (0.024,0.79)\tabularnewline
\bottomrule
\end{tabular}\end{center}
\end{table}

\begin{table}[ht]
\scriptsize
\tabcolsep=3pt
\caption{Predictive accuracy of the proposed dynamic prediction (DP) method in comparison with the competing methods for training and testing 
datasets under Ex1 and Ex2 ( $K=3$, $\tau_{\alpha} =0.5$) with $10\%$ censoring for $D$, based on the training set of size 2000. Values in each parenthesis are the empirical standard deviation and relative prediction accuracy.} 
\label{table:ex12_n2000_cr20}
\begin{center}
\begin{tabular}{llllrlll}
\toprule
\multirow{2}*{}& \multicolumn{3}{c}{In-sample}&&\multicolumn{3}{c}{Out-of-sample}\tabularnewline
\cmidrule(lr){2-4}\cmidrule(lr){6-8}
{Method}&{MSPE}&{QPE}&{IBS}&&{MSPE}&{QPE}&{IBS}\tabularnewline
\hline
\multicolumn{8}{c}{Ex1,$n_{\operatorname{train}}=2000$, $n_{\operatorname{test}}=50$}\tabularnewline
DP&0.788 (0.081,1)&0.188 (0.009,1)&0.027 (0.001,1)&$$&0.783 (0.453,1)&0.187 (0.049,1)&0.026 (0.006,1)\tabularnewline
P0&2.052 (0.149,0.38)&0.435 (0.016,0.43)&0.061 (0.002,0.44)&$$&2.072 (0.564,0.38)&0.436 (0.05,0.43)&0.06 (0.008,0.43)\tabularnewline
P1&0.928 (0.106,0.85)&0.214 (0.01,0.88)&0.028 (0.001,0.96)&$$&0.923 (0.561,0.85)&0.211 (0.052,0.89)&0.027 (0.008,0.96)\tabularnewline
P1m&0.93 (0.106,0.85)&0.213 (0.01,0.88)&0.028 (0.001,0.96)&$$&0.925 (0.562,0.85)&0.21 (0.052,0.89)&0.027 (0.008,0.96)\tabularnewline
PK&2.371 (0.17,0.33)&0.5 (0.016,0.38)&0.069 (0.002,0.39)&$$&2.392 (0.83,0.33)&0.497 (0.072,0.38)&0.069 (0.012,0.38)\tabularnewline
PKm&1.849 (0.142,0.43)&0.397 (0.015,0.47)&0.056 (0.002,0.48)&$$&1.869 (0.588,0.42)&0.397 (0.052,0.47)&0.055 (0.008,0.47)\tabularnewline
\hline
\multicolumn{8}{c}{Ex2,$n_{\operatorname{train}}=2000$, $n_{\operatorname{test}}=50$}\tabularnewline
DP&1.342 (0.114,1)&0.315 (0.012,1)&0.044 (0.002,1)&$$&1.439 (0.694,1)&0.321 (0.075,1)&0.044 (0.01,1)\tabularnewline
P0&2.109 (0.15,0.64)&0.45 (0.014,0.7)&0.063 (0.002,0.7)&$$&2.223 (0.754,0.65)&0.458 (0.068,0.7)&0.064 (0.011,0.69)\tabularnewline
P1&1.715 (0.14,0.78)&0.38 (0.012,0.83)&0.053 (0.002,0.83)&$$&1.833 (0.862,0.79)&0.384 (0.083,0.84)&0.053 (0.013,0.83)\tabularnewline
P1m&1.587 (0.135,0.85)&0.352 (0.012,0.89)&0.049 (0.002,0.9)&$$&1.691 (0.803,0.85)&0.355 (0.076,0.9)&0.049 (0.012,0.9)\tabularnewline
PK&1.711 (0.138,0.78)&0.379 (0.013,0.83)&0.053 (0.002,0.83)&$$&1.832 (0.871,0.79)&0.387 (0.084,0.83)&0.053 (0.013,0.83)\tabularnewline
PKm&1.585 (0.134,0.85)&0.351 (0.013,0.9)&0.049 (0.002,0.9)&$$&1.694 (0.804,0.85)&0.357 (0.076,0.9)&0.049 (0.012,0.9)\tabularnewline
\hline
\end{tabular}\end{center}
\end{table}

\begin{table}[ht]
\scriptsize
\tabcolsep=3pt
\caption{Predictive accuracy of the proposed dynamic prediction (DP) method and its variations for training and testing 
datasets under Ex1 and Ex2 ( $n_{\operatorname{train}}=100$, $n_{\operatorname{test}}=50$, Frank copula structure) with $35\%$ censoring for $D$. \emph{DP} uses 
all intermediate events, whereas \emph{DP2:6} and \emph{DP3:5} use only intermediate events 2-6 and 3-5, respectively. 
Values in each parenthesis are the empirical standard deviation and relative prediction accuracy. 
}
\label{table:ex12_group}
\begin{center}
\begin{tabular}{lllllrllll}
\toprule
\multirow{2}*{}& \multicolumn{3}{c}{In-sample}&&\multicolumn{3}{c}{Out-of-sample}\tabularnewline
\cmidrule(lr){2-5}\cmidrule(lr){7-10}
{Method}&{MSPE}&{QPE}&{IBS}&{AUC}&&{MSPE}&{QPE}&{IBS}&{AUC}\tabularnewline
\hline
\multicolumn{10}{c}{Ex1, $K=7$, $\tau_{\alpha}=0.2$}\tabularnewline
DP&1.335 (0.539)&0.231 (0.046)&0.05 (0.012)&0.967 (0.034)&&1.361 (0.728)&0.235 (0.071)&0.052 (0.014)&0.959 (0.05)\tabularnewline
DP2:6&1.461 (0.548)&0.266 (0.048)&0.065 (0.014)&0.954 (0.041)&&1.49 (0.767)&0.272 (0.076)&0.067 (0.016)&0.946 (0.063)\tabularnewline
DP3:5&1.635 (0.572)&0.307 (0.051)&0.084 (0.015)&0.928 (0.061)&&1.676 (0.784)&0.314 (0.078)&0.088 (0.02)&0.928 (0.071)\tabularnewline
P4&2.11 (0.66)&0.361 (0.056)&0.114 (0.023)&0.949 (0.052)&&2.166 (0.829)&0.372 (0.074)&0.117 (0.021)&0.952 (0.059)\tabularnewline
\hline
\multicolumn{10}{c}{Ex1, $K=7$, $\tau_{\alpha}=0.5$}\tabularnewline
DP&1.334 (0.528)&0.212 (0.045)&0.056 (0.014)&0.974 (0.024)&&1.394 (0.84)&0.226 (0.064)&0.06 (0.015)&0.974 (0.041)\tabularnewline
DP2:6&1.507 (0.541)&0.265 (0.049)&0.077 (0.017)&0.949 (0.041)&&1.579 (0.858)&0.281 (0.069)&0.08 (0.018)&0.941 (0.059)\tabularnewline
DP3:5&1.723 (0.572)&0.318 (0.057)&0.097 (0.02)&0.912 (0.064)&&1.791 (0.9)&0.333 (0.074)&0.1 (0.02)&0.906 (0.094)\tabularnewline
P4&2.204 (0.656)&0.354 (0.056)&0.114 (0.021)&0.925 (0.055)&&2.245 (1.005)&0.366 (0.071)&0.116 (0.021)&0.932 (0.073)\tabularnewline
\hline
\multicolumn{10}{c}{Ex2, $K=7$, $\tau_{\alpha}=0.2$}\tabularnewline
DP&1.486 (0.492)&0.262 (0.049)&0.067 (0.014)&0.942 (0.044)&&1.45 (0.693)&0.266 (0.065)&0.07 (0.016)&0.931 (0.082)\tabularnewline
DP2:6&1.546 (0.5)&0.274 (0.05)&0.073 (0.014)&0.942 (0.042)&&1.51 (0.713)&0.279 (0.066)&0.077 (0.017)&0.928 (0.076)\tabularnewline
DP3:5&1.672 (0.522)&0.304 (0.052)&0.086 (0.016)&0.932 (0.045)&&1.639 (0.733)&0.308 (0.069)&0.091 (0.02)&0.916 (0.087)\tabularnewline
P4&2.217 (0.642)&0.356 (0.062)&0.113 (0.023)&0.94 (0.048)&&2.181 (0.865)&0.364 (0.068)&0.117 (0.024)&0.932 (0.09)\tabularnewline
\hline
\multicolumn{10}{c}{Ex2, $K=7$, $\tau_{\alpha}=0.5$}\tabularnewline
DP&1.659 (0.642)&0.296 (0.051)&0.092 (0.017)&0.909 (0.061)&&1.77 (0.813)&0.308 (0.069)&0.097 (0.019)&0.892 (0.097)\tabularnewline
DP2:6&1.704 (0.653)&0.304 (0.053)&0.095 (0.017)&0.908 (0.06)&&1.827 (0.816)&0.318 (0.069)&0.1 (0.019)&0.89 (0.107)\tabularnewline
DP3:5&1.786 (0.665)&0.319 (0.053)&0.102 (0.017)&0.902 (0.061)&&1.923 (0.832)&0.336 (0.074)&0.106 (0.021)&0.877 (0.122)\tabularnewline
P4&2.407 (0.828)&0.359 (0.058)&0.119 (0.023)&0.896 (0.07)&&2.52 (1.017)&0.377 (0.075)&0.126 (0.023)&0.888 (0.1)\tabularnewline
\hline
\end{tabular}\end{center}
\end{table}

\begin{landscape}
\begin{table}[t]
\tiny
\tabcolsep=1pt
\caption{Comparison of different prediction methods under Ex1–2 with $10\%$ censoring for $D$. Reported values are the means of the performance metrics, with empirical standard deviations shown in parentheses.\emph{Apparent}: error on full sample; \emph{3-fold CV}: 3-fold cross-validation; \emph{random CV$_{1/3}$} and \emph{random CV$_{1/5}$}: random cross-validation with 1/3 or 1/5 test proportion.}
\label{table:ex12_split}
\begin{center}
\begin{tabular}{lllllllllllllll}
\toprule
\multirow{2}*{}& \multicolumn{4}{l}{MSPE}&&\multicolumn{4}{l}{IBS}&&\multicolumn{4}{l}{AUC}\tabularnewline
\cmidrule(lr){2-5}\cmidrule(lr){6-10}\cmidrule(lr){11-15}
Method&Apparent&3-fold CV&random CV$_{1/3}$&random CV$_{1/5}$&&Apparent&3-fold CV&random CV$_{1/3}$&random CV$_{1/5}$&&Apparent&3-fold CV&random CV$_{1/3}$&random CV$_{1/5}$\tabularnewline
\hline
\multicolumn{15}{c}{Ex1, $n=150$, $K=3$, $\tau_{\alpha} =0.2$}\tabularnewline
DP&0.794 (0.264)&0.86 (0.278)&0.824 (0.26)&0.812 (0.277)&&0.037 (0.005)&0.04 (0.005)&0.039 (0.005)&0.038 (0.005)&&0.964 (0.017)&0.962 (0.017)&0.962 (0.017)&0.963 (0.018)\tabularnewline
P0&1.996 (0.446)&2.068 (0.46)&2.041 (0.449)&2.027 (0.454)&&0.084 (0.009)&0.086 (0.009)&0.086 (0.009)&0.084 (0.009)&&0.955 (0.022)&0.955 (0.022)&0.955 (0.022)&0.955 (0.024)\tabularnewline
P1&0.884 (0.328)&0.937 (0.335)&0.882 (0.31)&0.874 (0.324)&&0.039 (0.005)&0.042 (0.005)&0.041 (0.005)&0.04 (0.005)&&0.959 (0.02)&0.959 (0.021)&0.959 (0.021)&0.959 (0.024)\tabularnewline
P1m&0.86 (0.326)&0.914 (0.334)&0.86 (0.307)&0.85 (0.321)&&0.037 (0.005)&0.04 (0.005)&0.039 (0.005)&0.038 (0.005)&&0.962 (0.019)&0.961 (0.019)&0.961 (0.02)&0.961 (0.022)\tabularnewline
PK&2.273 (0.504)&2.348 (0.519)&2.298 (0.504)&2.29 (0.523)&&0.096 (0.01)&0.099 (0.01)&0.098 (0.01)&0.096 (0.011)&&0.717 (0.079)&0.715 (0.079)&0.717 (0.082)&0.718 (0.083)\tabularnewline
PKm&1.79 (0.436)&1.863 (0.451)&1.83 (0.437)&1.815 (0.442)&&0.076 (0.009)&0.079 (0.009)&0.079 (0.009)&0.077 (0.009)&&0.948 (0.024)&0.946 (0.025)&0.946 (0.025)&0.947 (0.026)\tabularnewline
\hline
\multicolumn{15}{c}{Ex1, $n=150$, $K=3$, $\tau_{\alpha} =0.5$}\tabularnewline
DP&0.814 (0.256)&0.889 (0.266)&0.856 (0.25)&0.845 (0.27)&&0.032 (0.005)&0.035 (0.005)&0.035 (0.005)&0.034 (0.005)&&0.974 (0.014)&0.972 (0.015)&0.972 (0.014)&0.973 (0.015)\tabularnewline
P0&2.078 (0.448)&2.152 (0.459)&2.123 (0.452)&2.104 (0.464)&&0.075 (0.008)&0.077 (0.008)&0.077 (0.008)&0.075 (0.008)&&0.953 (0.021)&0.953 (0.022)&0.954 (0.021)&0.954 (0.022)\tabularnewline
P1&0.923 (0.325)&0.979 (0.33)&0.92 (0.306)&0.913 (0.325)&&0.033 (0.005)&0.036 (0.005)&0.035 (0.005)&0.034 (0.005)&&0.962 (0.021)&0.962 (0.021)&0.963 (0.02)&0.963 (0.021)\tabularnewline
P1m&0.927 (0.327)&0.984 (0.332)&0.925 (0.309)&0.918 (0.327)&&0.033 (0.005)&0.036 (0.005)&0.035 (0.005)&0.034 (0.005)&&0.962 (0.02)&0.961 (0.021)&0.962 (0.02)&0.962 (0.021)\tabularnewline
PK&2.418 (0.515)&2.496 (0.526)&2.437 (0.515)&2.422 (0.532)&&0.086 (0.009)&0.089 (0.01)&0.087 (0.01)&0.086 (0.01)&&0.693 (0.07)&0.69 (0.072)&0.692 (0.071)&0.692 (0.076)\tabularnewline
PKm&1.88 (0.434)&1.955 (0.444)&1.92 (0.434)&1.903 (0.449)&&0.069 (0.008)&0.071 (0.009)&0.071 (0.009)&0.069 (0.008)&&0.94 (0.025)&0.938 (0.026)&0.939 (0.026)&0.94 (0.027)\tabularnewline
\hline
\multicolumn{15}{c}{Ex2, $n=150$, $K=3$, $\tau_{\alpha} =0.2$}\tabularnewline
DP&1.156 (0.348)&1.228 (0.362)&1.201 (0.357)&1.175 (0.343)&&0.053 (0.007)&0.056 (0.007)&0.055 (0.007)&0.054 (0.007)&&0.921 (0.033)&0.918 (0.034)&0.918 (0.033)&0.919 (0.036)\tabularnewline
P0&2.056 (0.488)&2.129 (0.501)&2.104 (0.498)&2.072 (0.482)&&0.087 (0.009)&0.089 (0.009)&0.089 (0.009)&0.086 (0.009)&&0.928 (0.031)&0.927 (0.031)&0.927 (0.031)&0.928 (0.033)\tabularnewline
P1&1.678 (0.446)&1.742 (0.457)&1.694 (0.449)&1.666 (0.429)&&0.074 (0.009)&0.077 (0.009)&0.076 (0.009)&0.074 (0.008)&&0.852 (0.048)&0.851 (0.047)&0.852 (0.047)&0.851 (0.054)\tabularnewline
P1m&1.464 (0.414)&1.529 (0.425)&1.49 (0.418)&1.46 (0.406)&&0.064 (0.008)&0.066 (0.008)&0.066 (0.008)&0.064 (0.008)&&0.916 (0.035)&0.914 (0.036)&0.914 (0.035)&0.914 (0.039)\tabularnewline
PK&1.681 (0.464)&1.747 (0.476)&1.699 (0.471)&1.674 (0.457)&&0.074 (0.009)&0.077 (0.009)&0.076 (0.009)&0.074 (0.009)&&0.85 (0.055)&0.848 (0.056)&0.849 (0.055)&0.85 (0.056)\tabularnewline
PKm&1.465 (0.424)&1.53 (0.436)&1.492 (0.431)&1.463 (0.413)&&0.064 (0.008)&0.066 (0.008)&0.066 (0.008)&0.064 (0.008)&&0.913 (0.037)&0.911 (0.037)&0.912 (0.037)&0.912 (0.038)\tabularnewline
\hline
\multicolumn{15}{c}{Ex2, $n=150$, $K=3$, $\tau_{\alpha} =0.5$}\tabularnewline
DP&1.342 (0.371)&1.42 (0.385)&1.395 (0.371)&1.376 (0.393)&&0.053 (0.006)&0.055 (0.006)&0.055 (0.006)&0.053 (0.006)&&0.911 (0.036)&0.906 (0.038)&0.907 (0.037)&0.909 (0.039)\tabularnewline
P0&2.074 (0.493)&2.141 (0.502)&2.105 (0.485)&2.081 (0.501)&&0.076 (0.008)&0.078 (0.008)&0.078 (0.008)&0.076 (0.008)&&0.914 (0.035)&0.913 (0.035)&0.914 (0.034)&0.914 (0.036)\tabularnewline
P1&1.682 (0.457)&1.743 (0.464)&1.695 (0.444)&1.66 (0.466)&&0.063 (0.007)&0.066 (0.007)&0.065 (0.007)&0.063 (0.007)&&0.866 (0.043)&0.865 (0.044)&0.865 (0.043)&0.867 (0.047)\tabularnewline
P1m&1.559 (0.441)&1.621 (0.448)&1.574 (0.427)&1.547 (0.451)&&0.058 (0.007)&0.06 (0.007)&0.06 (0.007)&0.058 (0.007)&&0.899 (0.037)&0.898 (0.038)&0.899 (0.037)&0.9 (0.04)\tabularnewline
PK&1.693 (0.475)&1.754 (0.484)&1.701 (0.46)&1.677 (0.487)&&0.064 (0.008)&0.066 (0.008)&0.065 (0.008)&0.063 (0.008)&&0.865 (0.044)&0.863 (0.045)&0.864 (0.045)&0.865 (0.047)\tabularnewline
PKm&1.565 (0.453)&1.627 (0.46)&1.579 (0.436)&1.556 (0.463)&&0.058 (0.007)&0.06 (0.007)&0.06 (0.007)&0.058 (0.007)&&0.899 (0.038)&0.898 (0.039)&0.899 (0.039)&0.9 (0.04)\tabularnewline
\hline
\end{tabular}\end{center}
\end{table}
\end{landscape}

\begin{table}[t]
\scriptsize
\tabcolsep=5pt
\begin{center}
\caption{Coverage probability (CP) and median interval width (MID) of individual-level prediction intervals under training and testing datasets ($n_{test}=50$) of Examples Ex1-2.} 
\label{tab:predictionInterval}
\begin{tabular}{lllllllllllll}
\toprule
\multicolumn{1}{l}{Example}&\multicolumn{1}{c}{$K$}&\multicolumn{1}{c}{$\tau_{\alpha}$}&\multicolumn{1}{c}{$n_{train}$}&\multicolumn{4}{c}{$10\%$ censoring for $D$}&\multicolumn{4}{c}{no censoring for $D$}\tabularnewline
\cmidrule(lr){5-8}\cmidrule(lr){9-12}
&&&&\multicolumn{2}{c}{In-sample}&\multicolumn{2}{c}{Out-of-sample}&\multicolumn{2}{c}{In-sample}&\multicolumn{2}{c}{Out-of-sample}\tabularnewline
\cmidrule(lr){5-6}\cmidrule(lr){7-8}\cmidrule(lr){9-10}\cmidrule(lr){11-12}
&&&&CP&MID&CP&MID&CP&MID&CP&MID\tabularnewline
\hline
Ex1&3&0.2&100&0.937&1.402&0.884&1.456&0.949&1.435&0.897&1.446\tabularnewline
&&&200&0.934&1.469&0.905&1.533&0.942&1.442&0.907&1.495\tabularnewline
&&&400&0.935&1.524&0.921&1.613&0.938&1.473&0.92&1.516\tabularnewline
&&0.5&100&0.923&1.332&0.872&1.399&0.933&1.265&0.868&1.298\tabularnewline
&&&200&0.925&1.383&0.892&1.423&0.932&1.362&0.898&1.379\tabularnewline
&&&400&0.923&1.394&0.906&1.484&0.926&1.359&0.911&1.406\tabularnewline
&7&0.2&100&0.924&1.078&0.821&1.071&0.942&1.019&0.833&1.07\tabularnewline
&&&200&0.921&1.085&0.863&1.153&0.933&1.054&0.872&1.082\tabularnewline
&&&400&0.917&1.141&0.887&1.173&0.921&1.088&0.892&1.116\tabularnewline
&&0.5&100&0.89&0.989&0.789&1.028&0.903&0.975&0.794&1.023\tabularnewline
&&&200&0.885&1.028&0.82&1.095&0.889&0.993&0.83&1.01\tabularnewline
&&&400&0.88&1.045&0.851&1.091&0.88&1.001&0.851&1.06\tabularnewline
\hline
Ex2&3&0.2&100&0.951&2.868&0.891&2.874&0.966&2.984&0.905&2.985\tabularnewline
&&&200&0.942&3.211&0.916&3.25&0.949&3.249&0.911&3.224\tabularnewline
&&&400&0.937&3.331&0.921&3.406&0.941&3.328&0.923&3.331\tabularnewline
&&0.5&100&0.942&3.564&0.893&3.484&0.959&3.675&0.902&3.645\tabularnewline
&&&200&0.934&3.886&0.899&3.813&0.94&3.898&0.908&3.894\tabularnewline
&&&400&0.93&4.098&0.91&3.999&0.931&4.157&0.913&4.089\tabularnewline
&7&0.2&100&0.946&2.02&0.845&2.008&0.966&2.078&0.855&2.026\tabularnewline
&&&200&0.934&2.225&0.878&2.244&0.944&2.252&0.883&2.295\tabularnewline
&&&400&0.922&2.341&0.886&2.347&0.926&2.353&0.891&2.44\tabularnewline
&&0.5&100&0.928&2.863&0.823&2.844&0.939&2.946&0.831&2.941\tabularnewline
&&&200&0.911&3.275&0.854&3.245&0.911&3.285&0.847&3.284\tabularnewline
&&&400&0.894&3.42&0.865&3.435&0.896&3.423&0.868&3.351\tabularnewline
\bottomrule
\end{tabular}\end{center}
\end{table}

\begin{table}
\scriptsize
\tabcolsep=3pt
\caption{Performance comparison under model mis-specification for the proposed dynamic prediction (DP) method in comparison with the competing methods. Training and testing datasets ($n_{train}=100$, $n_{test}=50$) for Ex1 and Ex2 are generated under the \textbf{\emph{Frank}} copula structure with $35\%$ censoring for $D$, but fitted using \textbf{\emph{Gumbel}} models. Values in each parenthesis are the empirical standard deviation and relative prediction accuracy.}
\label{table:mis gumbel}
\begin{center}
\begin{tabular}{llllrlll}
\toprule
\multirow{2}*{}& \multicolumn{3}{c}{In-sample}&&\multicolumn{3}{c}{Out-of-sample}\tabularnewline
\cmidrule(lr){2-4}\cmidrule(lr){6-8}
{Method}&{MSPE}&{QPE}&{IBS}&&{MSPE}&{QPE}&{IBS}\tabularnewline
\hline
\multicolumn{8}{c}{Ex1, $K=7$, $\tau_{\alpha} =0.2$}\tabularnewline
DP&1.338 (0.55,1)&0.232 (0.052,1)&0.048 (0.011,1)&$$&1.36 (0.761,1)&0.246 (0.071,1)&0.059 (0.017,1)\tabularnewline
P0&2.567 (0.68,0.52)&0.478 (0.073,0.49)&0.17 (0.028,0.28)&$$&2.573 (0.856,0.53)&0.489 (0.078,0.5)&0.168 (0.027,0.35)\tabularnewline
P1&1.658 (0.622,0.81)&0.258 (0.052,0.9)&0.064 (0.013,0.75)&$$&1.64 (0.842,0.83)&0.263 (0.064,0.94)&0.068 (0.016,0.87)\tabularnewline
P1m&1.634 (0.616,0.82)&0.246 (0.052,0.94)&0.058 (0.012,0.83)&$$&1.618 (0.84,0.84)&0.251 (0.066,0.98)&0.062 (0.016,0.95)\tabularnewline
PK&2.725 (0.727,0.49)&0.48 (0.069,0.48)&0.177 (0.028,0.27)&$$&2.73 (0.967,0.5)&0.485 (0.078,0.51)&0.176 (0.031,0.34)\tabularnewline
PKm&2.419 (0.679,0.55)&0.438 (0.07,0.53)&0.152 (0.028,0.32)&$$&2.414 (0.861,0.56)&0.45 (0.075,0.55)&0.151 (0.025,0.39)\tabularnewline
\hline
\multicolumn{8}{c}{Ex1, $K=7$, $\tau_{\alpha} =0.5$}\tabularnewline
DP&1.281 (0.579,1)&0.219 (0.048,1)&0.057 (0.011,1)&$$&1.386 (0.82,1)&0.228 (0.069,1)&0.068 (0.019,1)\tabularnewline
P0&2.572 (0.764,0.5)&0.46 (0.075,0.48)&0.167 (0.027,0.34)&$$&2.629 (0.969,0.53)&0.468 (0.082,0.49)&0.166 (0.025,0.41)\tabularnewline
P1&1.714 (0.716,0.75)&0.233 (0.047,0.94)&0.06 (0.012,0.95)&$$&1.752 (0.982,0.79)&0.242 (0.067,0.94)&0.065 (0.018,1.05)\tabularnewline
P1m&1.711 (0.716,0.75)&0.232 (0.048,0.94)&0.059 (0.013,0.97)&$$&1.751 (0.98,0.79)&0.239 (0.067,0.95)&0.064 (0.018,1.06)\tabularnewline
PK&2.77 (0.819,0.46)&0.483 (0.071,0.45)&0.179 (0.027,0.32)&$$&2.878 (1.13,0.48)&0.493 (0.086,0.46)&0.181 (0.031,0.38)\tabularnewline
PKm&2.431 (0.761,0.53)&0.421 (0.072,0.52)&0.15 (0.026,0.38)&$$&2.486 (0.981,0.56)&0.43 (0.079,0.53)&0.149 (0.026,0.46)\tabularnewline
\hline
\multicolumn{8}{c}{Ex2, $K=7$, $\tau_{\alpha} =0.2$}\tabularnewline
DP&1.461 (0.574,1)&0.256 (0.05,1)&0.067 (0.017,1)&$$&1.612 (0.874,1)&0.282 (0.072,1)&0.082 (0.022,1)\tabularnewline
P0&2.644 (0.745,0.55)&0.474 (0.068,0.54)&0.171 (0.028,0.39)&$$&2.688 (1.054,0.6)&0.486 (0.075,0.58)&0.171 (0.029,0.48)\tabularnewline
P1&2.325 (0.731,0.63)&0.388 (0.056,0.66)&0.132 (0.023,0.51)&$$&2.387 (1.115,0.68)&0.399 (0.074,0.71)&0.139 (0.026,0.59)\tabularnewline
P1m&2.163 (0.704,0.68)&0.353 (0.054,0.73)&0.115 (0.022,0.58)&$$&2.198 (1.055,0.73)&0.365 (0.07,0.77)&0.118 (0.023,0.69)\tabularnewline
PK&2.332 (0.756,0.63)&0.388 (0.056,0.66)&0.132 (0.02,0.51)&$$&2.397 (1.103,0.67)&0.401 (0.07,0.7)&0.138 (0.027,0.59)\tabularnewline
PKm&2.168 (0.717,0.67)&0.354 (0.056,0.72)&0.114 (0.02,0.59)&$$&2.201 (1.062,0.73)&0.366 (0.069,0.77)&0.118 (0.023,0.69)\tabularnewline
\hline
\multicolumn{8}{c}{Ex2, $K=7$, $\tau_{\alpha} =0.5$}\tabularnewline
DP&1.727 (0.602,1)&0.305 (0.056,1)&0.094 (0.019,1)&$$&1.884 (0.885,1)&0.336 (0.074,1)&0.11 (0.027,1)\tabularnewline
P0&2.775 (0.79,0.62)&0.457 (0.07,0.67)&0.166 (0.026,0.57)&$$&2.871 (1.059,0.66)&0.472 (0.08,0.71)&0.17 (0.03,0.65)\tabularnewline
P1&2.484 (0.774,0.7)&0.379 (0.06,0.8)&0.127 (0.021,0.74)&$$&2.598 (1.116,0.73)&0.394 (0.074,0.85)&0.133 (0.027,0.83)\tabularnewline
P1m&2.365 (0.756,0.73)&0.353 (0.059,0.86)&0.115 (0.021,0.82)&$$&2.465 (1.069,0.76)&0.369 (0.069,0.91)&0.12 (0.025,0.92)\tabularnewline
PK&2.485 (0.762,0.69)&0.378 (0.059,0.81)&0.126 (0.021,0.75)&$$&2.58 (1.116,0.73)&0.393 (0.077,0.85)&0.132 (0.027,0.83)\tabularnewline
PKm&2.367 (0.749,0.73)&0.353 (0.058,0.86)&0.114 (0.02,0.82)&$$&2.458 (1.077,0.77)&0.367 (0.071,0.92)&0.12 (0.025,0.92)\tabularnewline
\bottomrule
\end{tabular}\end{center}
\end{table}

\begin{table}
\scriptsize
\tabcolsep=3pt
\caption{Performance comparison under model mis-specification for the proposed dynamic prediction (DP) method in comparison with the competing methods. Training and testing datasets ($n_{train}=100$, $n_{test}=50$) for Ex1 and Ex2 are generated under the \textbf{\emph{Frank}} copula structure with $35\%$ censoring for $D$, but fitted using \textbf{\emph {Clayton}} models. Values in each parenthesis are the empirical standard deviation and relative prediction accuracy.}
\label{table: mis clayton}
\begin{center}
\begin{tabular}{llllrlll}
\toprule
\multirow{2}*{}& \multicolumn{3}{c}{In-sample}&&\multicolumn{3}{c}{Out-of-sample}\tabularnewline
\cmidrule(lr){2-4}\cmidrule(lr){6-8}
{Method}&{MSPE}&{QPE}&{IBS}&&{MSPE}&{QPE}&{IBS}\tabularnewline
\hline
\multicolumn{8}{c}{Ex1, $K=7$, $\tau_{\alpha} =0.2$}\tabularnewline
DP&1.395 (0.623,1)&0.239 (0.057,1)&0.057 (0.018,1)&$$&1.509 (0.885,1)&0.251 (0.072,1)&0.061 (0.02,1)\tabularnewline
P0&2.585 (0.739,0.54)&0.471 (0.073,0.51)&0.164 (0.028,0.35)&$$&2.738 (0.978,0.55)&0.486 (0.078,0.52)&0.166 (0.028,0.37)\tabularnewline
P1&1.731 (0.694,0.81)&0.255 (0.053,0.94)&0.066 (0.014,0.86)&$$&1.88 (0.993,0.8)&0.268 (0.066,0.94)&0.072 (0.02,0.85)\tabularnewline
P1m&1.709 (0.688,0.82)&0.249 (0.052,0.96)&0.062 (0.014,0.92)&$$&1.851 (0.983,0.82)&0.261 (0.065,0.96)&0.067 (0.018,0.91)\tabularnewline
PK&2.734 (0.793,0.51)&0.469 (0.067,0.51)&0.173 (0.027,0.33)&$$&2.975 (1.141,0.51)&0.489 (0.079,0.51)&0.178 (0.031,0.34)\tabularnewline
PKm&2.429 (0.729,0.57)&0.434 (0.07,0.55)&0.147 (0.026,0.39)&$$&2.601 (0.995,0.58)&0.452 (0.076,0.56)&0.151 (0.027,0.4)\tabularnewline
\hline
\multicolumn{8}{c}{Ex1, $K=7$, $\tau_{\alpha} =0.5$}\tabularnewline
DP&1.354 (0.526,1)&0.231 (0.049,1)&0.066 (0.019,1)&$$&1.427 (0.856,1)&0.246 (0.074,1)&0.07 (0.018,1)\tabularnewline
P0&2.647 (0.714,0.51)&0.465 (0.073,0.5)&0.169 (0.028,0.39)&$$&2.711 (1.011,0.53)&0.481 (0.093,0.51)&0.167 (0.028,0.42)\tabularnewline
P1&1.773 (0.665,0.76)&0.24 (0.047,0.96)&0.063 (0.014,1.05)&$$&1.819 (1.002,0.78)&0.254 (0.07,0.97)&0.069 (0.018,1.01)\tabularnewline
P1m&1.768 (0.665,0.77)&0.242 (0.048,0.95)&0.063 (0.015,1.05)&$$&1.814 (1.001,0.79)&0.256 (0.07,0.96)&0.07 (0.018,1)\tabularnewline
PK&2.862 (0.768,0.47)&0.486 (0.071,0.48)&0.183 (0.029,0.36)&$$&2.959 (1.131,0.48)&0.501 (0.088,0.49)&0.184 (0.03,0.38)\tabularnewline
PKm&2.523 (0.716,0.54)&0.436 (0.071,0.53)&0.155 (0.028,0.43)&$$&2.593 (1.024,0.55)&0.452 (0.087,0.54)&0.155 (0.028,0.45)\tabularnewline
\hline
\multicolumn{8}{c}{Ex2, $K=7$, $\tau_{\alpha} =0.2$}\tabularnewline
DP&1.452 (0.556,1)&0.272 (0.055,1)&0.072 (0.018,1)&$$&1.595 (0.834,1)&0.283 (0.07,1)&0.074 (0.02,1)\tabularnewline
P0&2.634 (0.719,0.55)&0.471 (0.068,0.58)&0.17 (0.028,0.42)&$$&2.79 (0.918,0.57)&0.487 (0.08,0.58)&0.169 (0.029,0.44)\tabularnewline
P1&2.333 (0.716,0.62)&0.39 (0.063,0.7)&0.136 (0.024,0.53)&$$&2.496 (0.997,0.64)&0.399 (0.072,0.71)&0.137 (0.027,0.54)\tabularnewline
P1m&2.168 (0.676,0.67)&0.367 (0.062,0.74)&0.12 (0.023,0.6)&$$&2.321 (0.945,0.69)&0.379 (0.07,0.75)&0.122 (0.022,0.61)\tabularnewline
PK&2.334 (0.727,0.62)&0.391 (0.061,0.7)&0.135 (0.023,0.53)&$$&2.474 (0.984,0.64)&0.399 (0.073,0.71)&0.136 (0.027,0.54)\tabularnewline
PKm&2.165 (0.687,0.67)&0.367 (0.061,0.74)&0.119 (0.022,0.61)&$$&2.319 (0.936,0.69)&0.38 (0.071,0.74)&0.122 (0.025,0.61)\tabularnewline
\hline
\multicolumn{8}{c}{Ex2, $K=7$, $\tau_{\alpha} =0.5$}\tabularnewline
DP&1.644 (0.647,1)&0.298 (0.055,1)&0.094 (0.018,1)&$$&1.716 (0.829,1)&0.315 (0.067,1)&0.1 (0.022,1)\tabularnewline
P0&2.766 (0.895,0.59)&0.452 (0.068,0.66)&0.166 (0.028,0.57)&$$&2.768 (1.006,0.62)&0.472 (0.075,0.67)&0.167 (0.027,0.6)\tabularnewline
P1&2.493 (0.882,0.66)&0.379 (0.063,0.79)&0.131 (0.023,0.72)&$$&2.511 (1.052,0.68)&0.397 (0.073,0.79)&0.138 (0.029,0.72)\tabularnewline
P1m&2.387 (0.873,0.69)&0.363 (0.061,0.82)&0.122 (0.023,0.77)&$$&2.389 (1.015,0.72)&0.379 (0.068,0.83)&0.128 (0.026,0.78)\tabularnewline
PK&2.493 (0.882,0.66)&0.377 (0.057,0.79)&0.13 (0.022,0.72)&$$&2.503 (1.051,0.69)&0.394 (0.069,0.8)&0.138 (0.027,0.72)\tabularnewline
PKm&2.385 (0.869,0.69)&0.36 (0.058,0.83)&0.121 (0.022,0.78)&$$&2.39 (1.02,0.72)&0.379 (0.067,0.83)&0.128 (0.025,0.78)\tabularnewline
\bottomrule
\end{tabular}\end{center}
\end{table}

\begin{table}
\scriptsize
\tabcolsep=3pt
\caption{Performance comparison under model misspecification for the proposed dynamic prediction (DP) method with \textbf{\emph{Frank}}, \textbf{\emph{Gumbel}} or \textbf{\emph{Clayton}} models. Training and testing datasets ($n_{train}=100$, $n_{test}=50$) for Ex1 and Ex2 are generated under the \textbf{\emph{Frank}} copula structure with $35\%$ censoring for $D$. Values in each parenthesis are the empirical standard deviation and ranking of prediction accuracy. The overall rating is the average ranking across all predictive accuracy measures. This table summarizes the results of the proposed DP method from Tables \ref{table:ex1_n100_cr5}-\ref{table:ex2_n100_cr5} and Tables \ref{table: mis clayton}-\ref{table:mis gumbel} in this appendix. }
\label{table:misspecification comparison}
\begin{center}
\begin{tabular}{llllrllll}
\toprule
\multirow{2}*{}& \multicolumn{3}{c}{In-sample}&&\multicolumn{3}{c}{Out-of-sample}&\multicolumn{1}{c}{Overall}\tabularnewline
\cmidrule(lr){2-4}\cmidrule(lr){6-8}
{Method}&{MSPE}&{QPE}&{IBS}&&{MSPE}&{QPE}&{IBS}&{Rating}\tabularnewline
\hline
\multicolumn{8}{c}{Ex1, $K=7$, $\tau_{\alpha} =0.2$}\tabularnewline
DP(Frank)&1.353 (0.619,2)& 0.233 (0.051,1)& 0.049 (0.012,1)&&1.377 (0.78,2)& 0.24 (0.062,1)& 0.054 (0.014,1)&1.33\tabularnewline
DP(Gumbel)& 1.338 (0.55,1)& 0.232 (0.052,1)& 0.048 (0.011,1)&& 1.36 (0.761,1)&0.246 (0.071,2)&0.059 (0.017,2)&1.33\tabularnewline
DP(Clayton)&1.395 (0.623,3)&0.239 (0.057,3)&0.057 (0.018,3)&&1.509 (0.885,3)&0.251 (0.072,3)&0.061 (0.02,3)& 3.00 \tabularnewline
\hline
\multicolumn{8}{c}{Ex1, $K=7$, $\tau_{\alpha} =0.5$}\tabularnewline
DP(Frank)&1.316 (0.548,2)& 0.217 (0.045,1)& 0.058 (0.012,1)&& 1.34 (0.705,1)& 0.221 (0.063,1)& 0.062 (0.015,1)&1.17 \tabularnewline
DP(Gumbel)& 1.281 (0.579,1)& 0.219 (0.048,1)& 0.057 (0.011,1)&&1.386 (0.82,2)&0.228 (0.069,2)&0.068 (0.019,2)&1.50 \tabularnewline
DP(Clayton)&1.354 (0.526,3)&0.231 (0.049,3)&0.066 (0.019,3)&&1.427 (0.856,3)&0.246 (0.074,3)&0.07 (0.018,3)&3.00 \tabularnewline
\hline
\multicolumn{8}{c}{Ex2, $K=7$, $\tau_{\alpha} =0.2$}\tabularnewline
DP(Frank)& 1.39 (0.533,1)&0.261 (0.05,2)& 0.066 (0.013,1)&& 1.517 (0.853,1)& 0.271 (0.068,1)& 0.069 (0.016,1)&1.17\tabularnewline
DP(Gumbel)&1.461 (0.574,3)& 0.256 (0.05,1)& 0.067 (0.017,1)&&1.612 (0.874,3)&0.282 (0.072,2)&0.082 (0.022,3)&2.17 \tabularnewline
DP(Clayton)&1.452 (0.556,2)&0.272 (0.055,3)&0.072 (0.018,3)&&1.595 (0.834,2)&0.283 (0.07,2)&0.074 (0.02,2)&2.33 \tabularnewline
\hline
\multicolumn{8}{c}{Ex2, $K=7$, $\tau_{\alpha} =0.5$}\tabularnewline
DP(Frank)& 1.62 (0.608,1)& 0.293 (0.054,1)& 0.091 (0.018,1)&& 1.542 (0.779,1)& 0.298 (0.072,1)& 0.093 (0.021,1)&1.00\tabularnewline
DP(Gumbel)&1.727 (0.602,3)&0.305 (0.056,3)&0.094 (0.019,2)&&1.884 (0.885,3)&0.336 (0.074,3)&0.11 (0.027,2)& 2.67 \tabularnewline
DP(Clayton)&1.644 (0.647,2)&0.298 (0.055,2)&0.094 (0.018,2)&&1.716 (0.829,2)&0.315 (0.067,2)&0.1 (0.022,2)&2.00\tabularnewline
\bottomrule
\end{tabular}\end{center}
\end{table}

\begin{table}
\scriptsize
\tabcolsep=2pt
\caption{Estimation results of association parameters within the joint distribution function on the observable data region. Training datasets ($n_{train}=100$) are generated under the mixed wedge structure in Ex3. Kendall's tau values representing the association between each $\tildeT_k$ and $D$ for $k=1,\cdots,7$, are specified as  $\tau^{(u)}=0.5$ for the upper wedge and $\tau^{(l)} =0.3,0.5,0.7$ for the lower wedges.}
\label{table:ex3mix_theta}
\begin{center}
\begin{tabular}{llrrrrrrrrrrrrrrrrrrrrrrrr}
\toprule
\multicolumn{2}{c}{Truth}&&\multicolumn{2}{c}{Estimation of $\tau_{\alpha}$}&&\multicolumn{19}{c}{Estimation of $\tau^{(u)}=0.5$}\tabularnewline
\cmidrule(lr){1-2}\cmidrule(lr){4-5}\cmidrule(lr){7-26}
&&&&&&\multicolumn{2}{c}{$\tau_{\theta_1}$}&&\multicolumn{2}{c}{$\tau_{\theta_2}$}&&\multicolumn{2}{c}{$\tau_{\theta_3}$}&&\multicolumn{2}{c}{$\tau_{\theta_4}$}&&\multicolumn{2}{c}{$\tau_{\theta_5}$}&&\multicolumn{2}{c}{$\tau_{\theta_6}$}&&\multicolumn{2}{c}{$\tau_{\theta_7}$}\tabularnewline
\cmidrule(lr){7-8}\cmidrule(lr){10-11}\cmidrule(lr){13-14}
\cmidrule(lr){16-17}\cmidrule(lr){19-20} \cmidrule(lr){22-23}\cmidrule(lr){24-26}
\multicolumn{1}{c}{$\tau_{\alpha}$}&\multicolumn{1}{c}{$\tau^{(l)}$}&&\multicolumn{1}{c}{Bias}&\multicolumn{1}{c}{SD}&&\multicolumn{1}{c}{Bias}&\multicolumn{1}{c}{SD}&&\multicolumn{1}{c}{Bias}&\multicolumn{1}{c}{SD}&&\multicolumn{1}{c}{Bias}&\multicolumn{1}{c}{SD}&&\multicolumn{1}{c}{Bias}&\multicolumn{1}{c}{SD}&&\multicolumn{1}{c}{Bias}&\multicolumn{1}{c}{SD}&&\multicolumn{1}{c}{Bias}&\multicolumn{1}{c}{SD}&&\multicolumn{1}{c}{Bias}&\multicolumn{1}{c}{SD}\tabularnewline
\hline
$0.2$&$0.3$&&$-0.024$&$0.039$&&$0.022$&$0.095$&&$0.017$&$0.097$&&$0.019$&$0.093$&&$0.022$&$0.095$&&$0.022$&$0.095$&&$0.028$&$0.094$&&$0.022$&$0.093$\tabularnewline&$0.5$&&$0.003$&$0.04$&&$0.002$&$0.084$&&$-0.006$&$0.088$&&$-0.009$&$0.087$&&$-0.001$&$0.09$&&$0.009$&$0.087$&&$0.001$&$0.082$&&$0.007$&$0.082$\tabularnewline&$0.7$&&$-0.013$&$0.046$&&$0.064$&$0.084$&&$0.064$&$0.074$&&$0.07$&$0.077$&&$0.07$&$0.074$&&$0.064$&$0.077$&&$0.068$&$0.074$&&$0.069$&$0.069$\tabularnewline
\hline$0.5$&$0.3$&&$-0.058$&$0.049$&&$0.051$&$0.083$&&$0.059$&$0.088$&&$0.061$&$0.087$&&$0.06$&$0.088$&&$0.054$&$0.083$&&$0.058$&$0.086$&&$0.052$&$0.083$\tabularnewline&$0.5$&&$-0.016$&$0.048$&&$0.012$&$0.086$&&$0.013$&$0.083$&&$0.019$&$0.088$&&$0.022$&$0.083$&&$0.006$&$0.081$&&$0.01$&$0.086$&&$0.016$&$0.088$\tabularnewline&$0.7$&&$-0.038$&$0.054$&&$0.059$&$0.077$&&$0.059$&$0.08$&&$0.062$&$0.078$&&$0.059$&$0.08$&&$0.061$&$0.082$&&$0.062$&$0.078$&&$0.064$&$0.086$\tabularnewline
\bottomrule
\end{tabular}\end{center}
\end{table}

\begin{table}
\scriptsize
\tabcolsep=3pt
\caption{Predictive performances. Training ($n_{train}=100$) and testing ($n_{test}=50$) datasets are generated under the mixed wedge structure in Ex3. Kendall's tau values representing the association between each $\tildeT_k$ and $D$ for $k=1,\cdots,7$, are specified as  $\tau^{(u)}=0.5$ for the upper wedge and $\tau^{(l)} =0.3,0.5,0.7$ for the lower wedges.}
\label{table:ex3mix_DP}
\begin{center}
\begin{tabular}{llllrllll}
\toprule
\multirow{2}*{}& \multicolumn{3}{c}{In-sample}&&\multicolumn{3}{c}{Out-of-sample}\tabularnewline
\cmidrule(lr){2-4}\cmidrule(lr){6-8}
{Method}&{MSPE}&{QPE}&{IBS}&&{MSPE}&{QPE}&{IBS}\tabularnewline
\hline
\multicolumn{8}{c}{$\tau_{\alpha} =0.2$, $\tau^{(l)}=0.3$}\tabularnewline
DP&0.342 (0.161,1)&0.104 (0.024,1)&0.023 (0.004,1)&&0.357 (0.209,1)&0.114 (0.028,1)&0.024 (0.005,1)\tabularnewline
P0&0.831 (0.269,0.41)&0.266 (0.037,0.39)&0.054 (0.008,0.43)&&0.856 (0.322,0.42)&0.28 (0.04,0.41)&0.056 (0.008,0.43)\tabularnewline
P1m&0.642 (0.23,0.53)&0.198 (0.035,0.53)&0.04 (0.007,0.58)&&0.665 (0.289,0.54)&0.214 (0.035,0.53)&0.043 (0.007,0.56)\tabularnewline
\hline
\multicolumn{8}{c}{$\tau_{\alpha} =0.2$, $\tau^{(l)}=0.5$}\tabularnewline
DP&0.317 (0.147,1)&0.109 (0.024,1)&0.032 (0.006,1)&&0.369 (0.229,1)&0.12 (0.034,1)&0.035 (0.008,1)\tabularnewline
P0&0.825 (0.259,0.38)&0.267 (0.041,0.41)&0.075 (0.011,0.43)&&0.889 (0.322,0.42)&0.283 (0.046,0.42)&0.079 (0.013,0.44)\tabularnewline
P1m&0.626 (0.22,0.51)&0.198 (0.036,0.55)&0.056 (0.009,0.57)&&0.683 (0.29,0.54)&0.213 (0.041,0.56)&0.059 (0.011,0.59)\tabularnewline
\hline
\multicolumn{8}{c}{$\tau_{\alpha} =0.2$,  $\tau^{(l)}=0.7$}\tabularnewline
DP&0.301 (0.165,1)&0.091 (0.021,1)&0.039 (0.006,1)&&0.335 (0.208,1)&0.1 (0.026,1)&0.042 (0.008,1)\tabularnewline
P0&0.885 (0.292,0.34)&0.263 (0.038,0.35)&0.105 (0.014,0.37)&&0.958 (0.353,0.35)&0.275 (0.043,0.36)&0.109 (0.015,0.39)\tabularnewline
P1m&0.669 (0.253,0.45)&0.189 (0.034,0.48)&0.075 (0.012,0.52)&&0.732 (0.315,0.46)&0.2 (0.035,0.5)&0.079 (0.012,0.53)\tabularnewline
\hline
\multicolumn{8}{c}{$\tau_{\alpha} =0.5$, $\tau^{(l)}=0.3$}\tabularnewline
DP&0.427 (0.172,1)&0.122 (0.034,1)&0.037 (0.007,1)&&0.461 (0.231,1)&0.131 (0.037,1)&0.039 (0.008,1)\tabularnewline
P0&0.878 (0.284,0.49)&0.269 (0.047,0.45)&0.069 (0.012,0.54)&&0.921 (0.347,0.5)&0.28 (0.048,0.47)&0.072 (0.013,0.54)\tabularnewline
P1m&0.686 (0.238,0.62)&0.191 (0.041,0.64)&0.049 (0.011,0.76)&&0.728 (0.318,0.63)&0.201 (0.041,0.65)&0.052 (0.011,0.75)\tabularnewline
\hline
\multicolumn{8}{c}{$\tau_{\alpha} =0.5$, $\tau^{(l)}=0.5$}\tabularnewline
DP&0.409 (0.144,1)&0.131 (0.027,1)&0.041 (0.007,1)&&0.469 (0.272,1)&0.148 (0.033,1)&0.044 (0.009,1)\tabularnewline
P0&0.86 (0.244,0.48)&0.265 (0.035,0.49)&0.072 (0.01,0.57)&&0.925 (0.371,0.51)&0.281 (0.041,0.53)&0.076 (0.011,0.58)\tabularnewline
P1m&0.673 (0.215,0.61)&0.191 (0.032,0.69)&0.052 (0.01,0.79)&&0.743 (0.365,0.63)&0.209 (0.04,0.71)&0.057 (0.012,0.77)\tabularnewline
\hline
\multicolumn{8}{c}{$\tau_{\alpha} =0.5$, $\tau^{(l)}=0.7$}\tabularnewline
DP&0.415 (0.173,1)&0.115 (0.029,1)&0.038 (0.006,1)&&0.459 (0.274,1)&0.127 (0.034,1)&0.04 (0.008,1)\tabularnewline
P0&0.942 (0.272,0.44)&0.266 (0.041,0.43)&0.069 (0.011,0.55)&&0.992 (0.369,0.46)&0.28 (0.043,0.45)&0.072 (0.011,0.56)\tabularnewline
P1m&0.753 (0.249,0.55)&0.188 (0.037,0.61)&0.049 (0.01,0.78)&&0.797 (0.366,0.58)&0.202 (0.037,0.63)&0.053 (0.011,0.75)\tabularnewline
\bottomrule
\end{tabular}\end{center}
\end{table}

\begin{figure}[b]
\centering
\begin{subfigure}{0.7\textwidth}
\centering
\includegraphics[width=0.7\textwidth]{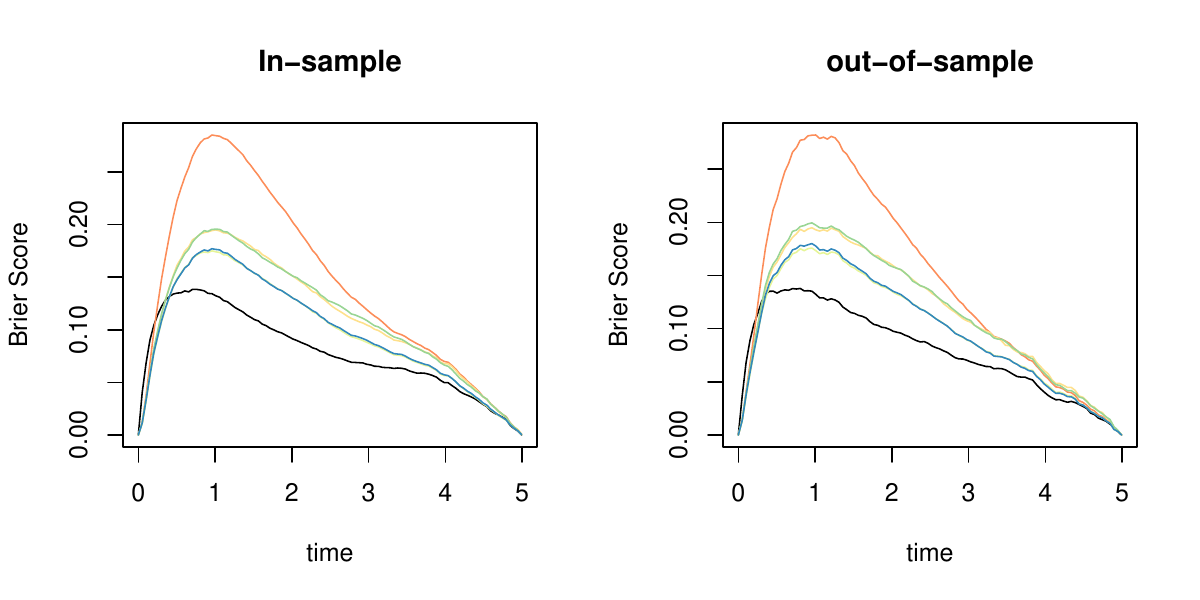}
\caption{$K=3,\tau_{\alpha}=0.2$}
\end{subfigure}
\begin{subfigure}{0.7\textwidth}
\centering
\includegraphics[width=0.7\textwidth]{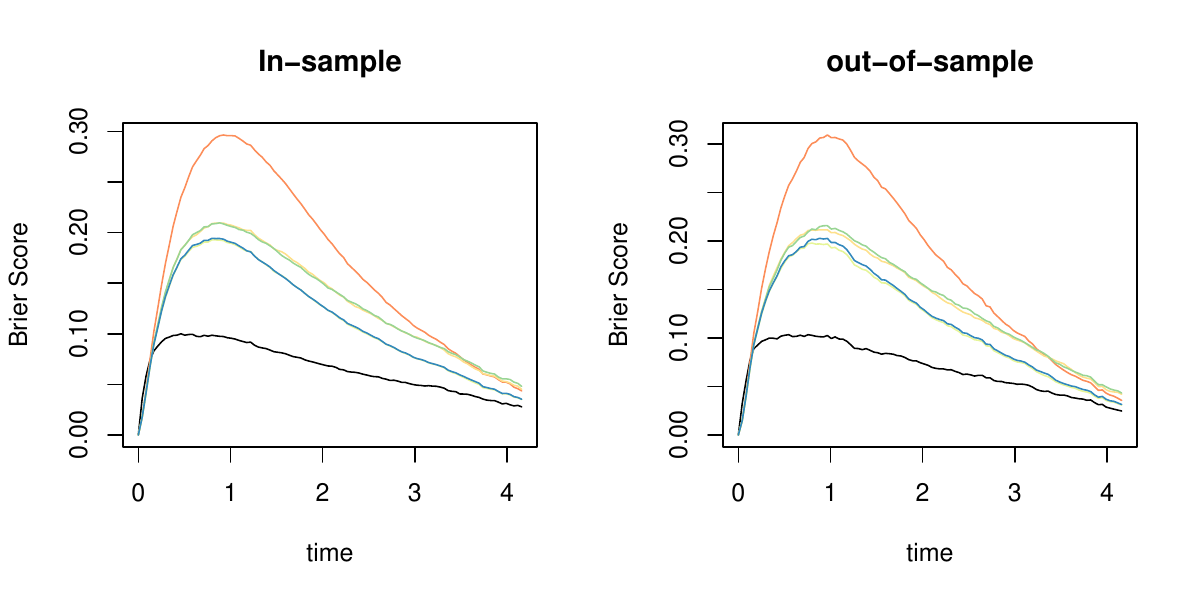}
\caption{$K=7,\tau_{\alpha}=0.2$}
\end{subfigure}

\begin{subfigure}{0.7\textwidth}
\centering
\includegraphics[width=0.7\textwidth]{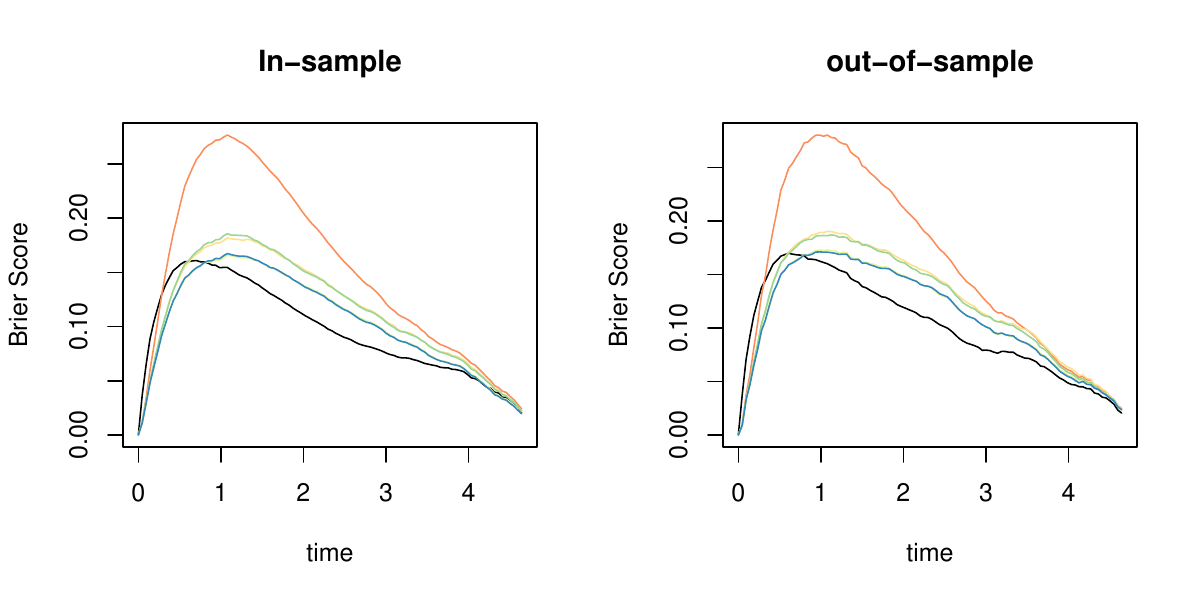}
\caption{$K=3,\tau_{\alpha}=0.5$}
\end{subfigure}

\begin{subfigure}{0.7\textwidth}
\centering
\includegraphics[width=0.7\textwidth]{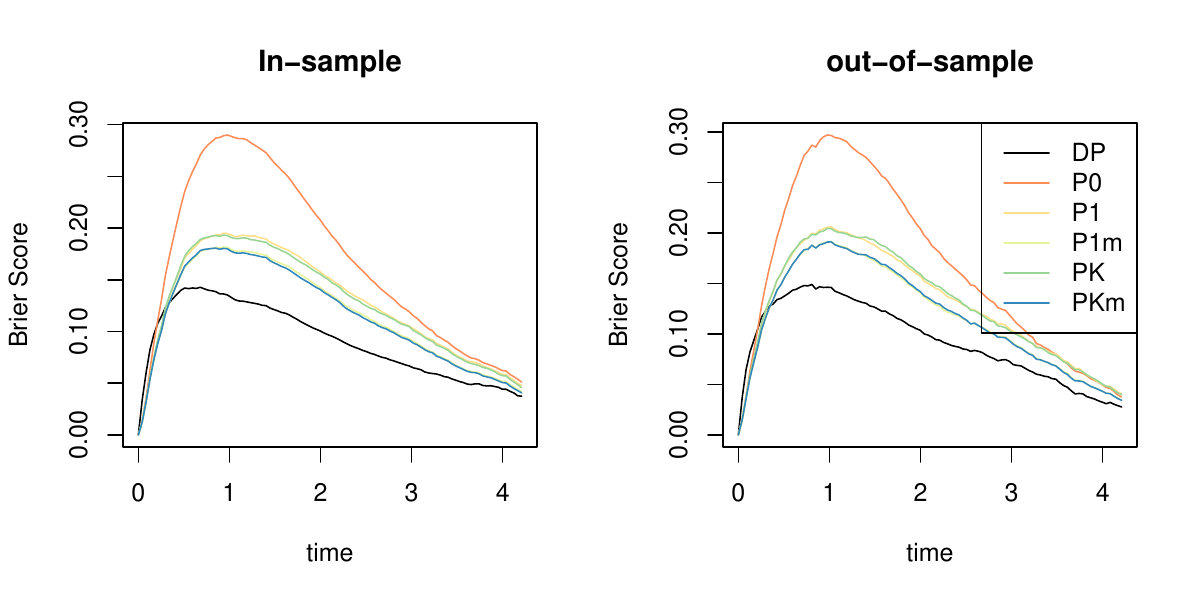}
\caption{$K=7,\tau_{\alpha}=0.5$}
\end{subfigure}
\caption{Averaged in-sample and out-of-sample Brier Score under 200 training and testing datasets of Ex2 ($n_{\operatorname{train}}=100$, $n_{\operatorname{test}}=50$) with $35\%$ censoring for $D$.}
\label{fig:EX2 bs}
\end{figure}

\begin{table}[h]
\scriptsize
\tabcolsep=4pt
\begin{center}
\caption{Average runtime (measured in minutes) per replicate  under EX1.}
\label{tab:runtime}
\begin{tabular}{rrrrrrrr}
\toprule
\multicolumn{4}{c}{$10\%$ censoring for $D$}&\multicolumn{4}{c}{$35\%$ censoring for $D$}\tabularnewline
\cmidrule(lr){1-4}\cmidrule(lr){5-8}
$n_{\operatorname{train}}$&$K$&$\tau_{\alpha}$&Run Time&$n_{\operatorname{train}}$&$K$&$\tau_{\alpha}$&Run Time\tabularnewline
\hline
$100$&$3$&$0.2$&$0.12$&$100$&$3$&$0.2$&$ 4.42$\tabularnewline
    & &$0.5$&$0.10$& &  &$0.5$&$ 4.87$\tabularnewline
    &$7$&$0.2$&$2.29$& &$7$&$0.2$&$ 8.80$\tabularnewline
    & &$0.5$&$2.40$& & &$0.5$&$ 8.35$\tabularnewline
$200$&$3$&$0.2$&$0.38$&$200$&$3$&$0.2$&$10.08$\tabularnewline
    & &$0.5$&$0.41$& & &$0.5$&$ 9.99$\tabularnewline
    &$7$&$0.2$&$8.00$& &$7$&$0.2$&$19.81$\tabularnewline
    & &$0.5$&$6.28$& & &$0.5$&$21.00$\tabularnewline
\bottomrule
\end{tabular}\end{center}
\end{table}

\clearpage
\section{Additional results from the Framingham heart data analysis 
}

An anonymous reviewer pointed out the potential for systematic bias arising from using the first 2,500 individuals as the training set. Accordingly, we randomly selected 2,500 individuals for training and used the remaining individuals for testing. This procedure was repeated 150 times to ensure stability and reduce sensitivity to any single random partition. 
Table \ref{table:realdata alpha split} summarizes estimation results of association parameters $\tau_{\alpha}$ and $\tau_{\theta_k}$ for the Framingham heart data using both random and fixed training–test splits.  The fixed-split results are consistent with those from the random splits, indicating that no evidence of systematic bias induced by the fixed partition used in the earlier version of the paper, and the pattern in the first 2500 patients does not appear to differ substantially from that in the remaining cohort. 

Additionally, to reduce possible variability arising from a single data partition, we considered the general random cross-validation scheme, using random training–test splits with 2,500 individuals randomly assigned to the training set. We have also included a comparison of the proposed dynamic prediction (DP) method and its variations (DP1 and DP2) with competing algorithms (P0,P3,P3m,P6,P6m). DP uses all seven intermediate events; DP1 excludes HYP, and DP2 excludes AP and HYP. P0 is the landmark prediction method using the Kaplan-Meier (KM) estimator of $S_D$; P$k$ and P$k$m are survival prediction algorithms incorporating the $k$-th intermediate outcome only. 
Since MSPE and QPE metrics are based on true values $D_i$ which are known in simulation studies but unknown in practice. These two metrics are adjusted with inverse probability of censoring weights (IPCW) to evaluate agreement between predicted and observed survival times. 
We evaluated the performance of different methods using metrics including MSPE and QPE adjusted with IPCW, IBS and time-dependent AUC (at 15 years). 

Figure \ref{fig:framingham split} displays boxplots of these performance metrics for the testing datasets across different methods considered. 
As shown in the figure, the proposed DP/DP1/DP2 methods achieve lower mean squared and quantile prediction errors than the competing methods 
and also exhibit clearly higher discrimination ability, as indicated by greater AUC values.  All methods considered yield low IBS values, while the proposed DP/DP1/DP2 methods demonstrate slightly higher IBS, 
likely due to insufficient follow-up, as 75\% of patients were lost to follow-up and fixed censored at 8766\emph{th} day (24.02 years) in the study.

To further evaluate agreement between predicted and observed survival probabilities, we conducted a calibration analysis. Calibration slopes,  constructed using 100 bootstrap replicates of the training samples under the Frank copula dependence structure, are $0.939$ (95$\%$ CI $[0.913, 0.964])$ for the training data and $0.944$ (95$\%$ CI $[0.876, 1.012])$ for the test data.
Both slopes are close to the ideal value of 1.0 and consistent between the training and test datasets, demonstrating that our model is well-calibrated and yields reliable predicted survival probabilities.

\begin{table}[htbp]
\tiny
\tabcolsep=4pt
\begin{center}
\caption{Estimation results of association parameters $\tau_{\alpha}$ and $\tau_{\theta_k}$ in Frank copula model for the Framingham heart data, comparing a random training–test split (2,500 individuals for training) with the split based on the first 2,500 individuals.  Values in each bracket are 95$\%$ confidence intervals based on 200 bootstrapping replicates.}
\label{table:realdata alpha split}
\begin{tabular}{lccccccccccccccc}
\toprule
\multicolumn{1}{l}{}&\multicolumn{1}{l}{}&\multicolumn{1}{c}{$\tau_{\alpha}$}& \multicolumn{7}{c}{$\tau_{\theta_k}$}\tabularnewline
\cmidrule(lr){4-10}
&&&\multicolumn{1}{c}{AP}&\multicolumn{1}{c}{CHD}&\multicolumn{1}{c}{MIFC}&\multicolumn{1}{c}{CVD}&\multicolumn{1}{c}{STRK}&\multicolumn{1}{c}{HYP}&\multicolumn{1}{c}{MI}\tabularnewline
\hline
Fixed split&Estimate  &0.371  &0.300& 0.482& 0.694& 0.608& 0.680& 0.113& 0.584\tabularnewline
&confidence intervals& [0.314,0.426]  & [0.238,0.36] & [0.434,0.522] & [0.641,0.746] & [0.568,0.652] & [0.615,0.75] & [0.07,0.15] & [0.514,0.651]\tabularnewline
\hline
Random split
&1st Qu.&$0.357$&$0.307$&$0.480$&$0.686$&$0.595$&$0.660$&$0.122$&$0.565$\tabularnewline
&Median&$0.371$&$0.315$&$0.485$&$0.692$&$0.600$&$0.668$&$0.127$&$0.572$\tabularnewline
&Mean&$0.373$&$0.314$&$0.485$&$0.692$&$0.600$&$0.668$&$0.126$&$0.572$\tabularnewline
&3rd Qu.&$0.384$&$0.322$&$0.489$&$0.697$&$0.604$&$0.678$&$0.130$&$0.581$\tabularnewline
\bottomrule
\end{tabular}\end{center}
\end{table}

\begin{figure}[t]
\centering
\centering
\includegraphics[width=.9\textwidth]{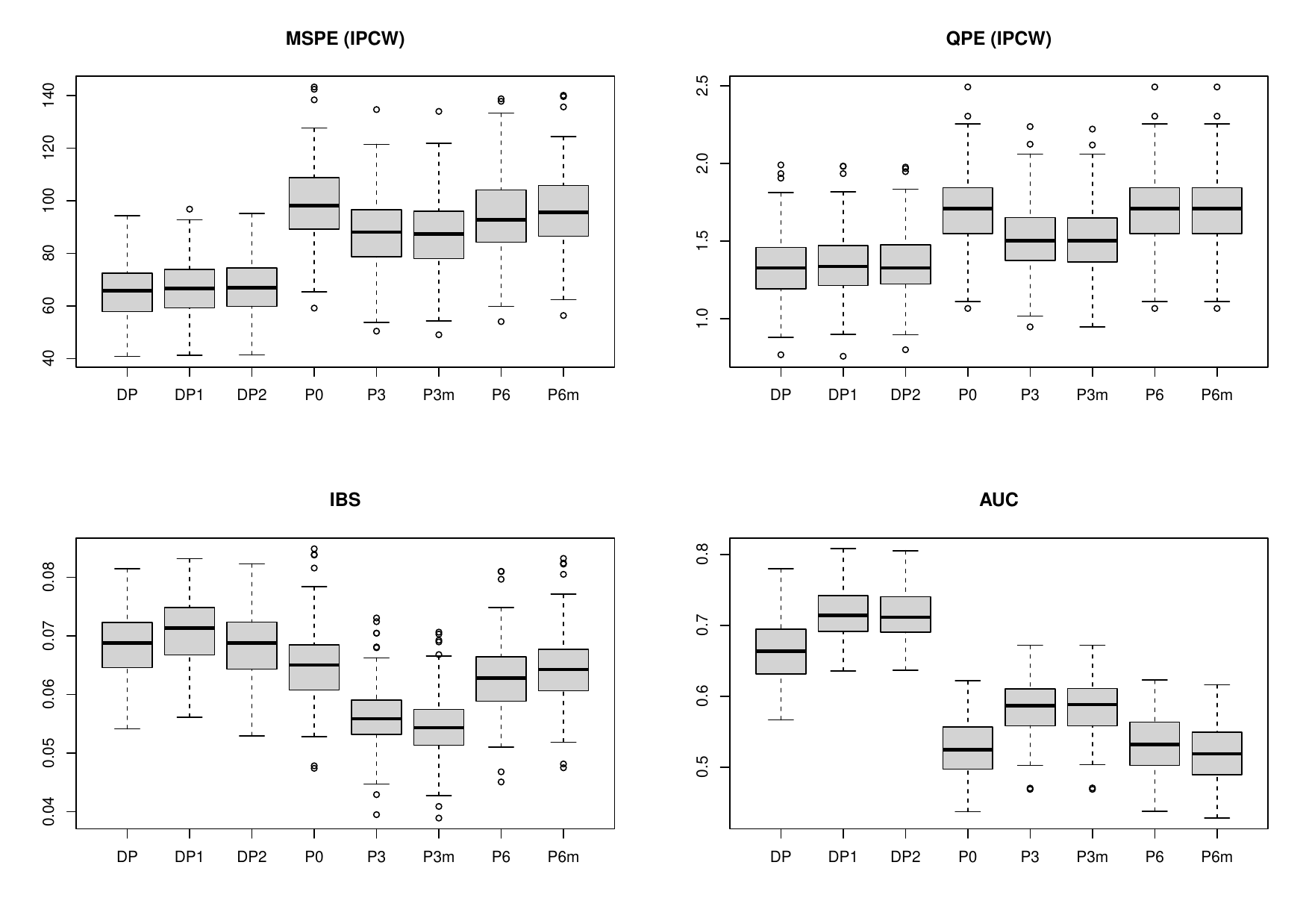}
\caption{Predictive performance of different methods (the proposed DP/DP1/DP2 and competing algorithms P0/P3/P3m/P6/P6m) evaluated on the testing datasets across repeated random splits for the Framingham Heart Study. Top panels: mean squared prediction error (MSPE, left) and quantile prediction error (QPE, right) adjusted by IPCW. Bottom panels: integrated Brier score (IBS, left) and time-dependent AUC at 15 years (right).}
\label{fig:framingham split}
\end{figure}

\begin{figure}[htbp]
\centering
\begin{subfigure}{0.32\textwidth}
\centering
\includegraphics[width=.9\textwidth]{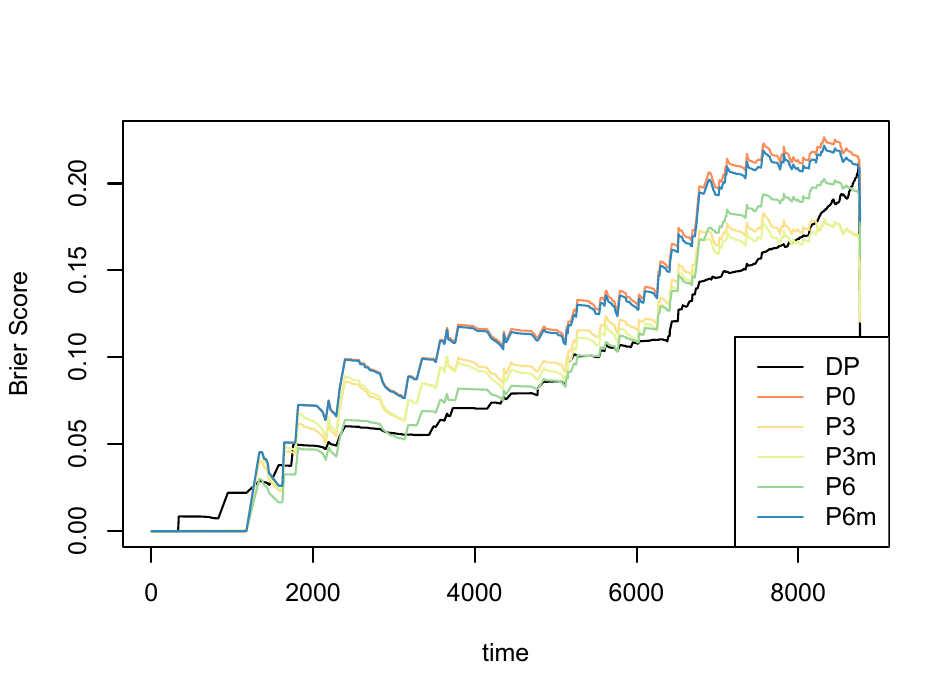}
\caption{Frank copula}
\end{subfigure}
\begin{subfigure}{0.32\textwidth}
\centering
\includegraphics[width=.9\textwidth]{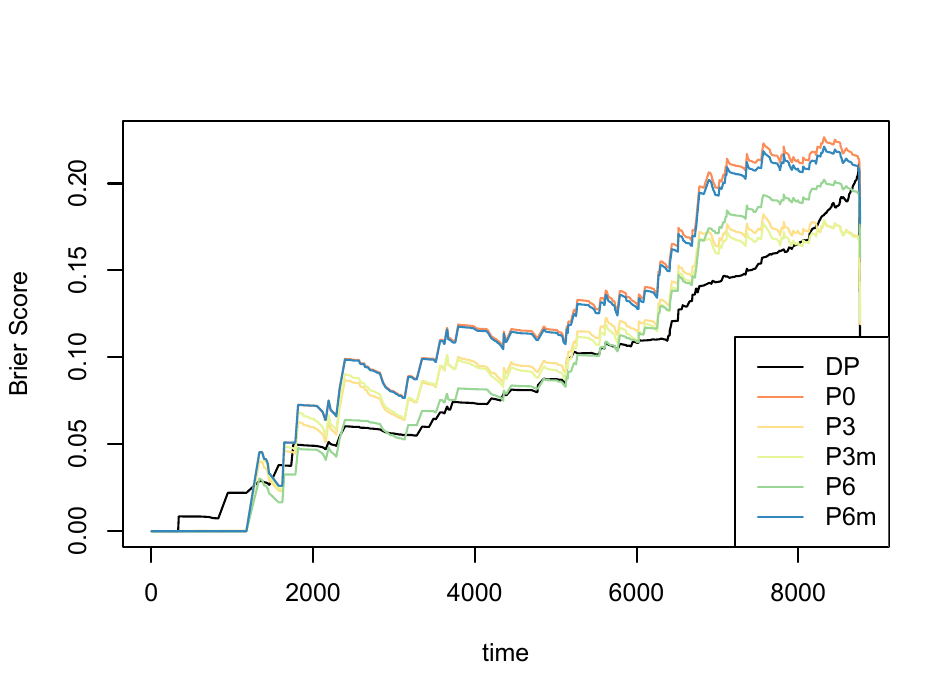}
\caption{Clayton copula}
\end{subfigure}
\begin{subfigure}{0.32\textwidth}
\centering
\includegraphics[width=.9\textwidth]{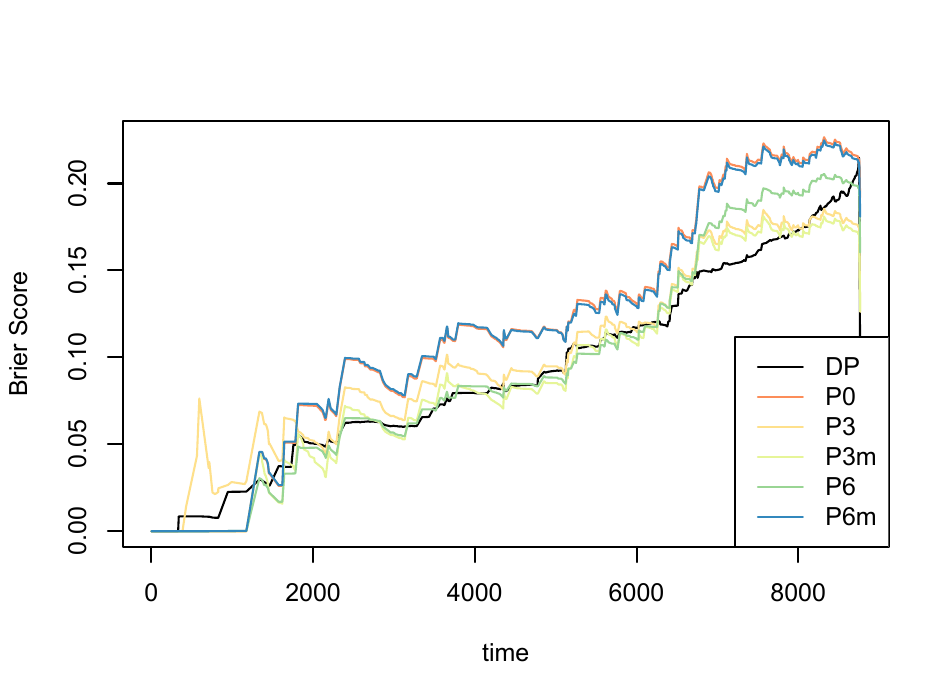}
\caption{Gumbel copula}
\end{subfigure}
\caption{Out-of-sample Brier Score for the testing data (the last 333 individuals) of the  Framingham heart data and under different copula structures.}
\label{fig:framingham bs}
\end{figure}


\end{document}